\documentclass[11pt,a4paper]{article}

\usepackage[T1]{fontenc}
\usepackage[utf8]{inputenc}
\usepackage[USenglish]{babel}
\usepackage{lmodern}
\usepackage[margin=1in]{geometry}
\usepackage{amsmath,amssymb,amsthm,mathtools}
\usepackage{aliascnt}
\usepackage{booktabs}
\usepackage{float}
\usepackage{microtype}
\usepackage{xcolor}
\usepackage{tikz}
\usetikzlibrary{arrows.meta,calc,decorations.pathreplacing,fit,positioning}
\usepackage[hidelinks]{hyperref}
\hypersetup{
  pdftitle={Tight UGC Thresholds for Geometric Stabbing Problems},
  pdfauthor={Khaled Elbassioni, Rishikesh Gajjala, Saurabh Ray},
  pdfkeywords={square stabbing, cube stabbing, interval stabbing,
    separated d-intervals, ordered tracks, approximation hardness,
    Unique Games Conjecture, integrality gap}
}

\definecolor{vblue}{RGB}{31,119,180}
\definecolor{horange}{RGB}{213,94,0}
\definecolor{bridgepurple}{RGB}{117,79,160}
\definecolor{softblue}{RGB}{220,235,247}
\definecolor{softorange}{RGB}{250,226,205}
\definecolor{softpurple}{RGB}{231,220,239}

\newcommand{\Z}{\mathbb Z}
\newcommand{\E}{\mathbb E}
\newcommand{\Prb}{\mathbb P}

\newcommand{\OPT}{\operatorname{OPT}}

\newcommand{\conv}{\operatorname{conv}}
\newcommand{\toplabel}{\mathord{\top}}

\def\RR{\mathbb R}

\def\cA{\mathcal A}

\def\cE{\mathcal E}

\def\cI{\mathcal I}

\def\cJ{\mathcal J}

\def\cS{\mathcal S}

\def\cV{\mathcal V}
\def\cP{\mathcal P}

\def\bT{\mathbf T}

\newcommand{\cost}{\operatorname{cost}}
\newcommand{\lp}{\operatorname{lp}}
\newcommand{\val}{\operatorname{val}}
\newcommand{\opt}{\operatorname{opt}}
\newcommand{\raf}[1]{(\ref{#1})}
\newcounter{lpdisplay}
\renewcommand{\thelpdisplay}{LP\arabic{lpdisplay}}

\theoremstyle{plain}
\newtheorem{theorem}{Theorem}[section]
\newaliascnt{lemma}{theorem}
\newtheorem{lemma}[lemma]{Lemma}
\aliascntresetthe{lemma}
\newaliascnt{proposition}{theorem}
\newtheorem{proposition}[proposition]{Proposition}
\aliascntresetthe{proposition}
\newaliascnt{corollary}{theorem}

\aliascntresetthe{corollary}

\newaliascnt{claim}{theorem}
\newtheorem{claim}[claim]{Claim}
\aliascntresetthe{claim}

\theoremstyle{definition}
\newaliascnt{definition}{theorem}
\newtheorem{definition}[definition]{Definition}
\aliascntresetthe{definition}

\theoremstyle{remark}
\newaliascnt{remark}{theorem}
\newtheorem{remark}[remark]{Remark}
\aliascntresetthe{remark}

\title{Tight UGC Thresholds for Geometric Stabbing Problems}
\author{%
  Khaled Elbassioni\textsuperscript{1}\quad
  Rishikesh Gajjala\textsuperscript{2}\quad
  Saurabh Ray\textsuperscript{2}\\[0.5em]
  \small\textsuperscript{1} Khalifa University,
  Abu Dhabi, United Arab Emirates\\
  \small\textsuperscript{2}New York University Abu Dhabi,  United Arab Emirates
}
\date{}

\begin{document}
\maketitle

\begin{abstract}
Many geometric stabbing problems admit natural covering LPs in which each
constraint is a union of consecutive traces on ordered candidate sets.  We
prove a transfer theorem showing that every fixed finite, bounded-arity
integrality-gap instance of this form yields a matching hardness ratio under
the Unique Games Conjecture (UGC).  The proof uses the strict-CSP framework
of Kumar, Manokaran, Tulsiani, and Vishnoi [SODA 2011].  Its main
ingredient is a simple randomized rounding scheme: given a fractional
vector \(x\) on a block, the scheme selects candidate \(i\) with marginal
probability \(x_i\) and hits every consecutive trace \(T\) with probability
\(\min\{1,x(T)\}\).  A full-support perturbation, followed by a coupling over
all nonzero hit patterns, produces the connected local distributions required by the strict-CSP result of KMTV.

We obtain the following three tight UGC thresholds.
\begin{enumerate}
\setlength{\itemsep}{0.25em}
\item \emph{Cube stabbing.}  For every fixed \(d\ge2\), stabbing
arbitrary-size axis-parallel \(d\)-cubes with coordinate hyperplanes has
threshold \(d\).  For \(d=2\), the hardness already holds for arbitrary-size
squares.  This proves that the UGC approximation threshold $2$ for rectangle and square stabbing, matching the $2$-approximation result by Gaur, Ibaraki, and Krishnamurti [ESA 2000]. 
\item \emph{Interval stabbing.}  Stabbing horizontal segments with horizontal
and vertical lines has a UGC approximation threshold \(e/(e-1)\), matching the
\(e/(e-1)\)-approximation of Kovaleva and
Spieksma [ESA 2004].
\item \emph{Separated \(d\)-interval transversal.}  For every fixed
\(d\ge2\), this problem has threshold \(d\), closing under UGC the
approximability gap left by the \(d\)-approximation and tight LP gap of
Ben-David, Grant, Ma, and Sharpe [CCCG 2012].
\end{enumerate}
Each hardness result uses unit costs and holds in both finite-candidate and
unrestricted models.  The latter two results lift known LP-gap families.
For cubes, the equal-projection-length requirement calls for a new
multiplicative-scale construction whose integrality gap tends to \(d\).
\end{abstract}

\medskip
\noindent\textbf{Keywords.}
Square stabbing, cube stabbing, interval stabbing, separated
\(d\)-intervals, ordered tracks, approximation hardness, Unique Games
Conjecture, integrality gap.

\section{Introduction}

A horizontal line stabs a rectangle if its \(y\)-coordinate lies in the
rectangle's vertical projection; analogously, a vertical line stabs it if
its \(x\)-coordinate lies in the horizontal projection.  The
rectangle-stabbing problem asks for a minimum-cardinality set of horizontal
and vertical lines that stabs every input rectangle.  It is therefore a
covering problem on two ordered tracks.  For a square, the two projection
intervals must have equal length.  This seemingly mild balance condition is
the main obstacle separating square stabbing from general rectangle
stabbing.

We focus on unit costs and consider two models.  In the
\emph{finite-candidate model}, a finite set of admissible axis-parallel
lines is given; in the \emph{unrestricted model}, any axis-parallel line may
be chosen.

The classical orientation-rounding algorithm gives a factor-\(2\)
approximation for rectangles~\cite{GIK2002}, whereas Elbassioni and Ray give
a \(29/15\)-approximation for unweighted stabbing of congruent squares in the
unrestricted model~\cite{ElbassioniRay2024}.  We show that allowing arbitrary
side lengths restores the factor-\(2\) threshold.  More generally, for every
fixed dimension \(d\), arbitrary-size cubes attain the factor-\(d\) threshold
of orientation rounding.

\begin{theorem}[Main results]\label{thm:main}
Assuming the Unique Games Conjecture~\cite{Khot2002},  for every
$\varepsilon>0$ the following statements hold.
\begin{enumerate}
  \item For every fixed $d\ge2$, unit-cost stabbing of arbitrary-size
  axis-parallel $d$-cubes by coordinate hyperplanes is hard to approximate
  within $d-\varepsilon$.
  \item Unit-cost interval stabbing is hard to approximate within
  $e/(e-1)-\varepsilon$.
  \item For every fixed $d\ge2$, unit-cost transversal of separated
  $d$-intervals is hard to approximate within $d-\varepsilon$.
\end{enumerate}
All three statements hold in both the finite-candidate and unrestricted
models, and their displayed factors match the corresponding polynomial-time
upper bounds.  The same three numbers are the exact suprema of the
integrality gaps of the respective natural covering LPs.
\end{theorem}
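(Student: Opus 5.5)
Theorem~\ref{thm:main} will follow from a single transfer theorem combined with three integrality-gap constructions. I first set up an abstract \emph{ordered-track covering} problem. There are finitely many tracks, and each track carries a linearly ordered set of candidates. Each constraint is a union, over at most \(k\) tracks, of \emph{consecutive traces}; a consecutive trace is an interval of candidates on one track. The natural LP is \(\min \sum_i x_i\) subject to \(x(T_1)+\dots+x(T_k)\ge 1\) for every constraint and \(x\ge0\). All three problems fit this template. For cubes, the tracks are the \(d\) coordinate axes and the traces are the projections of the cube. For interval stabbing, one horizontal trace is a single point (the segment's line) and the vertical trace is its \(x\)-projection. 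For separated \(d\)-intervals, the \(d\) components lie on \(d\) separated ranges, which we treat as tracks.

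The transfer theorem states: if some fixed finite instance of bounded arity has integrality gap \(\gamma\), then the problem is UGC-hard to approximate within \(\gamma-\varepsilon\). I prove it through the strict-CSP framework of KMTV, which turns an LP gap into UG-hardness provided there are local distributions consistent with the LP solution. Concretely, for each constraint block one needs a distribution over integral assignments whose marginals equal \(x\), which always satisfies the constraint, and which is ``connected'' (full support on the relevant patterns).

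The key ingredient is the rounding on a single track with ordered candidates. Pick \(\theta\in[0,1)\) uniformly and lay \(x\) out cumulatively along the order. Select every candidate whose cumulative segment contains a point of \(\theta+\mathbb Z\). This is systematic sampling. Candidate \(i\) is then selected with probability exactly \(x_i\) (after splitting candidates with \(x_i>1\)). A consecutive trace \(T\) corresponds to a contiguous stretch of length \(x(T)\), so it is hit with probability \(\min\{1,x(T)\}\).

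Now suppose a constraint \(T_1\cup\dots\cup T_k\) has \(\sum x(T_j)\ge1\). I couple the tracks so that their hit events cover the whole probability space. Because each track's hit probability is \(\min\{1,x(T_j)\}\) and these sum to at least \(1\), I can arrange the hit events to be disjoint or covering. One way is to drive the tracks from a single shared \(\theta\) with appropriate offsets. The resulting distribution always satisfies the constraint.

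Next comes connectivity. I mix in a small full-support perturbation: scale \(x\) slightly up and add \(\delta\) everywhere, losing a factor \(1+O(\delta)\). Then I couple over all nonzero hit patterns so that every satisfying assignment has positive probability. This is the step I expect to be the main obstacle: the coupling must keep the marginals exact and still meet the strict-CSP connectivity hypothesis. To handle it I would take a convex combination of the shared-\(\theta\) coupling with small weights on each nonzero hit pattern, correcting the marginals within the slack created by the perturbation.

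Next come the gap instances. For separated \(d\)-intervals and for interval stabbing I cite the known LP-gap families, which give \(d\) and \(e/(e-1)\) respectively, and apply the transfer theorem to each fixed member, letting \(\varepsilon\to0\).

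For cubes the equal-side-length requirement needs a new multiplicative-scale family. Take points at geometric scales \(N^j\) on each axis, and take cubes whose side is fixed by their scale level, arranged so that each cube's \(d\) projections sweep consecutive candidate blocks of equal length. The LP solution puts weight \(1/d\)-ish mass spread over \(m\) candidates per axis, giving LP value about \(m\). The integral cover, by contrast, must contain nearly \(dm\) lines: a Ramsey/counting argument shows that any set of fewer lines leaves some cube unstabbed in every coordinate. So the gap tends to \(d\).

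For \(d=2\) the same construction uses squares. The finite-candidate and unrestricted models coincide after snapping each chosen hyperplane to a candidate between consecutive endpoints.

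Finally, the matching upper bounds are orientation rounding (threshold \(d\)), the Kovaleva--Spieksma algorithm, and the Ben-David et al.\ algorithm. Each of these also bounds the LP gap, so the three numbers are the exact gap suprema.
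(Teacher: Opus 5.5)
Your architecture is the paper's: ordered systematic sampling per block, a perturbation, a coupling of hit bits, and KMTV; the known gap families for parts 2 and 3; and orientation rounding plus the known algorithms for the upper bounds. But two load-bearing pieces are missing.

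\textbf{The cube gap family.} This is the new input for part 1. Your sketch specifies no family, and ``a Ramsey/counting argument'' is not an argument. What is needed is a family in which one cube has traces of \emph{different} sizes on different axes. Indeed, every vector $(r_1,\dots,r_d)$ with $r_i\le t-1$ and $\sum_i r_i\ge t$ must occur, with zero components realized by candidate-free private intervals. Only then does the empty-run obstruction force $\sum_i c_i(f)\le t-1$ for the longest simultaneously available empty traces $c_i(f)$.

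The paper reconciles unequal sizes with equal side lengths as follows. An $r$-trace is placed at spacing $\Lambda_f/(2r-1)$ with quarter-spacing padding, so its length is $\Lambda_f/2$ for every $r$. Blocks are indexed by a box in the prime-exponent lattice generated by the $q_r=2r-1$, so one block $B^i_s$ serves as the $r$-block of every scale $f=s+u_r$ simultaneously. This reuse is essential. With separate $r$-blocks for each scale, selecting every $r$th candidate in the $r$-blocks with $r>R/d$ already covers every cube at about $\ln d$ times the cost of the uniform $1/t$ solution.

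The integral bound is then a tail count over the multiplicative shifts, $\sum_{i,s}C^i_s\le(R+\eta_M)|\mathcal S|$ with $\eta_M=O(1/M)$. Jensen's inequality for the convex envelope of the run price $1/(x+1)$ follows, and the ratio tends to $d$ as $M,n\to\infty$ and then $t\to\infty$. Your accounting is also off. In this regime the LP mass is $1/t$ with $t\to\infty$, and by orientation rounding an optimal cover uses at most a $d/t$ fraction of the candidates, not nearly all of them.

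\textbf{The CSP encoding and realization.} Your transfer theorem never fixes the CSP variables and labels, and neither natural reading closes the argument. Your connectivity step reads most naturally as the Boolean encoding, with one variable per candidate: marginals equal to $x$, and after adding $\delta$, positive mass on every nonzero hit pattern. In that encoding the KMTV output variables need not stay grouped into ordered blocks, so output constraints need not be realizable as unions of consecutive traces.

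The encoding that survives uses one variable per whole block, alphabet $2^{[N]}$, relation $\bigvee_i(A_i\cap T_{e,i}\ne\varnothing)$, and cost $|A|$. You must then do the following, none of which appears in your proposal:
\begin{itemize}
\item carry block types and ordered trace data through KMTV;
\item remove the rational output weights (the paper uses full Cartesian cloning) to get unit costs;
\item realize each output variable as a private zone of candidates;
\item for cubes, rebuild the axes outside $P(e)$ as candidate-free intervals of the common length $\Lambda_{f(e)}/2$.
\end{itemize}

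In this block encoding your perturbation fails. Adding $\delta$ to $x$ still leaves the sampler's law on at most $N+1$ of the $2^N$ labels, so ``every satisfying assignment has positive probability'' is false. ``Correcting the marginals within the slack'' is also undefined, since each $\mu_v$ is a full distribution on $2^{[N]}$ that must be identical in every constraint containing $v$.

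The paper instead perturbs at the label level, $\mu_v^\rho=(1-2\rho)\mu_v^0+\rho\delta_{\toplabel}+\rho U$. It then picks a full-support law on nonzero hit patterns with marginals $p_i$, which exists because $p$ is interior to $\conv(\{0,1\}^r\setminus\{0\})$. Finally it draws labels conditionally independently given the hit bits, so $\operatorname{supp}\lambda_e=\mathcal A_e$, which is connected through $\toplabel$.

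Alternatively, your own shared-$\theta$ coupling can be made connected without perturbing labels. Scale $x$ by $1+\delta$, capping at one, to get strict slack. Then choose the offsets so the hit arcs overlap and no two blocks share a breakpoint. As $\theta$ moves, only one block label changes at a time, so the support is a closed Hamming walk.
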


\paragraph{Unrestricted-model LP gaps.}
For the latter two LP-gap assertions, no new unrestricted-model
construction is needed.  A finite object family induces only finitely many
incidence classes of unrestricted stabbing points or lines, so any
fractional solution can first be aggregated within each class.  In the
separated-\(d\)-interval gap family, every useful class on a track has a
half-integral candidate representative.  In the ISP gap family, a useful
horizontal class is represented by its prescribed row, while a vertical
class is represented by the largest left endpoint of the segments it
meets.  Moving each class's mass to that representative preserves every
incidence and the total mass; the same snapping applies to integral
solutions.  Thus the finite-candidate and unrestricted integral and
fractional optima coincide on both gap families.  Together with the
corresponding LP-relative upper bounds, this proves the asserted
unrestricted-model LP-gap suprema.

Here \emph{interval stabbing} (ISP) is the problem of stabbing horizontal
segments by horizontal and vertical lines.  A \emph{separated \(d\)-interval}
is the union of at most one interval on each of \(d\) disjoint ordered
tracks, and its transversal problem asks for a minimum set of track points
meeting every such union. 


\subsection*{Related work}

\paragraph{Full-line stabbing.}
Hassin and Megiddo initiated the geometric study of hitting objects by
lines, proving NP-hardness already for horizontal unit
segments~\cite{HassinMegiddo1991}.  Gaur, Ibaraki, and Krishnamurti gave the
classical LP-relative factor-\(2\) approximation for rectangles and its
factor-\(d\) extension to \(d\)-boxes~\cite{GIK2000,GIK2002}.  Kovaleva and
Spieksma obtained the tight factor \(e/(e-1)\) for interval stabbing and
treated weighted demands~\cite{KovalevaSpieksma2002,KovalevaSpieksma2004,
KovalevaSpieksma2006}.  Point separation, priority and rejection costs, and
capacities have also been studied~\cite{CalinescuEtAl2005,XuXu2007,
EvenEtAl2008}.  For structured unweighted instances, Elbassioni and Ray
obtained factors \(1.935\) for horizontal-and-vertical segment stabbing and
\(29/15\) for congruent-square stabbing~\cite{ElbassioniRay2024}.  In
contrast, our arbitrary-size squares have UGC threshold \(2\).  Since cubes
are boxes, our lower bound also makes the factor \(d\) tight under UGC for
the full \(d\)-box problem.

\paragraph{Parameterized complexity.}
Rectangle stabbing is W[1]-complete in the solution size \(k\) and remains
W[1]-hard for congruent squares~\cite{DomEtAl2012,
GiannopoulosEtAl2013}; disjoint rectangles and Optimal Discretization admit
FPT algorithms~\cite{HeggernesEtAl2013,KratschEtAl2021}.  Chu et al.\ recently gave a \(7/4\)-approximation in \(k^{O(k)}n^{O(1)}\) time and ruled
out a \((5/4-\varepsilon)\)-approximation in \(f(k)n^{O(1)}\) time unless
\(\mathrm{FPT}=\mathrm{W[1]}\), with the lower bound also holding in the
unrestricted-line model~\cite{ChuEtAl2026}.  These are parameterized results
for general rectangles; our result instead settles under UGC the
polynomial-time factor \(2\), already for arbitrary-size squares.

\paragraph{Transversals of \(d\)-intervals.}
Tardos proved the sharp bound \(\tau\le2\nu\) for separated
\(2\)-intervals, and Kaiser obtained quadratic packing--covering bounds for
general \(d\)~\cite{Tardos1995,Kaiser1997}.  Subsequent work supplied an
elementary proof, nearly quadratic lower bounds, and weighted and fractional
extensions~\cite{Alon1998,Matousek2001,AharoniEtAl2017}.  The direct
algorithmic predecessor is Ben-David et al.: their LP-relative
\(d\)-approximation and separated instances with gap \(d-o(1)\) establish
the exact supremal natural-LP gap \(d\)~\cite{BenDavidEtAl2012}.  We lift
this gap to matching UGC hardness.

\paragraph{Finite-segment stabbing.}
Another line of work studies stabbing by finite segments, with the objective
of minimizing their total length.  Approximation algorithms and schemes are
known for horizontal stabbing segments, variants allowing both orientations,
and several special cases~\cite{ChanEtAl2018,EisenbrandEtAl2021,
KhanSubramanianWiese2024}; extensions to rectilinear polygons have also been
considered~\cite{KhanEtAl2024Polygons}.  This model is distinct from the
minimum-cardinality full-line setting studied here.

The known upper bound already explains the number $d$ in Theorem~\ref{thm:main}, part 1 (and 3). For simplicity, let us consider the case when $d=2$. The natural LP puts total mass at least one on the two traces of each square, so one orientation
carries at least half.  Commit the square to that orientation, double the
mass, and solve the resulting one-dimensional interval problem without an
integrality loss; see Figure~\ref{fig:rounding} and
Appendix~\ref{sec:upper}.

\begin{figure}[ht]
\centering
\begin{tikzpicture}[x=0.62cm,y=0.62cm,>=Latex,font=\small]
  \node[anchor=west,font=\bfseries] at (0,5.15) {(a) Split the LP constraint};
  \fill[black!4] (1.0,1.0) rectangle (4.25,4.25);
  \draw[very thick] (1.0,1.0) rectangle (4.25,4.25);
  \foreach \y/\m in {1.45/.18,2.25/.23,3.15/.12,3.85/.09}{
    \draw[horange,densely dashed,very thick] (0.35,\y)--(4.9,\y);
    \node[horange,anchor=west,font=\scriptsize] at (5.0,\y) {$\m$};
  }
  \foreach \x/\m in {1.35/.10,2.15/.12,3.15/.09,3.85/.07}{
    \draw[vblue,very thick] (\x,0.35)--(\x,4.38);
    \node[vblue,anchor=north,font=\scriptsize] at (\x,0.28) {$\m$};
  }
  \node[horange,anchor=east] at (2.85,4.62) {$h_R=.62$};
  \node[vblue,anchor=west] at (3.25,4.62) {$v_R=.38$};
  \node at (2.625,2.62) {$R$};
  \draw[->,thick] (6.7,2.65)--(9.7,2.65)
    node[midway,above,align=center,font=\scriptsize]{choose a side\\with mass $\ge1/2$};
  \node[anchor=west,font=\bfseries] at (10.3,5.15) {(b) Round intervals};
  \draw[horange,densely dashed,very thick] (10.5,1.4)--(16.6,1.4);
  \foreach \a/\b/\y in {10.7/13.1/2.0,12.2/15.0/2.55,14.1/16.4/3.1,11.1/15.8/3.65}{
    \draw[very thick] (\a,\y)--(\b,\y);
    \draw (\a,\y-.12)--(\a,\y+.12) (\b,\y-.12)--(\b,\y+.12);
  }
  \foreach \x in {12.55,14.65}{\fill[horange] (\x,1.4) circle (3.2pt);}
  \node[align=center,font=\scriptsize] at (13.55,4.35)
    {double the assigned mass;\\interval TU gives integral points};
\end{tikzpicture}
\caption{Why factor two is the natural scale.  For every square $R$,
the horizontal and vertical covering masses satisfy $h_R+v_R\ge1$.  Assign
$R$ to a side carrying at least $1/2$, double that side, and round the induced
interval family globally.  The displayed points are schematic; rounding is
not performed square by square.}
\label{fig:rounding}
\end{figure}

For hardness, write \(q_r=2r-1\).  An \(r\)-trace placed at spacing
\(\Lambda_f/q_r\), with a quarter-cell margin at each end, has padded length
$\Lambda_f/2$, independently of $r$.  We include every trace-size vector
$(r_1,\ldots,r_d)$ with $0\le r_i\le t-1$ and
$\sum_i r_i\ge t$; a zero component is represented by a candidate-free
private projection.  The same construction includes squares simply by
setting \(d=2\).

To reuse scale blocks without duplicating their cost, we index them by a
large box in the prime-exponent lattice generated by the \(q_r\).  Division
by any \(q_r\) loses only a boundary layer.  An integral cover must keep the
sum of the largest simultaneously available empty traces below \(t\).
A tail count over the multiplicative shifts, followed by one convex
empty-run estimate, gives \(2-2/t-o(1)\) in the linear branch when \(d=2\)
and \(dt/(t+d-1)-o(1)\) in the hyperbolic branch when \(d\ge3\).  Both tend
to the dimension $d$.

Our hardness results use one common transfer theorem.  The useful encoding is
at the level of whole ordered blocks, not individual candidates.  One strict-CSP
label is a subset $A\subseteq[n]$, meaning ``select exactly these local
candidates.'' In two dimensions, a single constraint with traces $I$ and $J$ gives rise to a $2$-strict-CSP constraint encoding the relation
\[
     A\cap I\ne\varnothing\quad\text{or}\quad B\cap J\ne\varnothing.
\]
Integral CSP assignments and stabbing sets are then the same objects.

The nontrivial issue is to turn an arbitrary covering-LP vector into the
connected local distributions needed by the theorem of Kumar, Manokaran,
Tulsiani, and Vishnoi (KMTV)~\cite{KMTV2011,Manokaran2012}.  On each ordered block we use a simple randomized rounding procedure.  It preserves every one-point marginal and,
simultaneously for every consecutive trace $T$, makes the probability of a
hit exactly $\min\{1,x(T)\}$.  A vanishing full-support perturbation moves the
vector of trace-hit probabilities into the interior of the convex hull of the
nonzero Boolean cube.  We may therefore couple the hit bits with positive mass
on every satisfying pattern.  Conditioning the block labels on those bits
gives a full-support, hence connected, local distribution.

We also verify two structure-preserving interfaces.  The bounded unary cost
$c(A)=|A|$ is covered by the general strict-CSP formulation~\cite{Manokaran2012}; rational output weights are removed by
full Cartesian cloning.  KMTV vertices retain their seed-block type and every
constraint retains the name of its seed occurrence.  Consecutive traces can
therefore be rebuilt in private coordinate zones, including exact snapping
maps for the unrestricted models.

The finite ISP gap used here is due to Kovaleva and
Spieksma~\cite{KovalevaSpieksma2006}; the separated-$d$ gap and matching
rounding are due to Ben-David, Grant, Ma, and
Sharpe~\cite{BenDavidEtAl2012}.  Our contribution for these two problems is
the order-preserving UGC transfer.  For cube stabbing, we also contribute the integrality gap instance providing a lower bound of essentially $d$.
For comparison, singleton traces encode vertex cover in $d$-partite
$d$-uniform hypergraphs, whose earlier UGC lower bound is
$d/2-\varepsilon$ for $d\ge3$~\cite{GuruswamiSachdevaSaket2015}.

\paragraph{Organization of the paper.}
Section~\ref{sec:k-CSP} recalls the strict-CSP framework and the Unique-Games hardness
theorem of Kumar, Manokaran, Tulsiani, and Vishnoi.  Section~\ref{sec:stabbing} develops the ordered-block transfer theorem.  It encodes each ordered candidate block as
one strict-CSP variable, constructs connected local distributions by ordered
sampling and a full-support perturbation, and explains how passive block
types, occurrence names, coordinate order, and trace data are retained.  We
also remove the rational output weights by full Cartesian cloning, thereby
obtaining unweighted hard instances suitable for unit-cost geometric
realizations.

Section~\ref{sec:se-d-interval} applies the transfer theorem to the known
integrality-gap family
for separated \(d\)-intervals and gives an exact realization of the resulting
hard CSP instances on \(d\) disjoint tracks.
Section~\ref{sec:cube-base} treats stabbing
axis-parallel \(d\)-cubes by coordinate hyperplanes.  We first construct a new
multiplicative-scale LP-gap family whose projection intervals have equal
length in every coordinate, and then realize the corresponding KMTV instances
as genuine \(d\)-cubes in private coordinate zones.
Section~\ref{sec:isp} applies the
same framework to interval stabbing, using the Kovaleva--Spieksma gap family
and a private-zone realization by horizontal segments, admissible rows, and
admissible columns.

The following is a schematic summary of the reduction:

\[
\begin{aligned}
\underbrace{\mathcal J_0}_{\substack{\text{finite geometric}\\
\text{LP-gap instance}}}
&\xrightarrow{\text{ordered-block encoding}}
\underbrace{\mathcal J}_{\substack{\text{constant-size connected}\\
\text{strict-CSP seed}}}
\xrightarrow{\text{KMTV}}
\underbrace{\mathcal I}_{\substack{\text{weighted hard}\\
\text{strict-CSP instance}}}
\\[1.5ex]
&\xrightarrow{\text{Cartesian cloning}}
\underbrace{\widehat{\mathcal I}}_{\substack{\text{unweighted hard}\\
\text{strict-CSP instance}}}
\xrightarrow{\text{private-zone realization}}
\underbrace{\mathcal S_{\widehat{\mathcal I}}}_{\substack{\text{unit-cost geometric}\\
\text{stabbing instance}}}.
\end{aligned}
\]

For completeness, Appendix~\ref{sec:upper} gives the matching
orientation-rounding algorithm for \(d\)-cube stabbing,
Appendix~\ref{sep-d-interval-example} recalls the separated
\(d\)-interval gap construction, and Appendix~\ref{segment-example} recalls
the interval-stabbing gap construction.

\section{\texorpdfstring{$k$}{k}-strict-CSPs and the KMTV result}\label{sec:k-CSP}

\begin{definition}\label{k-CSP}
An instance of a \(k\)-strict-CSP over alphabet \(\Sigma\), with arity at
most \(k\) and cost map \(c\), is a tuple
\[
\cI=(\cV,\cE,\{\cA_e\}_{e\in\cE},\Sigma,\{w_v\}_{v\in\cV},c),
\]
where
\begin{itemize}
\item \(\cV\) is a finite set of variables;
\item \(\cE\) is a finite indexed collection of ordered constraints, with
each \(e\in\cE\) having a scope
\(e=(v_1,\ldots,v_{k_e})\in\cV^{k_e}\), where \(1\le k_e\le k\), and
the variables \(v_1,\ldots,v_{k_e}\) are pairwise distinct;
\item \(\Sigma\) is a finite alphabet of labels;
\item \(\cA_e\subseteq\Sigma^{k_e}\) is the set of allowed tuples for
constraint \(e\);
\item \(w_v\ge0\) and \(\sum_{v\in\cV}w_v=1\); and
\item \(c:\Sigma\to\RR_{\ge0}\) assigns a nonnegative cost to each label.
\end{itemize}
An assignment \(\sigma:\cV\to\Sigma\) satisfies
\(e=(v_1,\ldots,v_{k_e})\) if
\((\sigma(v_1),\ldots,\sigma(v_{k_e}))\in\cA_e\).  The objective is
\begin{align*}
\opt(\cI)
:=
\min_{\sigma:\cV\to\Sigma}\quad
&\cost(\sigma):=\sum_{v\in\cV}w_vc(\sigma(v))
\\[2mm]
\text{s.t.}\quad
&(\sigma(v_1),\ldots,\sigma(v_{k_e}))\in\cA_e
\quad
\forall e=(v_1,\ldots,v_{k_e})\in\cE.
\end{align*}
\end{definition}

\noindent{\it LP relaxation.}~
For each \(A\in \Sigma\), let \(e_A\in \mathbb{R}^{\Sigma}\) denote the unit vector
corresponding to label \(A\), i.e.
\[
e_A(B)=
\begin{cases}
1, & B=A,\\
0, & B\neq A.
\end{cases}
\]

\cite{KMTV2011} uses the following LP relaxation for a
\(k\)-strict-CSP instance \(\cI\):
\refstepcounter{lpdisplay}
\begin{equation*}\label{lp1}\tag{\thelpdisplay}
\begin{aligned}
\lp(\cI)=
\min \quad
& \val(\cI,\mu):=\sum_{v\in\cV}w_v
   \sum_{A\in\Sigma}c(A)\mu_v(A) \\[2mm]
\text{s.t.}\quad
& (\mu_{v_1},\ldots,\mu_{v_{k_e}})
\in
\conv
\left\{
(e_{A_1},\ldots,e_{A_{k_e}})
:
(A_1,\ldots,A_{k_e})\in\cA_e
\right\}\\[-1mm]
&\hspace{2em}\forall e=(v_1,\ldots,v_{k_e})\in\cE,\\[2mm]
& \mu_v\in \Delta_{\Sigma}
&& \forall v\in\cV.
\end{aligned}
\end{equation*}

where
\[
\Delta_{\Sigma}
=
\left\{
\mu\in \mathbb{R}_{\ge 0}^{\Sigma}
:
\sum_{A\in\Sigma}\mu(A)=1
\right\},
\]
and \(\conv(S)\) denotes the convex hull of the set of vectors in \(S\).

Equivalently, one can write the above LP using local distribution variables. 
For every constraint \(e=(v_1,\ldots,v_{k_e})\in\cE\), introduce a variable
$
\lambda_e:\cA_e\to \mathbb{R}_{\ge 0}.
$
The LP is then
\refstepcounter{lpdisplay}
\begin{equation*}\label{lp2}\tag{\thelpdisplay}
\begin{aligned}
\operatorname{lp}(\cI)=
\min \quad
& \sum_{v\in\cV}w_v\sum_{A\in\Sigma}c(A)\mu_v(A) \\[2mm]
\text{s.t.}\quad
& \sum_{A\in\Sigma} \mu_v(A)=1
&& \forall v\in\cV,\\[1mm]
& \mu_v(A)\ge 0
&& \forall v\in\cV,\ A\in\Sigma,\\[1mm]
& \sum_{\mathbf A\in \cA_e}\lambda_e(\mathbf A)=1
&& \forall e\in\cE,\\[1mm]
& \lambda_e(\mathbf A)\ge 0
&& \forall e\in\cE,\ \mathbf A\in\cA_e,\\[1mm]
& \mu_{v_i}(A)
=
\sum_{\substack{
\mathbf A=(A_1,\ldots,A_{k_e})\in\cA_e\\
A_i=A
}}
\lambda_e(\mathbf A)
&& \forall e=(v_1,\ldots,v_{k_e})\in\cE,\
   i\in[k_e],\ A\in\Sigma.
\end{aligned}
\end{equation*}

The last constraint says that the "marginal" distribution of the "local"
distribution \(\lambda_e\) on its \(i\)-th coordinate agrees with the global
distribution \(\mu_{v_i}\).
Note that, under this interpretation, \(\val(\cI,\mu)=\sum_{v\in\cV} w_v\E_{A\sim \mu_v}c(A)\).

\medskip

\noindent{\it Connected feasible LP solutions.}~
A feasible LP solution \(\mu\) for~\raf{lp1} is called
\emph{connected} if there is a complete family of distributions
\((\lambda_e)_{e\in\cE}\), with
\(\lambda_e:\cA_e\to\mathbb R_{\ge0}\), such that
\((\mu,\{\lambda_e\}_{e\in\cE})\) is feasible for~\raf{lp2} and, for
every \(e\in\cE\),
\[
\operatorname{supp}(\lambda_e)
:=\{\mathbf A\in\cA_e:\lambda_e(\mathbf A)>0\}
\]
induces a connected subgraph of the Hamming graph on \(\cA_e\), where two
tuples are adjacent if they differ in exactly one coordinate.
This condition implies the maximal-correlation condition used in the
strict-CSP theorem.  Indeed, for every nontrivial bipartition of the
coordinates, a Hamming path in \(\operatorname{supp}(\lambda_e)\) gives a
walk through the edges of the bipartite support graph between the two
coordinate projections.  That support graph is therefore connected, and
for a finite distribution its maximal correlation is strictly less than one.

\medskip

\noindent{\it The KMTV result.}~
We use the following weighted, bounded-cost reformulation of Manokaran's
strict-CSP integrality-gap theorem~\cite[Theorem~5.4.1]{Manokaran2012};
see also~\cite{KMTV2011}.

\begin{theorem}\label{KMTV}
Let \(\Pi\) be a strict-CSP language over a fixed alphabet \(\Sigma\), of
arity at most a fixed constant \(k\), with a fixed bounded label-cost
function \(c:\Sigma\to\mathbb R_{\ge0}\).  Assume that every relation of
\(\Pi\) is one-coordinate extendable: after fixing the labels in all but
one coordinate, some label in the remaining coordinate completes a
satisfying tuple.  Let \(\cJ\) be a constant-size weighted instance of
\(\Pi\) whose variable weights are rational and normalized to sum to one,
and let \(\mu\) be a rational feasible connected solution to~\raf{lp1} for
\(\cJ\).  Then, for every \(\delta>0\), it is Unique-Games-hard to
distinguish instances \(\cI\) of \(\Pi\) satisfying
\[
\opt(\cI)\le\val(\cJ,\mu)+\delta
\]
from instances satisfying
\[
\opt(\cI)\ge\opt(\cJ)-\delta.
\]
The output variable weights may be taken to be rational and normalized to
sum to one.
\end{theorem}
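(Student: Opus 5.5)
This statement is a reformulation of Manokaran's Theorem~5.4.1~\cite{Manokaran2012}, so the plan is to derive it from that theorem rather than prove UG-hardness from scratch. The proof will have three parts: matching the hypotheses, transferring the conclusion (including the weights and costs), and checking the output normalization.

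\textbf{Matching the hypotheses.} Manokaran's theorem is stated for strict CSPs with a bounded cost function on labels and a fixed finite alphabet. Its hypotheses are:
\begin{itemize}
\item an integrality-gap instance \(\cJ\) of constant size;
\item an LP solution whose local distributions satisfy a maximal-correlation bound \(\rho<1\) for every constraint and every nontrivial bipartition of its coordinates;
\item a relation-level condition ensuring that every variable can always be completed.
\end{itemize}
One-coordinate extendability is exactly the last condition. It is what the dictatorship decoding needs: after variables are decoded from influential coordinates, every constraint of the final instance can be satisfied by fixing a missing label, so the soundness analysis never meets an unsatisfiable constraint. Connectedness of \(\mu\) gives the correlation bound by the argument stated just before the theorem. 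Since \(\cJ\) is finite and \(\mu\) is rational, finitely many supports occur, so the maximum of the resulting correlations is a constant \(\rho<1\). The noise parameters in the dictatorship test then depend only on \(\cJ\), \(\mu\) and \(\delta\).

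\textbf{Transferring the conclusion.} In the completeness case, the long-code construction over the UG instance replaces each variable \(v\) of \(\cJ\) by a cloud of variables. These are labelled by sampling from \(\mu_v\), or rather from a slightly noised version. The expected cost is therefore \(\val(\cJ,\mu)\) up to an \(O(\delta)\) loss from the noise and from the non-dictator fraction. Soundness gives \(\opt(\cJ)-\delta\) by Manokaran's invariance argument, which turns any cheap assignment with no influential coordinates into a rounding of \(\cJ\).

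The variable weights \(w_v\) of \(\cJ\) are carried into the output. Each clone of \(v\) receives weight proportional to \(w_v\) times the UG vertex distribution times the hypercube measure. This keeps every weight rational, because \(\mu\) and \(w\) are rational and all the distributions involved are uniform over finite sets. The weights are then renormalized to sum to one.

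Label costs \(c\) enter linearly. They only need to be bounded by a constant \(C\), so that the \(O(\delta)\) error terms scale by \(C\); rescaling \(\delta\) absorbs this.

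\textbf{Main obstacle.} The delicate point is checking that Manokaran's formulation really permits weighted variables and general bounded label costs, rather than only the specific covering costs. I would handle this by reading his objective as \(\E_v\,\E[c(\sigma(v))]\) under an arbitrary variable distribution, with \(\cJ\)'s weights inducing that distribution. If only uniform weights were supported, I would first replace \(\cJ\) by an unweighted instance: clone each variable \(v\) a number of times proportional to \(w_v\) (with a common denominator) and copy every constraint on all clone tuples, i.e.\ full Cartesian cloning. This preserves \(\opt(\cJ)\), \(\val(\cJ,\mu)\), connectedness and extendability, and the weighted theorem then follows from the unweighted one.
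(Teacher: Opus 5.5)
Your plan is essentially the paper's: a black-box derivation from Manokaran's Theorem~5.4.1. The weights are folded into the objective; the paper does this concretely through unary costs \(C_v(A)=w_vc(A)\), so your cloning fallback is unnecessary and would in any case fail for \(w_v=0\) without the \(+1\) rounding of Lemma~\ref{lem:cartesian-cloning}. The cost bound is absorbed by dividing by \(C_0=\max\{1,\max_A c(A)\}\) and invoking the source theorem with error \(\delta/C_0\).

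One correction to your rationality step: the cloud weights are products of seed weights, marginals and local probabilities, not uniform measures, and connectedness by itself only guarantees real local distributions. The paper therefore first replaces each \(\lambda_e\) by a nearby rational distribution with the same connected support before concluding that all output weights are rational.
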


To derive this formulation from the cited theorem, assign the seed
variable \(v\), in Manokaran's notation, the unary cost
\[
 C_v(A):=w_vc(A).
\]
Then the source objective \(\sum_v C_v(\sigma(v))\) is exactly the
weighted objective used here.  Put
\[
 C_0:=\max\{1,\max_{A\in\Sigma}c(A)\}.
\]
Since \(\sum_vw_v=1\), every assignment has unary cost at most \(C_0\).
Dividing every \(C_v\) by \(C_0\) therefore enforces the source
normalization that the maximum assignment cost is at most one.  Apply the source theorem
with additive error \(\delta/C_0\), and then multiply the objective by
\(C_0\); this gives the two displayed bounds with error \(\delta\).

For completeness, the rationality and structure assertions also follow
directly from the construction in
\cite[Section~5.1 and Theorem~5.4.1]{Manokaran2012} and
\cite{KMTV2011}.  For a rational marginal vector \(\mu\), the affine
systems defining its local distributions have rational coefficients.
For each constraint \(e\), fix the support \(S_e\) of a connected
witnessing distribution and add the equations
\(\lambda_e(\mathbf A)=0\) for \(\mathbf A\notin S_e\).  Rational points
are dense in this rational affine space, so a sufficiently close rational
point keeps every coordinate indexed by \(S_e\) strictly positive and
therefore preserves the connected support.  The cloud weights in the KMTV
construction are products of these rational local probabilities, the
rational marginals, and the rational seed weights.  After normalization
they remain rational and give the common label cost \(c\) with rational
variable weights.  Finally, each output constraint is lifted from an
indexed seed occurrence with that occurrence's relation and coordinate
order unchanged.

Whenever \(\val(\cJ,\mu)>0\), the gap instance \((\cJ,\mu)\)
consequently yields, under UGC, hardness arbitrarily close to
\[
\frac{\opt(\cJ)}{\val(\cJ,\mu)}.
\]

\section{\texorpdfstring{$k$}{k}-interval hitting-set problem}\label{sec:stabbing}

A finite \emph{\(k\)-interval system}
\(\cS=(\{P_v\}_{v\in\cV},\{\bT_e\}_{e\in\cE})\) consists of a finite
set \(\cV\) of ordered candidate blocks
\[
P_v=(p_{v,1},\ldots,p_{v,n_v})
\]
and a finite indexed collection \(\cE\) of ordered occurrences.  Each
occurrence \(e\) has a scope
\((v_1,\ldots,v_{k_e})\), where \(2\le k_e\le k\), and nonempty
consecutive index traces
\[
\bT_e=(T_{e,1},\ldots,T_{e,k_e}),
\qquad
T_{e,i}\subseteq[n_{v_i}].
\]
The occurrence is hit if a selected candidate \(p_{v_i,j}\) has
\(j\in T_{e,i}\) for at least one \(i\in[k_e]\).  The goal is to hit every
occurrence with as few candidates as possible.  Its natural covering LP is
\refstepcounter{lpdisplay}
\begin{equation*}\label{eq:ordered-lp}\tag{\thelpdisplay}
 \min\sum_{v\in\mathcal V}\sum_{j=1}^{n_v}x_{v,j}
 \quad\text{s.t.}\quad
 \sum_{i=1}^{k_e}x_{v_i}(T_{e,i})\ge1\ (e\in \cE),
 \qquad 0\le x_{v,j}\le1 \ (v\in \cV,~j\in[n_v]),
\end{equation*}
where
\[
x_v(T):=\sum_{j\in T}x_{v,j}.
\]
We make the following assumptions:
\begin{itemize}
\item[(A1)] \(k=O(1)\), \(2\le k_e\le k\), and \(n_v=O(1)\) for every
\(v\in\cV\);
\item[(A2)] \(T_{e,i}\ne\varnothing\) for every \(e\in\cE\) and
\(i\in[k_e]\).
\end{itemize}

\medskip

\begin{lemma}\label{thm:ordered-transfer}
Fix a \(k\)-interval system \(\cJ_0\) satisfying (A1) and (A2), with
integral optimum \(K\), and a rational feasible solution of
\eqref{eq:ordered-lp} of value \(L>0\).  For every rational
\(\rho\in(0,\frac12)\), we can construct a \(k\)-strict-CSP instance
\(\cJ\), of size polynomial in the size of \(\cJ_0\), such that
\[
\opt(\cJ)=\frac K{b_0}
\quad\text{and}\quad
\val(\cJ,\mu^\rho)
=(1-2\rho)\frac L{b_0}+\frac32\rho N,
\]
where \(b_0=|\cV|\), \(N=\max_v n_v\), and \(\mu^\rho\) is a rational
feasible connected solution of~\raf{lp1}.
\end{lemma}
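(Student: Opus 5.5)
The plan is to encode each ordered block as one strict-CSP variable and build the connected local distributions by mixing ordered sampling with two full-support components. Let \(\cJ\) have variable set \(\cV\) and uniform weights \(w_v=1/b_0\). Its alphabet is \(\Sigma=2^{[N]}\), with cost \(c(A)=|A|\). Each occurrence \(e=(v_1,\ldots,v_{k_e})\) becomes a constraint on the same scope, with allowed set
\[
\cA_e=\{(A_1,\ldots,A_{k_e}):\ \exists i,\ A_i\cap T_{e,i}\ne\varnothing\}.
\]
Every such relation is one-coordinate extendable, since completing with \([N]\) always works by (A2).

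For \(\opt(\cJ)=K/b_0\), I map a stabbing set \(S\) to \(\sigma(v)=\{j:p_{v,j}\in S\}\); this is satisfying and costs \(|S|/b_0\). Conversely, from an assignment \(\sigma\) I replace \(\sigma(v)\) by \(\sigma(v)\cap[n_v]\). This keeps every constraint satisfied, because traces lie in \([n_v]\), and does not increase cost. The result is a stabbing set, so the two optima agree.

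For the fractional side, I first replace \(x\) by \(\min\{x,1\}\) coordinatewise; this keeps feasibility and does not increase \(L\). On each block \(v\) I then define three label distributions.
\begin{enumerate}
\item \(\nu_v\), from ordered (systematic) sampling. Lay the \(x_{v,j}\) consecutively as intervals on the real line, draw \(U\) uniform in \([0,1)\), and select \(j\) if its interval contains a point of \(U+\Z\). Each \(j\) is then selected with probability \(x_{v,j}\). For a consecutive trace \(T\), the union of its intervals is an interval of length \(x_v(T)\), so \(T\) is hit with probability exactly \(\min\{1,x_v(T)\}\). Since \(x\) is rational, \(\nu_v\) has rational probabilities.
\item The point mass on \([N]\).
\item The uniform distribution on \(2^{[N]}\).
\end{enumerate}
I set
\[
\mu^\rho_v=(1-2\rho)\,\nu_v+\rho\,\delta_{[N]}+\rho\,\mathrm{Unif}(2^{[N]}).
\]
Its expected cost is \((1-2\rho)x_v([n_v])+\rho N+\rho N/2\). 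Averaging over the \(b_0\) blocks with weight \(1/b_0\) gives exactly \((1-2\rho)L/b_0+\tfrac32\rho N\).

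The main step, and the one I expect to be the real obstacle, is producing for each \(e\) a connected local distribution \(\lambda_e\) on \(\cA_e\) with marginals \(\mu^\rho_{v_i}\). Let \(p_i=\Pr_{A\sim\mu^\rho_{v_i}}[A\cap T_{e,i}\ne\varnothing]\). The hit-bit vector \(p\) satisfies two strict inequalities.
\begin{enumerate}
\item \(p_i<1\): the uniform component misses \(T_{e,i}\) with probability at least \(\rho 2^{-N}>0\).
\item \(\sum_i p_i>1\): the \(\nu\)-parts contribute \((1-2\rho)\sum_i\min\{1,x(T_{e,i})\}\ge1-2\rho\), because either some \(x(T_{e,i})\ge1\) or the covering constraint gives \(\sum_i x(T_{e,i})\ge1\). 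The other two components add at least \(\rho\cdot k_e+\rho\cdot k_e/2>2\rho\), using \(k_e\ge2\).
\end{enumerate}
These two strict inequalities place \(p\) in the interior of \(\conv(\{0,1\}^{k_e}\setminus\{0\})\). That polytope is cut out by \(0\le p_i\le1\) and \(\sum p_i\ge1\), so I expect this to need a short polyhedral check. Hence there is a distribution \(\pi\) on nonzero bit patterns with marginals \(p\) and full support, and I can take \(\pi\) rational by density.

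I then put
\[
\lambda_e(A_1,\ldots,A_{k_e})=\sum_h\pi(h)\prod_i\mu^\rho_{v_i}(A_i\mid \text{hit bit}=h_i).
\]
This is supported on \(\cA_e\) because \(h\ne0\), and it has the correct marginals. Both conditional distributions have full support on their hit or miss label classes, thanks to the uniform component. Therefore \(\operatorname{supp}(\lambda_e)\) is all of \(\cA_e\).

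It remains to see that \(\cA_e\) is connected in the Hamming graph: from any satisfying tuple, first change one coordinate to \([N]\), which keeps it satisfying, and then move to any other tuple through tuples containing that \([N]\) entry. All quantities are rational and the construction is polynomial, which completes the lemma.
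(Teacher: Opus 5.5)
Your proposal is correct and follows the paper's proof essentially step by step: the block alphabet $2^{[N]}$ with cost $|A|$ and weights $1/b_0$, the ordered (systematic) sampler, the mixture $(1-2\rho)\nu_v+\rho\,\delta_{[N]}+\rho\,\mathrm{Unif}(2^{[N]})$, a full-support coupling of the hit bits on the nonzero cube, conditioning on the hit/miss classes, and Hamming connectivity through the label $[N]$. The only detail to add is $p_i\ge\rho>0$, which comes from the point mass on $[N]$; you need it alongside $p_i<1$ and $\sum_i p_i>1$, both to place $p$ in the interior of $\conv(\{0,1\}^{k_e}\setminus\{0\})$ and to make the conditional distribution on the hit class well defined.
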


In the rest of this section we will prove Lemma~\ref{thm:ordered-transfer}.

\noindent{\it Encoding the problem as a \(k\)-strict-CSP.}~
Put \(N=\max_v n_v\), pad shorter blocks with unused positions, extend
\(x\) by zero on those positions, and let \(b_0=|\cV|\).  We encode each
block by one strict-CSP variable \(v\in\cV\), use the common alphabet
\(\Sigma:=2^{[N]}\), the common cost \(c(A):=|A|\), and weight
\(w_v=1/b_0\).  For an occurrence
\(e=(v_1,\ldots,v_{k_e})\), define
\begin{equation}\label{eq:block-relation}
\cA_e:=
\left\{(A_1,\ldots,A_{k_e})\in\Sigma^{k_e}:
\bigvee_{i=1}^{k_e}
\bigl(A_i\cap T_{e,i}\ne\varnothing\bigr)\right\}.
\end{equation}
Fix the common label \(\toplabel=[N]\).  Assumption (A2) implies that a
tuple belongs to \(\cA_e\) whenever one of its coordinates equals
\(\toplabel\).  Thus every relation \(\cA_e\) is one-coordinate
extendable.  Let \(\cJ\) be the resulting \(k\)-strict-CSP instance.

\begin{lemma}\label{rem:passive-metadata}
Attach a type \(\tau(v)\) to every seed variable and attach a name
\(\nu(e)\), coordinate order, and ordered trace tuple
\((T_{e,1},\ldots,T_{e,k_e})\) to every indexed seed occurrence.  The KMTV
output can be annotated so that every output variable inherits the type of
its associated seed variable, while every output constraint retains the
name, coordinate order, traces, and relation of its associated seed
occurrence.  Constraints remain an indexed multicollection: if two output
constraints have identical scopes but come from different indexed seed
occurrences, their distinct occurrence tags are retained.
\end{lemma}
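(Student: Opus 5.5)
The proof will not touch the hardness argument of Theorem~\ref{KMTV}. It only inspects the combinatorial form of the KMTV output and records bookkeeping that the construction already carries implicitly. Recall the structure from \cite[Section~5.1]{Manokaran2012} and \cite{KMTV2011}. One starts from a Unique Games instance \((G,[R])\) and the seed \(\cJ\). Every output variable is a pair \((u,v,z)\), that is, a UG vertex \(u\), a seed variable \(v\in\cV\), and a point \(z\) in the long-code cube attached to \(v\). Every output constraint is produced by a test that picks a UG vertex, some neighbours with their permutations, an indexed seed occurrence \(e=(v_1,\ldots,v_{k_e})\in\cE\), and correlated points drawn from the local distribution \(\lambda_e\) (after noise). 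The test then imposes the relation \(\cA_e\) on the \(k_e\) resulting output variables, in the order \(v_1,\ldots,v_{k_e}\).

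The plan is to define the annotations directly from this description. For an output variable \((u,v,z)\), set its type to \(\tau(v)\); this is well defined because the seed coordinate \(v\) is part of the variable's name. For an output constraint generated from the seed occurrence \(e\), attach \(\nu(e)\), the coordinate order \((v_1,\ldots,v_{k_e})\), and the trace tuple \(\bT_e\). Its \(i\)-th scope variable has the form \((\cdot,v_i,\cdot)\), so it inherits type \(\tau(v_i)\). The relation imposed is \(\cA_e\) itself, as required by the last sentence in the discussion after Theorem~\ref{KMTV}. There is one consistency point to check. KMTV applies a permutation \(\pi\) to the long-code coordinates, which are indexed by the UG label set \([R]\). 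It never permutes the CSP alphabet \(\Sigma=2^{[N]}\) or the constraint's coordinate order. Hence the relation \(\cA_e\) from \eqref{eq:block-relation}, and with it the traces \(T_{e,i}\subseteq[N]\), are carried over unchanged. The annotations are therefore functions of the generating data and do not change the instance.

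For the multicollection claim, the plan is to treat the output constraint set as indexed by the full generating tuple, including the seed occurrence index \(e\), and never to merge constraints by scope. Suppose two seed occurrences \(e\neq e'\) have the same scope but different traces, or even identical traces. They then produce output constraints with identical scopes but distinct tags. Keeping both is harmless: the KMTV soundness and completeness analysis is a sum or expectation over tests, so duplicated or parallel constraints are already handled as separate terms. Also, the feasibility of an assignment is the conjunction over all of them either way. Output weights, which are rational by Theorem~\ref{KMTV}, live on variables and are unaffected.

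The main obstacle is making sure that the cited construction genuinely instantiates constraints per \emph{indexed} seed occurrence, with its own coordinate order, rather than per distinct relation or per scope set. If the source instead symmetrizes or deduplicates, I would handle it as follows. First I would pass to a seed in which each occurrence \(e\) has its own relation symbol \(\cA_e^{(\nu(e))}\) in the language \(\Pi\). The language stays finite because \(\cJ\) has constant size, the arity bound and one-coordinate extendability are preserved, and Theorem~\ref{KMTV} still applies. Every output constraint then carries its occurrence name inside its relation symbol, which yields the claimed annotation.
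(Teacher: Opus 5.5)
Your proposal is correct and follows the paper's argument. Types come from the seed variable that indexes each cloud, and each output constraint carries the data of the indexed seed occurrence it is lifted from; the construction never inspects these tags, and lifting occurrence by occurrence keeps parallel constraints distinct. Your extra checks are sound additions the paper leaves implicit: the UG permutations act only on the label coordinates $[R]$ and never on $\Sigma$, and the per-occurrence relation-symbol fallback covers a source that deduplicates constraints.
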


\begin{proof}
In the KMTV construction, every output variable is created in a cloud
indexed by a particular seed variable, and every output constraint is
lifted from a particular indexed seed occurrence with the same relation and
coordinate order.  Carrying the corresponding tags through these two
steps therefore preserves the stated data.  In particular, lifting is
performed occurrence by occurrence, so coincident output scopes are not
identified when their seed-occurrence tags differ.  The construction never
inspects the tags, so they affect neither feasibility, cost, nor the LP.
The full Cartesian cloning used below simply copies the same tags to every
clone and cloned occurrence.
\end{proof}

\begin{remark}\label{r2}
One might instead try the Boolean covering framework, with alphabet
\(\{0,1\}\) and one CSP variable per candidate.  That encoding does not
preserve the ordered block structure needed for the geometric realization:
the output variables need not remain grouped into ordered copies of the
original candidate blocks.  The whole-block alphabet \(2^{[N]}\) avoids
this problem.  Every output variable has local positions
\(\{1,\ldots,N\}\), so a trace such as
\(\{a,a+1,\ldots,b\}\) remains an unambiguous consecutive subset of one
block.
\end{remark}

\begin{claim}\label{cl1}
The instance $\cJ$ can be constructed from $\cJ_0$ in polynomial time (in the size of $\cJ_0$). Moreover, $\opt(\mathcal J)=K/b_0$.
\end{claim}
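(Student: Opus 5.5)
The plan has two parts: bound the construction cost, then show that feasible CSP assignments and stabbing sets of \(\cJ_0\) are the same objects up to the scaling \(1/b_0\).

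\emph{Construction time.} By (A1), \(N=\max_v n_v=O(1)\) and \(k=O(1)\), so the alphabet \(\Sigma=2^{[N]}\) has constant size \(2^N\). For each occurrence \(e\), the relation \(\cA_e\) in~\eqref{eq:block-relation} is a subset of \(\Sigma^{k_e}\). It therefore has at most \(2^{Nk}=O(1)\) tuples, and each tuple is tested in constant time by checking \(A_i\cap T_{e,i}\ne\varnothing\). Writing down the variables, the weights \(w_v=1/b_0\), the costs \(c(A)=|A|\), and one relation per indexed occurrence thus takes time linear in \(|\cV|+|\cE|\). This is polynomial in the size of \(\cJ_0\).

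\emph{Correspondence of solutions.} I would define a map between stabbing sets \(S\) of \(\cJ_0\) and assignments \(\sigma\) by setting
\[
\sigma_S(v)=\{j\in[n_v]:p_{v,j}\in S\}.
\]
In the reverse direction, an assignment \(\sigma\) gives the set \(S_\sigma=\{p_{v,j}: j\in\sigma(v),\ j\le n_v\}\), which discards any padded positions. The steps are:
\begin{enumerate}
\item If \(S\) hits every occurrence, then \(\sigma_S\) satisfies every \(\cA_e\). Indeed, the hit condition for \(e\) is literally \(\bigvee_i \sigma_S(v_i)\cap T_{e,i}\ne\varnothing\). Moreover \(\cost(\sigma_S)=\sum_v |\sigma_S(v)|/b_0=|S|/b_0\). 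Hence \(\opt(\cJ)\le K/b_0\).
\item Conversely, take a feasible \(\sigma\). Since each trace satisfies \(T_{e,i}\subseteq[n_{v_i}]\), deleting the padded positions does not affect any relation. So \(S_\sigma\) hits every occurrence and \(|S_\sigma|\le b_0\cost(\sigma)\). This gives \(K\le b_0\opt(\cJ)\).
\end{enumerate}
Combining the two inequalities yields \(\opt(\cJ)=K/b_0\).

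No real obstacle is expected. The only points needing care are the padded positions, which are handled by the fact that each trace lies inside \([n_{v_i}]\), and the indexing of occurrences as a multicollection. For the latter, one relation per indexed occurrence is created, so duplicate occurrences are handled consistently.
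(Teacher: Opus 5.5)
Your proposal is correct and follows the paper's own route. Assumption (A1) makes the alphabet and arity constant, which gives polynomial construction time. Identifying labels with selected candidate sets makes feasibility coincide and scales cost by exactly $1/b_0$, since padded positions lie in no trace and can be discarded. You state the two inequalities $\opt(\cJ)\le K/b_0$ and $K\le b_0\opt(\cJ)$ more explicitly than the paper does, but the argument is the same.
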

\begin{proof}
Assumption (A1) implies polynomial construction time.  A feasible
assignment \(\sigma:\cV\to\Sigma\) gives a hitting set for \(\cJ_0\) by
selecting \(p_{v,j}\) for every
\(j\in\sigma(v)\cap[n_v]\).  Padded positions belong to no trace and may be
deleted without affecting feasibility.  Conversely, a hitting set defines
the corresponding labels in the original positions.  The CSP cost is
\(1/b_0\) times the hitting-set cardinality, and hence
\(\opt(\cJ)=K/b_0\).
\end{proof}

\begin{lemma}\label{l1}
For every rational \(\rho\in(0,\frac12)\), there is a rational connected
feasible solution \(\mu^\rho\) for~\raf{lp1} for \(\cJ\), of value
\[
\val(\cJ,\mu^\rho)
=(1-2\rho)\frac L{b_0}+\frac32\rho N.
\]
\end{lemma}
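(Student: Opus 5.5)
The plan is to build $\mu^\rho_v$ block by block as a three-component mixture, and then to build each local distribution $\lambda_e$ by coupling the ``hit bits'' of the blocks in the scope of $e$. For a block $v$ with (padded) fractional vector $x_v\in[0,1]^N$, let $\pi_v$ be the law of the following ordered (systematic) sampling procedure.
\begin{itemize}
\item Lay the positions $1,\ldots,N$ consecutively on the real line as intervals $I_j=[s_{j-1},s_j)$, where $s_j=x_v(\{1,\ldots,j\})$.
\item Draw $U$ uniformly from $[0,1)$.
\item Select $j$ if and only if $I_j$ contains a point of $U+\mathbb Z$.
\end{itemize}
Since $x_{v,j}\le1$, each $I_j$ contains at most one such point, so $\Pr[j\in A]=x_{v,j}$. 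A consecutive trace $T$ corresponds to a single interval of length $x_v(T)$, so $\Pr[A\cap T\ne\varnothing]=\min\{1,x_v(T)\}$. Since $x$ is rational, $\pi_v$ is a rational distribution: only finitely many breakpoints occur. Set
\[
\mu^\rho_v:=(1-2\rho)\,\pi_v+\rho\,e_{\toplabel}+\rho\,\mathrm{Unif}(\Sigma).
\]
Then $\E|A|=(1-2\rho)x_v([N])+\rho N+\rho N/2$. Summing over $v$ with weights $1/b_0$ gives exactly $(1-2\rho)L/b_0+\frac32\rho N$, because $\sum_v x_v([N])=L$.

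For an occurrence $e=(v_1,\ldots,v_{k_e})$, write $h_i:=[A_i\cap T_{e,i}\ne\varnothing]$ and $p_i:=\Pr_{\mu^\rho_{v_i}}[h_i=1]$. I will check that $p=(p_1,\ldots,p_{k_e})$ lies in the interior of $Q:=\conv(\{0,1\}^{k_e}\setminus\{0\})$. The facets of $Q$ are $0\le p_i\le 1$ and $\sum_i p_i\ge1$ (as $Q$ is the cube cut by that single inequality), so three checks suffice.
\begin{itemize}
\item $p_i>0$, because $e_{\toplabel}$ always hits.
\item $p_i<1$, because the uniform component puts mass $2^{-N}\rho$ on $\varnothing$.
\item $\sum_ip_i>1$. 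Since $x$ is LP-feasible, $\sum_i\min\{1,x_{v_i}(T_{e,i})\}\ge1$. Hence $\sum_ip_i\ge(1-2\rho)+\rho k_e+\rho k_e/2\ge 1-2\rho+3\rho>1$, using $k_e\ge2$ and the fact that a nonempty trace is hit by a uniform set with probability at least $1/2$.
\end{itemize}
An interior point of a polytope is a convex combination of all its vertices with strictly positive coefficients. Solving this rational linear system therefore gives a rational distribution $\beta_e$ on $\{0,1\}^{k_e}\setminus\{0\}$ with full support and marginals $p_i$. Define $\lambda_e$ by drawing $h\sim\beta_e$ and then, independently for each $i$, drawing $A_i\sim\mu^\rho_{v_i}$ conditioned on the event $\{h_i\text{ holds}\}$. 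Each coordinate marginal is then $\mu^\rho_{v_i}$, and every tuple in the support has some $h_i=1$, so $\lambda_e$ is supported in $\cA_e$ and is feasible for~\raf{lp2}.

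For connectedness, note that each conditional law has full support on its event $\{A:A\cap T\ne\varnothing\}$ or $\{A:A\cap T=\varnothing\}$, thanks to the uniform component. Together with the full support of $\beta_e$, this makes $\operatorname{supp}(\lambda_e)=\cA_e$. It remains to see that $\cA_e$ is Hamming-connected. Take any satisfying tuple, say one with $A_i\cap T_{e,i}\ne\varnothing$. Replace every other coordinate by $\toplabel$ one at a time; the tuple stays satisfied throughout because coordinate $i$ still hits. Finally replace $A_i$ by $\toplabel$; the tuple stays satisfied because the other coordinates are now $\toplabel$. Every satisfying tuple is thus joined to $(\toplabel,\ldots,\toplabel)$.

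The step I expect to need the most care is the coupling: identifying the facet description of $Q$ and ensuring strict interiority. This is exactly where both the perturbation and $k_e\ge2$ are used. A secondary point is rationality of $\pi_v$ and of $\beta_e$; both come from finite rational breakpoints and a rational feasibility system with a rational interior solution.
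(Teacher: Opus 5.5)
Your proposal is correct and follows essentially the same route as the paper. It uses the same ordered (systematic) sampler, the same mixture $(1-2\rho)\mu_v^0+\rho\,\delta_{\toplabel}+\rho U$, the same interior check for $\conv(\{0,1\}^{k_e}\setminus\{0\})$ via $k_e\ge2$, the same hit-pattern coupling with conditionally independent label draws, and the same Hamming-connectivity argument through $(\toplabel,\ldots,\toplabel)$. The only difference is in the coupling step: you cite the facts that interior points are strictly positive combinations of all vertices and that strictly positive rational solutions exist, whereas the paper obtains the rational full-support decomposition explicitly by barycenter mixing.
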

\begin{proof}
Starting with a rational feasible solution \(x\) for~\raf{eq:ordered-lp}
for \(\cJ_0\), of value \(L\), we first construct candidate one-variable
marginals; feasibility for~\raf{lp2} will be established below by
constructing a complete family of local distributions.  Fix a block \(v\in\cV\),
put \(x_j=x_{v,j}\) for \(j\le n_v\) and \(x_j=0\) for
\(n_v<j\le N\), and put
\(F_0=0\) and \(F_j=\sum_{h=1}^j x_h\).  For \(\theta\) uniform in
\([0,1)\), set
\begin{equation}\label{eq:systematic-sample}
 A_\theta:=\{j\in[N]:(F_{j-1},F_j]\cap(\theta+\Z)\ne\varnothing\};
\end{equation}
see Figure~\ref{fig:systematic-sampling}.
Let \(\mu_v^0\) be the induced distribution on \(\Sigma=2^{[N]}\), i.e., $\mu_v^0(A)=\Prb[A_\theta=A]$, for $A\in\Sigma$.

\begin{figure}[ht]
\centering
\begin{tikzpicture}[x=.82cm,y=.82cm,font=\small,>=Latex]
  \node[anchor=west,font=\bfseries] at (0,4.45) {(a) Fractional mass on one ordered block};
  \draw[->,thick] (.4,2.8)--(8.8,2.8);
  \foreach \x/\lab in {.8/F_0,2.0/F_1,2.7/F_2,4.4/F_3,5.1/F_4,6.6/F_5,8.1/F_6}{
    \draw (\x,2.68)--(\x,2.92);
    \node[above=3pt,font=\scriptsize] at (\x,2.8) {$\lab$};
  }
  \fill[softblue] (2.7,2.2) rectangle (6.6,2.55);
  \draw[vblue,very thick] (2.7,2.18)--(6.6,2.18);
  \node[vblue,below] at (4.65,2.12) {a consecutive trace $T$};
  \foreach \x in {1.45,4.15,6.85}{
    \draw[horange,densely dashed,very thick] (\x,1.65)--(\x,3.35);
  }
  \node[horange,anchor=south] at (6.85,3.42) {$\theta+\mathbb Z$};
  \node[align=center,font=\scriptsize] at (4.6,.95)
    {select candidate $i$ iff its mass interval contains a dashed point};

  \node[anchor=west,font=\bfseries] at (10.0,4.45) {(b) Exact trace probability};
  \draw[very thick] (10.4,2.55)--(16.8,2.55);
  \draw[vblue,very thick] (11.3,2.55)--(15.7,2.55);
  \draw (11.3,2.35)--(11.3,2.75) (15.7,2.35)--(15.7,2.75);
  \node[above] at (13.5,2.72) {length $x(T)$};
  \node[align=center] at (13.5,1.45)
    {$\Prb[A_\theta\cap T\ne\varnothing]$\\[2pt]
     $=\min\{1,x(T)\}$};
\end{tikzpicture}
\caption{The ordered sampling procedure converts an ordered LP vector into a random
subset.  Consecutiveness is exactly what makes the elementary mass intervals
of a trace concatenate, so all trace-hit probabilities are correct
simultaneously.}
\label{fig:systematic-sampling}
\end{figure}

\begin{claim}[Ordered sampler]\label{lem:ordered-sampler}
For every \(j\in[N]\) and every consecutive trace \(T\),
\[
 \Prb[j\in A_\theta]=x_j,
 \qquad
 \Prb[A_\theta\cap T\ne\varnothing]=\min\{1,x(T)\}.
\]
Consequently, \(\E|A_\theta|=\sum_{j=1}^N x_j\).
\end{claim}

\begin{proof}
Modulo one, an interval of length \(x_j\le1\) contains a point of
\(\theta+\Z\) for a set of offsets \(\theta\) of measure \(x_j\).
Thus \(\Prb[j\in A_\theta]=x_j\).  If
\(T=\{a,a+1,\ldots,b\}\), its elementary intervals concatenate to
\((F_{a-1},F_b]\), of length \(x(T)\).  A real interval of length \(s\)
meets \(\theta+\Z\) with probability \(\min\{1,s\}\), which proves the
trace identity.  Linearity of expectation gives the final assertion.
Because \(x\) is rational, every \(F_j\) is rational.  The random set
\(A_\theta\) changes only when \(\theta\) crosses one of the finitely many
residues \(F_j\bmod1\); consequently, every atom \(\mu_v^0(A)\) is
rational.
\end{proof}
The distribution $\mu_v^0$ obtained by the ordered sampler need not have connected support.  Let \(U\) be uniform on
\(2^{[N]}\), let \(\delta_{\toplabel}\) be point mass at \(\toplabel\), and define a new (perturbed) distribution:
\begin{equation}\label{eq:perturbed-marginal}
 \mu_v^\rho=(1-2\rho)\mu_v^0+\rho\delta_{\toplabel}+\rho U.
\end{equation}
Note that $\mu_v^\rho$ has full support.   Summing expected cardinalities over all blocks gives
\begin{equation}\label{eq:L-rho}
 L_\rho:=\sum_{v\in\cV}
\mathbb{E}_{A\sim\mu_v^\rho}[|A|]=(1-2\rho)L+\frac32\rho b_0N,
 \qquad L_\rho\longrightarrow L\quad(\rho\downarrow0).
\end{equation}
Indeed, for a fixed block $v$, linearity of expectation with respect to the mixture and Claim~\ref{lem:ordered-sampler} give
\begin{align}\label{e1}
\mathbb{E}_{A\sim\mu_v^\rho}[|A|]
&=
(1-2\rho)\,\mathbb{E}_{A\sim\mu_v^0}[|A|]
+
\rho\,\mathbb{E}_{A\sim\delta_{\toplabel}}[|A|]
+
\rho\,\mathbb{E}_{A\sim U}[|A|]\nonumber\\
&=(1-2\rho)\sum_{i=1}^N x_{v,i}+\rho|\toplabel|
  +\rho \sum_{i=1}^N \Prb_{A\sim U}[i\in A]\nonumber\\
&=(1-2\rho)\sum_{i=1}^N x_{v,i}+\rho N+\rho\frac{N}{2}.
\end{align}
Summing~\raf{e1} over all \(b_0=|\cV|\) blocks gives~\raf{eq:L-rho}.

Fix a constraint \(e=(v_1,\ldots,v_r)\) of arity \(r:=k_e\).  For its
\(i\)th trace put
\[
 q_i:=\Prb_{A\sim\mu_{v_i}^0}[A\cap T_{e,i}\ne\varnothing]
 =\min\{1,x_{v_i}(T_{e,i})\},\qquad
 u_i:=\Prb_{A\sim U}[A\cap T_{e,i}\ne\varnothing]=1-2^{-|T_{e,i}|},
\]
and let \(p_i:=\Prb_{A\sim\mu_{v_i}^\rho}[A\cap T_{e,i}\ne\varnothing]\) be its hit probability  under \(\mu_{v_i}^\rho\).  Then, by assumption (A2), $0<u_i<1$, and
\begin{equation}\label{eq:hit-probabilities}
 p_i=(1-2\rho)q_i+\rho+\rho u_i,
\end{equation}
so every \(p_i\) is rational.  Moreover,
\(p_i\ge\rho(1+u_i)>0\) and
\(p_i\le1-\rho(1-u_i)<1\).
Feasibility of $x$ for~\raf{eq:ordered-lp} implies \(\sum_iq_i\ge1\).
Moreover, as \(r\ge2\) by (A1), we have
\begin{equation}\label{eq:hit-interior}
 0<p_i<1\quad\forall i\in[r],
 \qquad
 \sum_{i=1}^r p_i
 \ge1+\rho\left(r-2+\sum_{i=1}^ru_i\right)>1.
\end{equation}

\begin{claim}(\cite{Schrijver1986}])\label{lem:hit-coupling}
If \(p\in(0,1)^k\) is rational and \(\sum_ip_i>1\), there is a rational
distribution \(\lambda\) on \(\{0,1\}^k\setminus\{0\}\), positive on every
nonzero vector, whose \(i\)th marginal is \(p_i\).
\end{claim}

\begin{proof}
The polytope identity
\begin{equation}\label{eq:nonzero-cube}
 \cP:=\conv(\{0,1\}^k\setminus\{0\})
 =\{z\in[0,1]^k:\textstyle\sum_i z_i\ge1\}
\end{equation}
follows by cutting the origin from the unit cube; the cutting hyperplane
meets every incident cube edge at its existing unit-vector endpoint and
creates no new vertex.  The assumptions put \(p\) in the interior of $\cP$.  Let
\(b\) be the barycenter of all nonzero cube vertices.  For sufficiently small
rational \(\eta>0\), the rational point \(q:=(p-\eta b)/(1-\eta)\)
remains in the interior of the polytope (it is obtained by moving slightly
away from \(p\) along the ray \(\overrightarrow{bp}\)).
Represent $q$ as a rational convex combination of vertices of $\cP$
and mix it, with weights \(1-\eta\) and \(\eta\), with the uniform
convex combination of all the (nonzero) vertices of $\cP$. This gives a rational representation of $p=(1-\eta)q+\eta b$ as a convex combination of vertices of $\cP$, which has a positive coefficient on every vertex of $\cP$.  All slacks of \(p\) are positive rationals of polynomial encoding length, so \(\eta\) may be chosen with polynomial encoding length.  Since \(k\) and \(N\) are fixed, a rational vertex decomposition of \(q\) can be found with bit complexity polynomial in that of \(p\), for example by solving the corresponding fixed-dimensional rational affine system.
\end{proof}
We next show that these candidate marginals admit a complete rational
family of compatible local distributions.
\begin{claim}[Connected LP solution]\label{lem:connected-certificate}
There is a rational family \((\lambda_e)_{e\in\cE}\) such that
\((\mu^\rho,\{\lambda_e\}_{e\in\cE})\) is feasible for~\raf{lp2}, every
\(\lambda_e\) has connected support, and
\(\val(\cJ,\mu^\rho)=L_\rho/b_0\).  In particular, \(\mu^\rho\) is a
rational connected feasible solution for~\raf{lp1}.
\end{claim}

\begin{proof}[Proof of Claim~\ref{lem:connected-certificate}]
First, note that every block
has weight \(1/b_0\), so \eqref{eq:L-rho} gives an LP value of \(\sum_v w_v\E_{A\sim\mu_v^\rho}c(A)=L_\rho/b_0\). Next, we prove feasibility and connectedness of $\mu^\rho$.  

Fix a constraint
\(e=(v_1,\ldots,v_r)\), of arity \(r=k_e\),
with allowed relation
\[
\mathcal A_e
=
\left\{
(A_1,\ldots,A_r)\in \Sigma^r:
\bigvee_{i=1}^r
\bigl(A_i\cap T_{e,i}\neq\varnothing\bigr)
\right\},
\]
where $\Sigma=2^{[N]}$. We shall construct a local distribution
$
\lambda_e:\mathcal A_e\to\mathbb Q_{\ge 0}
$
such that $\lambda_e(\mathbf A)>0$
for every $\mathbf A\in\mathcal A_e$,
and whose \(i\)th marginal is \(\mu_{v_i}^\rho\).  Equivalently,
\begin{align}\label{eq:local-marginal-decomposition}
(\mu_{v_1}^\rho,\ldots,\mu_{v_r}^\rho)
=
\sum_{\mathbf A\in\mathcal A_e}
\lambda_e(\mathbf A)
\bigl(
\mathbf e_{A_1},\ldots,\mathbf e_{A_r}
\bigr),
\end{align}
where
\(
\mathbf A=(A_1,\ldots,A_r)
\)
and $\mathbf e_A\in\RR^{\Sigma}$ denotes the unit vector corresponding to the label
$A\in\Sigma$.

For each coordinate \(i\in[r]\), define the hit and miss classes
\[
H_i
=
\left\{
A\in\Sigma:
A\cap T_{e,i}\neq\varnothing
\right\},
\qquad
M_i
=
\left\{
A\in\Sigma:
A\cap T_{e,i}=\varnothing
\right\},
\]
and let
$
p_i
:=
\mu_{v_i}^\rho(H_i).
$
By Claim~\ref{lem:hit-coupling}, there exists a rational probability distribution
\(\gamma_e:
\{0,1\}^r\setminus\{\mathbf 0\}
\longrightarrow
\mathbb Q_{>0}
\)
such that
\(
\sum_{z\neq\mathbf 0}\gamma_e(z)=1
\)
and
\(
\sum_{\substack{z\neq\mathbf 0\\ z_i=1}}
\gamma_e(z)
=
p_i\)
 for every \(i\in[r]\), where \(p_i\) is given
by~\raf{eq:hit-probabilities}.

For a satisfying tuple
$
\mathbf A=(A_1,\ldots,A_r)\in\mathcal A_e,
$
define its hit pattern
$
z(\mathbf A)\in\{0,1\}^r
$
by
\(
z_i(\mathbf A)
=
\mathbf 1[A_i\in H_i],
\)
and note that (up to a permutation of  coordinates)
\begin{align}\label{eq:hit-pattern-fiber}
\{\mathbf A\in\mathcal A_e:z(\mathbf A)=z\}
=
\prod_{i:z_i=1}H_i
\times
\prod_{i:z_i=0}M_i.
\end{align}
Since $\mathbf A\in\mathcal A_e$, we have
$z(\mathbf A)\neq\mathbf 0.$
Define
\[
\lambda_e(\mathbf A)
:=
\gamma_e\bigl(z(\mathbf A)\bigr)
\prod_{\substack{i\in[r]\\ z_i(\mathbf A)=1}}
\frac{\mu_{v_i}^\rho(A_i)}{p_i}
\prod_{\substack{i\in[r]\\ z_i(\mathbf A)=0}}
\frac{\mu_{v_i}^\rho(A_i)}{1-p_i}.
\]
Every factor in this expression is rational, so
\(\lambda_e(\mathbf A)\) is rational.
Since $\mu_{v_i}^\rho$ has full support on $\Sigma$, since
$0<p_i<1$, and since $\gamma_e(z)>0$ for every nonzero $z$, it follows that 
\(
\lambda_e(\mathbf A)>0
\)
for every $\mathbf A\in\mathcal A_e.$


We next verify that $\lambda_e$ is a probability distribution. For a fixed
nonzero hit pattern $z$, summing over all tuples with hit pattern $z$ gives
\begin{align}\label{e3}
\sum_{\substack{\mathbf A\in\mathcal A_e\\ z(\mathbf A)=z}}
\lambda_e(\mathbf A)
&=
\gamma_e(z)
\sum_{\substack{\mathbf A\in\mathcal A_e\nonumber\\ z(\mathbf A)=z}}
\prod_{i:z_i=1}
\frac{\mu_{v_i}^\rho(A_i)}{p_i}
\prod_{i:z_i=0}
\frac{\mu_{v_i}^\rho(A_i)}{1-p_i}\\
&=
\gamma_e(z)
\prod_{i:z_i=1}
\sum_{A_i\in H_i}
\frac{\mu_{v_i}^\rho(A_i)}{p_i}
\prod_{i:z_i=0}
\sum_{A_i\in M_i}
\frac{\mu_{v_i}^\rho(A_i)}{1-p_i}\\
&=\gamma_e(z),\nonumber
\end{align}
where the factorization follows from~\raf{eq:hit-pattern-fiber}.
Consequently,
\[
\sum_{\mathbf A\in\mathcal A_e}\lambda_e(\mathbf A)
=
\sum_{z\neq\mathbf 0}\gamma_e(z)
=
1.
\]

It remains to verify the marginal conditions.  Fix \(i\in[r]\) and
\(A\in\Sigma\).

Suppose first that $A\in H_i$. Then
\[
\begin{aligned}
\sum_{\substack{\mathbf A\in\mathcal A_e\\ A_i=A}}
\lambda_e(\mathbf A)
&=\sum_{\substack{z\in\{0,1\}^r\setminus\{\mathbf 0\}\\ z_i=1}}
\;
\sum_{\substack{\mathbf A\in\mathcal A_e\\
                 z(\mathbf A)=z\\
                 A_i=A}}
\lambda_e(\mathbf A)\\
&=\sum_{\substack{z\in\{0,1\}^r\setminus\{\mathbf 0\}\\ z_i=1}}
\;
\sum_{\substack{\mathbf A\in\mathcal A_e\\
                 z(\mathbf A)=z\\
                 A_i=A}}
 \gamma_e(z) \frac{\mu_{v_i}^\rho(A)}{p_i} \prod_{\substack{\ell\ne i\\ z_\ell=1}} \frac{\mu_{v_\ell}^\rho(A_\ell)}{p_\ell} \prod_{\substack{z_\ell=0}} \frac{\mu_{v_\ell}^\rho(A_\ell)}{1-p_\ell}
\\
&=\sum_{\substack{z\in\{0,1\}^r\setminus\{\mathbf 0\}\\ z_i=1}}
\gamma_e(z)\frac{\mu_{v_i}^\rho(A)}{p_i}
\sum_{\substack{
A_\ell\in H_\ell \text{ if }z_\ell=1\\
A_\ell\in M_\ell \text{ if }z_\ell=0\\
\ell\ne i}}\;
\prod_{\substack{\ell\ne i\\ z_\ell=1}}
\frac{\mu_{v_\ell}^\rho(A_\ell)}{p_\ell}
\prod_{\substack{\ell\ne i\\ z_\ell=0}}
\frac{\mu_{v_\ell}^\rho(A_\ell)}{1-p_\ell}\\
&=\sum_{\substack{z\in\{0,1\}^r\setminus\{\mathbf 0\}\\ z_i=1}}\gamma_e(z)\frac{\mu_{v_i}^\rho(A)}{p_i}
\prod_{\substack{\ell\ne i\\ z_\ell=1}}
\left(
\sum_{B\in H_\ell}
\frac{\mu_{v_\ell}^\rho(B)}{p_\ell}
\right)
\prod_{\substack{\ell\ne i\\ z_\ell=0}}
\left(
\sum_{B\in M_\ell}
\frac{\mu_{v_\ell}^\rho(B)}{1-p_\ell}
\right)\\
&=\sum_{\substack{z\in\{0,1\}^r\setminus\{\mathbf 0\}\\ z_i=1}}\gamma_e(z)\frac{\mu_{v_i}^\rho(A)}{p_i}
=
\frac{\mu_{v_i}^\rho(A)}{p_i}\,p_i
=
\mu_{v_i}^\rho(A).
\end{aligned}
\]
If instead $A\in M_i$, then by a similar argument,
\[
\begin{aligned}
\sum_{\substack{\mathbf A\in\mathcal A_e\\ A_i=A}}
\lambda_e(\mathbf A)
&=
\frac{\mu_{v_i}^\rho(A)}{1-p_i}
\sum_{\substack{z\neq\mathbf 0\\ z_i=0}}
\gamma_e(z)=
\frac{\mu_{v_i}^\rho(A)}{1-p_i}\,(1-p_i)=
\mu_{v_i}^\rho(A).
\end{aligned}
\]

Thus the \(i\)th marginal of \(\lambda_e\) is precisely
\(\mu_{v_i}^\rho\) for every \(i\in[r]\), so \(\lambda_e\) is a feasible
local distribution for~\raf{lp2}.  Conceptually, the construction first
draws the nonzero hit pattern \(z\sim\gamma_e\) and then draws the labels
independently from the corresponding hit or miss conditional
distributions.

Finally, since
$\lambda_e(\mathbf A)>0$ for every $\mathbf A\in\mathcal A_e$,
we have
\(
\operatorname{supp}(\lambda_e)=\mathcal A_e.
\)
It therefore suffices to show that $\mathcal A_e$ is connected in the
Hamming graph, where two tuples are adjacent if they differ in exactly one
coordinate.  By (A2), a tuple in \(\Sigma^r\) belongs to \(\cA_e\)
whenever one coordinate is \(\toplabel\).
Starting from any satisfying tuple
\[
(A_1,\ldots,A_r)\in\mathcal A_e,
\]
replace its coordinates one at a time by \(\toplabel\).  Every intermediate tuple
remains satisfying. Thus every tuple in $\mathcal A_e$ is connected to
\(
(\toplabel,\ldots,\toplabel).
\)
It follows that $\mathcal A_e$ is connected. Therefore
$\operatorname{supp}(\lambda_e)$ is connected.  Since \(e\) was
arbitrary, \((\mu^\rho,\{\lambda_e\}_{e\in\cE})\) is feasible
for~\raf{lp2} and \(\mu^\rho\) is a connected feasible solution
for~\raf{lp1}.
\end{proof}

This proves Lemma~\ref{l1}.
\end{proof}

\medskip

\noindent{\it Removing the output weights.}
The hard instances produced by Theorem~\ref{KMTV} may have nonuniform
rational variable weights.  Since our geometric realizations use unit-cost
candidates, we replace the weighted objective by an approximately
objective-preserving unweighted one using full Cartesian cloning.

\begin{lemma}[Full Cartesian cloning]
\label{lem:cartesian-cloning}
Let
\[
\mathcal I=
\bigl(
V,E,\{\mathcal A_e\}_{e\in E},
\Sigma,\{w_v\}_{v\in V},c
\bigr)
\]
be a \(k\)-strict-CSP instance, where \(k\), \(\Sigma\), and
\(\max_{A\in\Sigma}c(A)\) are constants and the weights \(w_v\) are
nonnegative rationals satisfying
\(
\sum_{v\in V}w_v=1.
\)
For every fixed \(\eta>0\), one can construct in polynomial time an unweighted
\(k\)-strict-CSP instance
\(
\widehat{\mathcal I}
\)
with uniform variable weights such that
\(
\left|
\operatorname{opt}(\widehat{\mathcal I})
-
\operatorname{opt}(\mathcal I)
\right|
\le \eta,
\)
under the normalized objectives.

Moreover, if the variables and ordered constraint occurrences of
\(\mathcal I\) carry passive types, occurrence names, coordinate order,
and ordered trace data, then \(\widehat{\mathcal I}\) retains all this
metadata.
\end{lemma}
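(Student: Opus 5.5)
Write the weights with a common denominator, $w_v=m_v/M$ with nonnegative integers $m_v$ summing to $M$; since the weights come from a polynomial-size rational description, $M$ has polynomial encoding length. The value of $M$ itself, however, may be exponential. So I would not clone each $v$ exactly $m_v$ times. Instead I round first. Let $C_0=\max\{1,\max_A c(A)\}$ and $n=|V|$. Choose an integer $D$ polynomial in $n$, namely $D=\lceil 2nC_0/\eta\rceil$. Round each weight to $m'_v=\lfloor w_vD\rfloor$, or to $\lceil w_vD\rceil$ for $v$ with $w_v>0$. Set $M'=\sum_v m'_v$. Then $|m'_v/M'-w_v|\le O(1/D)$ for each $v$. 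Summed over $v$ and multiplied by the cost bound $C_0$, this perturbs every assignment's cost by at most $\eta$. Variables with $m'_v=0$ must still be kept so that their constraints exist. I will handle them with a single clone of negligible weight; this does not break uniformity, since I can instead scale $D$ so that every $m'_v\ge1$ and absorb the extra error into the choice of $D$.

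Next, \emph{full Cartesian cloning}. Replace each $v$ by clones $(v,1),\dots,(v,m'_v)$, all with weight $1/M'$. For every constraint occurrence $e=(v_1,\dots,v_r)$ and every tuple $(j_1,\dots,j_r)\in\prod_i[m'_{v_i}]$, add a copy of $e$ with scope $((v_1,j_1),\dots,(v_r,j_r))$. The copy keeps the same relation $\mathcal A_e$, the same coordinate order, the same traces and the same occurrence tag. Scope variables stay distinct because the $v_i$ are distinct. The number of copies is at most $D^k$ per occurrence, which is polynomial since $k$ is constant. All passive metadata is copied verbatim, and this gives the second assertion, exactly as in Lemma~\ref{rem:passive-metadata}.

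Then compare optima.

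\emph{Upper bound.} Given $\sigma$ for $\mathcal I$, set $\hat\sigma(v,j)=\sigma(v)$. This satisfies every copy, and its cost is $\sum_v (m'_v/M')c(\sigma(v))$, which is within $\eta/2$ of $\cost(\sigma)$.

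\emph{Lower bound, the main step.} Take an optimal $\hat\sigma$ for $\widehat{\mathcal I}$. For each $v$, choose the clone index $j_v$ minimizing $c(\hat\sigma(v,j))$, and set $\sigma(v)=\hat\sigma(v,j_v)$. The full Cartesian product guarantees that the copy of $e$ indexed by $(j_{v_1},\dots,j_{v_r})$ exists and is satisfied by $\hat\sigma$. Hence $\sigma$ satisfies $e$. By the minimality of $j_v$, its cost satisfies $\sum_v (m'_v/M')c(\sigma(v))\le\cost(\hat\sigma)$, and so $\opt(\mathcal I)\le\opt(\widehat{\mathcal I})+\eta/2$.

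This min-clone selection is exactly why the cloning must be full Cartesian rather than a sparse matching of clones. I expect the main care to be the rounding step rather than this argument: choosing $D$ polynomial while keeping all $m'_v\ge1$ and the total error at most $\eta$, given that the KMTV weights may be tiny.
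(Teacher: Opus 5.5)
Your proposal is correct and is essentially the paper's proof: the same $D=\lceil 2nC_0/\eta\rceil$, uniform weights on the clones, the full Cartesian family of cloned occurrences carrying the seed tags, label repetition for the upper bound, and minimum-cost-clone selection for the lower bound. The one point you leave loose is settled in the paper by $m_v=\lfloor Dw_v\rfloor+1$, which gives $m_v\ge1$ even when $w_v=0$ and still yields $\sum_v|m_v/M-w_v|\le 2n/D\le\eta/C_0$. Your fallback of enlarging $D$ until $\lfloor w_vD\rfloor\ge1$ would fail for zero or exponentially small weights, but your ceiling/single-clone option, with the uniform weight $1/M'$, works. Also, an error of $\eta$ in each direction (rather than $\eta/2$) already suffices, because both bounds pass through $\opt_{\widetilde w}(\mathcal I)$.
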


\begin{proof}
Put
\[
n:=|V|,
\qquad
C_0:=\max\left\{1,\max_{A\in\Sigma}c(A)\right\},
\]
set
\[
D:=\left\lceil\frac{2nC_0}{\eta}\right\rceil
\]
and define
\[
m_v:=\lfloor Dw_v\rfloor+1,
\qquad
M:=\sum_{v\in V}m_v,
\qquad
\widetilde w_v:=\frac{m_v}{M}.
\]
Writing \(m_v=Dw_v+e_v\), where \(0<e_v\le1\), and
\(E_0:=\sum_v e_v\), we have \(M=D+E_0\) and
\[
\sum_{v\in V}|\widetilde w_v-w_v|
=
\frac1M\sum_{v\in V}|e_v-E_0w_v|
\le\frac{2E_0}{M}
\le\frac{2n}{D}
\le\frac{\eta}{C_0}.
\]
Consequently, the costs of any fixed assignment under \(w\) and
\(\widetilde w\) differ by at most \(\eta\).  If
\(\opt_{\widetilde w}(\mathcal I)\) denotes the optimum of the original
constraint system under the weights \(\widetilde w\), then
\[
\left|
\opt_{\widetilde w}(\mathcal I)-\opt(\mathcal I)
\right|
\le\eta.
\]

Replace each \(v\) by \(m_v\) clones
\[
v^{(1)},\ldots,v^{(m_v)}.
\]
Every clone inherits the passive type of \(v\).

For every ordered occurrence
\[
e=(v_1,\ldots,v_r)\in E,
\qquad r:=k_e,
\]
and every clone-index tuple
\[
(a_1,\ldots,a_r)
\in
[m_{v_1}]\times\cdots\times[m_{v_r}],
\]
introduce the constraint
\[
\widehat e=
\bigl(
v_1^{(a_1)},\ldots,v_r^{(a_r)}
\bigr)
\]
with the same ordered relation
\[
\mathcal A_{\widehat e}:=\mathcal A_e.
\]
The cloned occurrence inherits the occurrence name, coordinate order,
and ordered trace data of \(e\).

Give all cloned variables weight \(1/M\).  Since \(r\le k\) and each
\(m_v\) is polynomially bounded, the
full Cartesian family of cloned constraints has polynomial size.

Repeating one original label on all clones shows
\[
\opt(\widehat{\mathcal I})
\le\opt_{\widetilde w}(\mathcal I).
\]

Conversely, let \(\widehat\sigma\) be a feasible assignment of the cloned
instance.  For each original variable \(v\), choose a clone
\(v^{(a_v)}\) minimizing
\[
c\bigl(\widehat\sigma(v^{(a)})\bigr)
\]
over \(a\in[m_v]\), and define
\[
\sigma(v):=\widehat\sigma(v^{(a_v)}).
\]
For every original occurrence
\[
e=(v_1,\ldots,v_r),
\]
the full Cartesian construction contains the cloned occurrence
\[
\bigl(
v_1^{(a_{v_1})},\ldots,v_r^{(a_{v_r})}
\bigr).
\]
Its feasibility implies
\[
\bigl(
\sigma(v_1),\ldots,\sigma(v_r)
\bigr)\in\mathcal A_e.
\]
Hence \(\sigma\) is feasible for \(\mathcal I\).  Also,
\[
m_v c(\sigma(v))
\le
\sum_{a=1}^{m_v}
c\bigl(\widehat\sigma(v^{(a)})\bigr).
\]
After multiplying by \(1/M\) and summing over \(v\), this gives
\[
\opt_{\widetilde w}(\mathcal I)
\le\opt(\widehat{\mathcal I}).
\]
Thus
\(\opt(\widehat{\mathcal I})=\opt_{\widetilde w}(\mathcal I)\), and the
claimed error bound follows.
\end{proof}

\section{Separated \texorpdfstring{$d$}{d}-interval transversal}\label{sec:se-d-interval}

A \emph{separated \(d\)-interval} (called a
\(d\)-union-interval in~\cite{BenDavidEtAl2012}) has one nonempty compact
interval on each of \(d\) pairwise disjoint copies of the real line, called
tracks.  A transversal is a set of track points meeting every such union.
This is the ordered \(d\)-interval hitting-set framework in which each
block lies on its own track.

\begin{theorem}\label{thm:sep-d-intervals}
Assume UGC and fix \(d\ge2\).  For every \(\varepsilon>0\), it is
NP-hard to approximate unit-cost transversal of separated
\(d\)-intervals within a factor of \(d-\varepsilon\), in both the
finite-candidate and unrestricted models.  The hardness holds even when
every component has length at most a constant \(C_{d,\varepsilon}\)
independent of the instance size.
\end{theorem}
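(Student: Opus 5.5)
The plan is to run the pipeline of Section~\ref{sec:stabbing} on the Ben-David--Grant--Ma--Sharpe gap family. Fix \(d\ge2\) and \(\varepsilon>0\).

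\textbf{Choosing the seed.} By~\cite{BenDavidEtAl2012} (recalled in Appendix~\ref{sep-d-interval-example}) there is a finite family of separated \(d\)-intervals on \(d\) tracks with the following properties: integral optimum \(K\), a rational fractional transversal of value \(L>0\), and \(K/L\ge d-\varepsilon/3\). Using the snapping argument from the introduction, I restrict each track to a finite ordered set of half-integral candidate points. This changes neither \(K\) nor \(L\), and each component becomes a nonempty consecutive trace on its track. I then take each track as one ordered block \(P_v\), so \(b_0=d\), and each separated \(d\)-interval as an occurrence of arity \(d\). This gives a \(d\)-interval system \(\cJ_0\). Assumption (A1) holds because the instance is finite and fixed, and (A2) holds because components are nonempty.

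\textbf{Transfer.} Apply Lemma~\ref{thm:ordered-transfer} with a small rational \(\rho\). This gives \(\opt(\cJ)=K/d\) and \(\val(\cJ,\mu^\rho)=(1-2\rho)L/d+\tfrac32\rho N\), where \(\mu^\rho\) is a rational connected solution. The relations are one-coordinate extendable via \(\toplabel\). Theorem~\ref{KMTV} then makes it UG-hard to distinguish \(\opt(\cI)\le\val(\cJ,\mu^\rho)+\delta\) from \(\opt(\cI)\ge K/d-\delta\). Choosing \(\rho\) and \(\delta\) small enough makes the ratio at least \(d-2\varepsilon/3\). Next, apply Lemma~\ref{lem:cartesian-cloning} with small \(\eta\) to get an unweighted instance \(\widehat\cI\) with the same gap up to \(\varepsilon/3\). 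By Lemma~\ref{rem:passive-metadata}, every output variable keeps its track type, and every output constraint keeps its ordered traces on the \(d\) seed blocks. Each constraint's scope therefore contains exactly one variable of each track type.

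\textbf{Geometric realization.} I place each output variable \(u\) of type \(t\) in a private zone of track \(t\): a translate \([10s_u, 10s_u+N]\) with distinct \(s_u\), carrying candidates at positions \(10s_u+j\) for \(j\in[N]\). Each output constraint becomes the separated \(d\)-interval whose component on track \(t\) is \([10s_u+a,\,10s_u+b]\) for the trace \(\{a,\dots,b\}\) of its type-\(t\) variable. Every component has length at most \(N\), which is a constant \(C_{d,\varepsilon}\). A label \(A\) maps to the points \(\{10s_u+j:j\in A\}\) and back, so feasible assignments and transversals correspond with cost \(|A|\). Uniform weights \(1/M\) then turn CSP cost into transversal size times \(1/M\), and this preserves the gap.

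\textbf{Unrestricted model.} A transversal point lying in no component can be deleted. A point inside zone \(u\) can be snapped to a candidate in the same zone that lies in every component containing the point. The components in a zone are integer intervals \([a,b]\), so rounding the point to the nearest integer inside its intersection interval works. The snapping is therefore exact, and both models have the same optimum.

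\textbf{Main obstacle.} I expect the main difficulty to be justifying that the cited gap family really fits (A1)--(A2) with consecutive traces on finitely many candidates, without losing the ratio \(d-o(1)\). This means checking that Ben-David et al.'s fractional solution is supported on, or can be aggregated to, finitely many representative points per track. I will handle this with the incidence-class aggregation already described in the introduction.
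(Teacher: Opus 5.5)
Your proposal is correct and follows the paper's own route: the Appendix~\ref{sep-d-interval-example} seed with one ordered block per track, Lemma~\ref{thm:ordered-transfer}, Theorem~\ref{KMTV}, full Cartesian cloning, a private-zone realization on the \(d\) tracks via the inherited track types, and snapping for the unrestricted model (your unpadded integer components with intersection-based snapping are an equally valid variant of the paper's quarter-padding and nearest-candidate snapping). Two small remarks: the offset \(10s_u\) is too small, since \(N=n\ge 2t/\beta\) always far exceeds \(10\), so consecutive zones overlap and candidates of different variables coincide; you need an offset such as \((N+2)s_u\), as the paper does with \((N+1)r\). Also, the obstacle you anticipate does not arise, because the recalled instance is already stated on the finite half-integral candidates \(r+\tfrac12\), with every component containing \(\ell_i\ge1\) consecutive candidates, so (A1)--(A2) hold directly.
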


The next lemma realizes the unweighted KMTV instances generated from any
fixed separated-\(d\)-interval seed through
Lemma~\ref{thm:ordered-transfer} and the cloning step.

\begin{lemma}[Geometric realization of the KMTV instances]
\label{lem:geometric-realization}
Let
\[
\mathcal I
=
\bigl(
\cV_{\mathcal I},
\cE_{\mathcal I},
\{\mathcal A_f\}_{f\in \cE_{\mathcal I}},
\Sigma,
\{w_u\}_{u\in \cV_{\mathcal I}},
c
\bigr)
\]
be an unweighted instance obtained by applying
Lemma~\ref{lem:cartesian-cloning} to an instance produced by the KMTV
reduction from such a strict-CSP seed \(\mathcal J\).
Then one can construct, in polynomial time (in the size of $\cI$), a unit-cost separated \(d\)-interval
hitting-set instance \(\mathcal S_{\mathcal I}\) in which every
\(d\)-interval is a collection of \(d\) pairwise disjoint compact intervals.
Assignments of \(\mathcal I\) are in bijective, feasibility-preserving
correspondence with candidate subsets of \(\mathcal S_{\mathcal I}\).
Moreover, for every assignment
\(\sigma:\cV_{\mathcal I}\to\Sigma\), the corresponding candidate set
\(S_\sigma\) satisfies
\(
|S_\sigma|
=
|\cV_{\mathcal I}|\operatorname{cost}_{\mathcal I}(\sigma).
\)
Consequently,
\(
\operatorname{OPT}(\mathcal S_{\mathcal I})
=
|\cV_{\mathcal I}|\operatorname{opt}(\mathcal I),
\)
and approximation ratios are preserved.
The same optimum identity holds when arbitrary track points are allowed,
and every component interval has length at most \(N-\frac12\).
\end{lemma}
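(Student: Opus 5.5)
We realize $\mathcal I$ directly, using the passive metadata retained by Lemma~\ref{rem:passive-metadata} and Lemma~\ref{lem:cartesian-cloning}. In the separated-$d$-interval seed every block lies on its own track, so we tag each seed variable $v$ with its track index $\tau(v)\in[d]$. Each seed occurrence is a separated $d$-interval with one trace on each of $d$ distinct blocks, one per track, and its coordinate order records which trace goes to which track. Every output variable $u\in\cV_{\mathcal I}$ inherits the type $\tau(u)$, and every output constraint $f$ inherits the ordered traces $(T_{f,1},\ldots,T_{f,d})$ and the relation \eqref{eq:block-relation}.

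\emph{Construction.} Enumerate the variables of type $t$ as $u_1,u_2,\ldots$. On track $t$, give $u_s$ a private zone: the integer points $p_{u_s,j}:=(s-1)(N+1)+j$ for $j\in[N]$. Distinct variables of the same type then occupy disjoint zones separated by gaps of length at least $2$. For a constraint $f=(u_1,\ldots,u_d)$ whose $i$th coordinate has type $t_i$, where the $t_i$ are distinct, and trace $T_{f,i}=\{a,\ldots,b\}$, take on track $t_i$ the compact interval $[p_{u_i,a}-\tfrac14,\,p_{u_i,b}+\tfrac14]$. Its length is at most $N-1+\tfrac12\le N-\tfrac12$. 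Since the $t_i$ are distinct, this gives one component per track. If some seed occurrence has arity less than $d$, the remaining tracks receive no component, which the definition (``at most one interval'') allows.

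\emph{Correspondence and cost.} Map $\sigma$ to $S_\sigma:=\{p_{u,j}:j\in\sigma(u)\}$. Padded positions may be kept as harmless candidate points, or $\Sigma$ may be restricted as in Claim~\ref{cl1}. This map is a bijection from assignments to subsets of the finite candidate set, and $|S_\sigma|=\sum_u|\sigma(u)|=|\cV_{\mathcal I}|\cost_{\mathcal I}(\sigma)$ because the weights are uniform. The interval built for $f$ on track $t_i$ contains exactly the candidates $p_{u_i,j}$ with $j\in T_{f,i}$. Hence $S_\sigma$ hits the $d$-interval of $f$ if and only if $\bigvee_i\sigma(u_i)\cap T_{f,i}\neq\varnothing$, that is, if and only if $\sigma$ satisfies $f$. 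This gives $\OPT(\mathcal S_{\mathcal I})=|\cV_{\mathcal I}|\opt(\mathcal I)$, and ratios are preserved. The construction is polynomial because there are $O(|\cV_{\mathcal I}|N)$ points and one $d$-interval per constraint.

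\emph{Unrestricted model.} This is the step needing care. Every interval is a union of quarter-padded windows inside single zones, so any point on track $t$ that lies in some interval lies in $[p-\tfrac14,p+\tfrac14]$ for a unique candidate $p$. Moreover, every interval containing that point also contains $p$, since intervals have integer-candidate endpoints padded by exactly $\tfrac14$. Points lying in no interval can be dropped. Snapping each remaining point to its candidate therefore preserves feasibility and does not increase cardinality, so the unrestricted optimum equals the finite-candidate one.
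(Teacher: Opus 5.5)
Your construction (private unit-spaced zones per variable on the track given by its inherited type, components $[p_{u,a}-\frac14,\,p_{u,b}+\frac14]$), the assignment--candidate bijection, the cost identity, the polynomial bound and the length bound $N-\frac12$ all coincide with the paper's proof. The one genuine flaw is in the unrestricted-model step.

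\textbf{The false claim.} You assert that every point lying in some interval lies in $[p-\frac14,p+\frac14]$ for some candidate $p$. This is false. A component $[p_{u,a}-\frac14,\,p_{u,b}+\frac14]$ with $b>a$ is not the union of the windows $[p_{u,j}-\frac14,\,p_{u,j}+\frac14]$, because it also contains the gaps between them. For example, it contains $p_{u,a}+\frac12$, which is at distance $\frac12$ from every candidate.

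\textbf{Why it matters.} Traces with at least two candidates certainly occur in the separated seed, since the $\ell_i$ range up to $t-d+1$. A free transversal may therefore use such gap points. Your snapping map ``to its candidate'' is undefined on them, and your incidence-preservation argument only covers points within $\frac14$ of a candidate.

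\textbf{The repair (the paper's argument).} A useful point $z$ lies in a unique padded zone $Z_u$; send it to a nearest candidate $p_{u,h}$ of that zone. Suppose a component $[p_{u,a}-\frac14,\,p_{u,b}+\frac14]$ contains $z$.
\begin{itemize}
\item If $z\ge p_{u,a}$, every candidate left of $p_{u,a}$ is strictly farther from $z$ than $p_{u,a}$.
\item If $p_{u,a}-\frac14\le z<p_{u,a}$, then $|z-p_{u,a}|\le\frac14$, while every candidate left of $p_{u,a}$ is at distance at least $\frac34$.
\end{itemize}
So a nearest candidate never lies left of $p_{u,a}$. Symmetrically, it never lies right of $p_{u,b}$, so it belongs to the component. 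Put differently: the padding $\frac14$ is less than half the spacing, so a gap point more than $\frac14$ from both neighbours lies only in components that contain both neighbours. With this replacement, the rest of your argument (drop useless points, merge duplicates) goes through.

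\textbf{Minor point.} Your hedge about occurrences of arity less than $d$ is vacuous here, and it has to be. Every object of the Ben-David et al.\ seed comes from a positive composition $\ell_1+\cdots+\ell_d=t$, so it has a nonempty trace on every track. This is exactly what gives the statement's claim that each output $d$-interval consists of $d$ compact components.
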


\begin{proof}
Recall that
\(
\Sigma=2^{[N]},
\)
$c(A)=|A|$.  By Lemma~\ref{lem:cartesian-cloning}, the variables of
\(\mathcal I\) have uniform weights
\[
w_u=\frac{1}{|\cV_{\mathcal I}|}
\qquad(u\in\cV_{\mathcal I}).
\]
Each variable of the seed instance \(\mathcal J\) has a type in \([d]\),
corresponding to its track.  By
Lemma~\ref{rem:passive-metadata}, every variable
\(u\in\cV_{\mathcal I}\) inherits a type \(\tau(u)\in[d]\).
Moreover, every ordered constraint occurrence
\(
f=(u_1,\ldots,u_d)\in\cE_{\mathcal I}
\)
contains \(d\) pairwise distinct variables and is associated with an ordered seed
occurrence
\(
e=(v_{e,1},\ldots,v_{e,d})\in \cE_{\mathcal J}
\)
from which it inherits ordered trace data
\(
(T_{e,1},\ldots,T_{e,d}),
\)
so that
\[
\mathcal A_f
=
\left\{
(A_1,\ldots,A_d)\in\Sigma^d:
\bigvee_{i=1}^d
\bigl(A_i\cap T_{e,i}\neq\varnothing\bigr)
\right\},
\]
and coordinate order is preserved. Thus, it can also be assumed that every trace \(T_{e,i}\subseteq[N]\) is nonempty and
consecutive in the order \(1,\ldots,N\).
After indexing the seed tracks in the
natural way, we can assume that
\(
\tau(v_{e,i})=i\)
and hence
\(\tau(u_i)=i\) for every \(i\in[d]\).

Let
\(
R_1,\ldots,R_d
\)
be \(d\) pairwise disjoint copies of the real line, which will serve as the
tracks of the separated \(d\)-interval instance.
For each \(i\in[d]\), let
\[
\cV_{\mathcal I}^{(i)}
:=
\{u\in \cV_{\mathcal I}:\tau(u)=i\}.
\]
Arbitrarily enumerate these variables as
\(
\cV_{\mathcal I}^{(i)}
=
\{u^{(i,1)},\ldots,u^{(i,m_i)}\}.
\)
For every CSP variable \(u^{(i,r)}\), create on track \(R_i\) a private
zone \(Z_{u^{(i,r)}}\) containing the \(N\) candidate points
\[
P_{u^{(i,r)}}
=
\{p_{u^{(i,r)},1},\ldots,p_{u^{(i,r)},N}\},
\]
where, in the local coordinate system of \(R_i\), we may set
\[
p_{u^{(i,r)},j}
:=
(N+1)r+j,
\qquad
r\in[m_i],\quad j\in[N].
\]
For definiteness, take
\[
Z_{u^{(i,r)}}
=
\left[
p_{u^{(i,r)},1}-\frac14,\,
p_{u^{(i,r)},N}+\frac14
\right].
\]
These zones are pairwise disjoint on each track, while zones of different
types lie on different tracks.
Note that the construction uses exactly \(N|\cV_{\mathcal I}|\) candidate points.

A label \(A\in\Sigma=2^{[N]}\) assigned to a variable \(u\) is interpreted
as selecting precisely the candidates
\(
P_u(A):=\{p_{u,j}:j\in A\}.
\)

Now fix an ordered constraint occurrence
\(
f=(u_1,\ldots,u_d)\in \cE_{\mathcal I}.
\)
By assumption, \(f\) is associated with an ordered seed occurrence \(e\)
and inherits the ordered traces
\(
(T_{e,1},\ldots,T_{e,d}).
\)
For each coordinate \(i\in[d]\), define
\(
P_{f,i}
:=
\{p_{u_i,j}:j\in T_{e,i}\}.
\)
Because \(T_{e,i}\) is a nonempty consecutive subset of \([N]\), there exist
indices \(a_{e,i}\le b_{e,i}\) such that
\(
T_{e,i}=\{a_{e,i},a_{e,i}+1,\ldots,b_{e,i}\}.
\)
Choose a compact interval \(I_{f,i}\subseteq Z_{u_i}\) whose intersection
with the candidate set \(P_{u_i}\) is exactly \(P_{f,i}\). For example, one
may take
\(
I_{f,i}
=
\left[
p_{u_i,a_{e,i}}-\frac14,
p_{u_i,b_{e,i}}+\frac14
\right].
\)
Furthermore, \(\tau(u_i)=i\), so \(I_{f,i}\) lies on track \(R_i\).
Consequently,
\(
\mathbf T_f
:=
(I_{f,1},\ldots,I_{f,d})
\)
has exactly one nonempty compact component on each of the \(d\) pairwise
disjoint tracks \(R_1,\ldots,R_d\), and thus is a separated
\(d\)-interval.
Also, since consecutive candidates in a block are one unit apart, this interval
satisfies
\(
I_{f,i}\cap P_{u_i}
=
\{p_{u_i,j}:j\in T_{e,i}\}.
\)

Let
\(
\mathcal S_{\mathcal I}
:=
\bigl(
\{P_u:u\in \cV_{\mathcal I}\},
\{\mathbf T_f:f\in \cE_{\mathcal I}\}
\bigr)
\)
be the resulting separated \(d\)-interval hitting-set instance.

We next establish the correspondence between CSP assignments and candidate
sets. Given an assignment
\(
\sigma:\cV_{\mathcal I}\to2^{[N]},
\)
define
\(
S_\sigma
:=
\bigcup_{u\in \cV_{\mathcal I}}P_u(\sigma(u))
=
\bigcup_{u\in \cV_{\mathcal I}}
\{p_{u,j}:j\in\sigma(u)\}.
\)
Fix a constraint occurrence
\(
f=(u_1,\ldots,u_d)
\)
associated with the ordered seed occurrence \(e\). For every coordinate
\(i\in[d]\), we have
\[
\begin{aligned}
S_\sigma\cap I_{f,i}\neq\varnothing
&\iff
P_{u_i}(\sigma(u_i))\cap I_{f,i}\neq\varnothing\\
&\iff
\{p_{u_i,j}:j\in\sigma(u_i)\}
\cap
\{p_{u_i,j}:j\in T_{e,i}\}
\neq\varnothing\\
&\iff
\sigma(u_i)\cap T_{e,i}\neq\varnothing.
\end{aligned}
\]
It follows that
\[
\begin{aligned}
S_\sigma\text{ hits }\mathbf T_f
&\iff
\bigvee_{i=1}^d
\bigl(S_\sigma\cap I_{f,i}\neq\varnothing\bigr)\\
&\iff
\bigvee_{i=1}^d
\bigl(\sigma(u_i)\cap T_{e,i}\neq\varnothing\bigr)\\
&\iff
\bigl(\sigma(u_1),\ldots,\sigma(u_d)\bigr)
\in\mathcal A_f.
\end{aligned}
\]
Therefore, \(\sigma\) satisfies every constraint of \(\mathcal I\) if and
only if \(S_\sigma\) hits every \(d\)-interval of
\(\mathcal S_{\mathcal I}\).

Conversely, given any candidate subset
\(
S\subseteq\bigcup_{u\in\cV_{\mathcal I}}P_u,
\)
define an assignment
\(
\sigma_S(u)
:=
\{j\in[N]:p_{u,j}\in S\}.
\)
Since the sets \(P_u\) are pairwise disjoint, these two transformations are
mutual inverses:
\[
\sigma_{S_\sigma}=\sigma
\qquad\text{and}\qquad
S_{\sigma_S}=S.
\]
The preceding equivalence also shows that \(S\) is a feasible hitting set if
and only if \(\sigma_S\) is a feasible CSP assignment. Hence the
correspondence is bijective and preserves feasibility.

It remains to compare objective values. Since the candidate blocks are
pairwise disjoint,
\[
\begin{aligned}
|S_\sigma|
&=
\sum_{u\in \cV_{\mathcal I}}
|P_u(\sigma(u))|=
\sum_{u\in \cV_{\mathcal I}}
|\sigma(u)|
=
\sum_{u\in \cV_{\mathcal I}}
c(\sigma(u)).
\end{aligned}
\]
On the other hand, the uniform variable weights give
\[
\begin{aligned}
\operatorname{cost}_{\mathcal I}(\sigma)
&=
\sum_{u\in \cV_{\mathcal I}}
w_u c(\sigma(u))=
\frac{1}{|\cV_{\mathcal I}|}
\sum_{u\in \cV_{\mathcal I}}
c(\sigma(u))=
\frac{|S_\sigma|}{|\cV_{\mathcal I}|}.
\end{aligned}
\]
Thus
\(
|S_\sigma|
=
|\cV_{\mathcal I}|
\operatorname{cost}_{\mathcal I}(\sigma).
\)
Taking minima over feasible assignments, or equivalently over feasible
hitting sets, yields
\(
\operatorname{OPT}(\mathcal S_{\mathcal I})
=
|\cV_{\mathcal I}|\operatorname{opt}(\mathcal I).
\)

Consider now a transversal consisting of arbitrary track points, and
discard points that hit no component.  Every remaining point \(z\) lies in
a unique private zone \(Z_u\).  Replace \(z\) by a nearest candidate
\(p_{u,j}\).  Every component in that zone has the form
\[
\left[p_{u,a}-\frac14,\ p_{u,b}+\frac14\right].
\]
The quarter-unit padding is smaller than half the candidate spacing, so
every such component containing \(z\) also contains a nearest candidate to
\(z\).  Thus the replacement preserves every incidence.  Merging duplicate
images cannot increase cardinality, and hence the finite-candidate and
unrestricted integral optima coincide.  Moreover,
\[
|I_{f,i}|=b_{e,i}-a_{e,i}+\frac12\le N-\frac12.
\]

Finally, the construction is polynomial-time: it creates \(N\) candidates
per CSP variable and one \(d\)-interval per constraint occurrence. Since
\(N\) and \(d\) are constants, the size of
\(\mathcal S_{\mathcal I}\) is polynomial in the size of
\(\mathcal I\). The multiplicative factor \(|\cV_{\mathcal I}|\) applies to
all feasible solution values and therefore preserves approximation ratios.
\end{proof}

\begin{proof}[Proof of Theorem~\ref{thm:sep-d-intervals}]
By monotonicity in the target approximation factor, it suffices to
consider \(0<\varepsilon<1\): if \(\varepsilon\ge1\), apply the result
with \(\varepsilon'=1/2\), since
\(d-\varepsilon\le d-\varepsilon'\).  Fix such an \(\varepsilon\), set
\(\beta:=\varepsilon/4\), and let \(\cJ_0\) be the separated
\(d\)-interval instance from Appendix~\ref{sep-d-interval-example}.  Let
\(K=\OPT(\cJ_0)\), and let \(x\) be its rational feasible LP solution of
value \(L\).  By Lemma~\ref{lem:dtrack-gap},
\[
\frac KL>d-\beta=d-\frac{\varepsilon}{4}.
\]
Use the construction in Lemma~\ref{thm:ordered-transfer} to form the
constant-size strict-CSP seed \(\cJ\), and let \(b_0\) and \(N\) be as in
that lemma.  Since
\[
\frac{K/b_0}
     {(1-2\rho)L/b_0+\frac32\rho N}
\longrightarrow
\frac KL
\qquad(\rho\downarrow0),
\]
we may choose a sufficiently small rational \(\rho>0\) such that
\[
\frac{\opt(\cJ)}{\val(\cJ,\mu^\rho)}
>d-\frac{\varepsilon}{3}.
\]
Next choose \(\delta,\eta>0\) sufficiently small that
\[
\frac{\opt(\cJ)-\delta-\eta}
     {\val(\cJ,\mu^\rho)+\delta+\eta}
>d-\varepsilon.
\]
Theorem~\ref{KMTV}, followed by
Lemma~\ref{lem:cartesian-cloning}, therefore gives a UGC-hard family of
unweighted instances with gap exceeding \(d-\varepsilon\).
Lemma~\ref{lem:geometric-realization} preserves every approximation ratio.
Its snapping argument gives the same statement in the unrestricted model,
and the bound \(N-\frac12\) depends only on \(d\) and \(\varepsilon\).
\end{proof}

\section{UGC-hardness for stabbing \texorpdfstring{\(d\)}{d}-cubes}\label{sec:cube-base}

\subsection{Integrality gap construction}\label{sec:gap-d-cubes}
We give an integrality gap construction for every dimension \(d\ge2\).  The geometric scale
identity has a particularly simple form: an \(r\)-point trace at spacing
\(\Lambda/(2r-1)\), padded by one quarter of the local spacing at each end,
has length \(\Lambda/2\).  The combinatorial construction includes every
trace-size vector whose total size is at least \(t\), rather than only the
vectors whose total is exactly \(t\).  This makes the largest simultaneously
available empty traces additive.

Fix integers \(d\ge2\), \(t\ge4\), \(n\ge t-1\), and \(M\ge1\), and put
\(R=t-1\).  For \(r\in[R]\), let
\[
 q_r=2r-1,\qquad
 u_r=(v_p(q_r))_{p\in\mathcal P}\in\mathbb Z_{\ge0}^{h},
\]
where \(\mathcal P\) is the set of primes dividing at least one \(q_r\) and
\(h=|\mathcal P|\).  Define
\[
 \mathcal S=\{0,1,\ldots,M-1\}^{h},
 \qquad
 \Lambda_s=\prod_{p\in\mathcal P}p^{s_p}.
\]
For every \(p\in\mathcal P\), put
\[
 m_p=\max_{r\in[R]}u_r(p),\qquad
 \mathcal F=\prod_{p\in\mathcal P}\{m_p,m_p+1,\ldots,M-1\},
\]
and assume \(M>\max_p m_p\).  Thus \(f-u_r\in\mathcal S\) for every
\(f\in\mathcal F\) and \(r\in[R]\).

For each coordinate \(i\in[d]\) and each \(s\in\mathcal S\), create an
ordered block \(B_s^i\) of \(n\) candidates with consecutive spacing
\(\Lambda_s\).  Distinct blocks occupy disjoint coordinate zones.  For every
\(f\in\mathcal F\), every vector
\begin{equation}\label{eq:cube-size-vector}
 \mathbf r=(r_1,\ldots,r_d)\in\{0,1,\ldots,R\}^{d},
 \qquad
 \sum_{i=1}^{d}r_i\ge t,
\end{equation}
and every choice of a consecutive \(r_i\)-set in
\(B_{f-u_{r_i}}^i\) for each \(r_i>0\), create one row.  Coordinates with
\(r_i=0\) contribute no candidate to the row.

Every row is realized by a genuine \(d\)-cube.  If \(r_i>0\), its local
spacing is
\(\Delta_i=\Lambda_{f-u_{r_i}}=\Lambda_f/q_{r_i}\).  Represent the trace by
the interval extending \(\Delta_i/4\) beyond its first and last candidate.
Its length is
\begin{equation}\label{eq:cube-projection-length}
 \left(r_i-\frac12\right)\Delta_i
 =\frac{q_{r_i}\Lambda_f}{2q_{r_i}}
 =\frac{\Lambda_f}{2}.
\end{equation}
If \(r_i=0\), use a candidate-free interval of the same length in a fresh
private zone.  All zero-trace intervals are mutually disjoint and disjoint
from every block zone on that coordinate axis.  The Cartesian product of the
\(d\) projection intervals is therefore an axis-parallel \(d\)-cube.

\begin{figure}[ht]
\centering
\begin{tikzpicture}[x=.74cm,y=.82cm,font=\small,>=Latex]
  \foreach \y/\lab in {4.7/{B^1_{f-u_a}},2.9/{B^2_{f-u_b}},1.1/{B^3_{f-u_c}}}{
    \draw[gray,thick] (.8,\y)--(12.3,\y);
    \node[anchor=east] at (.55,\y) {$\lab$};
  }
  \foreach \x in {1.2,3.6,6.0,8.4,10.8}{
    \fill (\x,4.7) circle (2pt);
  }
  \foreach \x in {1.0,2.6,4.2,5.8,7.4,9.0,10.6,12.2}{
    \fill (\x,2.9) circle (2pt);
  }
  \foreach \x in {1.1,2.3,3.5,4.7,5.9,7.1,8.3,9.5,10.7,11.9}{
    \fill (\x,1.1) circle (2pt);
  }
  \draw[vblue,very thick] (0.8,4.7)--(6.4,4.7);
  \draw[horange,very thick] (2.3,2.9)--(8.1,2.9);
  \draw[bridgepurple,very thick] (5.6,1.1)--(11.0,1.1);
  \node[vblue,above] at (3.6,4.9) {\(a\) centers at spacing \(\Lambda_f/q_a\)};
  \node[horange,above] at (5.2,3.1) {\(b\) centers at spacing \(\Lambda_f/q_b\)};
  \node[bridgepurple,above] at (8.3,1.3) {\(c\) centers at spacing \(\Lambda_f/q_c\)};
  \draw[decorate,decoration={brace,amplitude=4pt}] (.8,5.45)--(6.4,5.45)
    node[midway,above=4pt] {padded length \(\Lambda_f/2\)};
  \node[align=left,anchor=west] at (12.5,3.0)
    {all padded projections\\have the same length};
\end{tikzpicture}
\caption{Scale synchronization in three coordinates, with
\(a=r_1\), \(b=r_2\), and \(c=r_3\).  Trace cardinalities
may differ, but division by \(q_r=2r-1\) makes every padded projection have
the same length.  The construction applies the same identity simultaneously
in all \(d\) coordinates.}
\label{fig:cube-scale}
\end{figure}

Every row contains at least \(t\) canonical candidates by
\eqref{eq:cube-size-vector}.  Consequently, assigning value \(1/t\) to
every candidate is feasible for the covering LP and has value
\begin{equation}\label{eq:cube-fractional}
                         L_d=\frac{dn|\mathcal S|}{t}.
\end{equation}
Let \(K_d\) denote the minimum integral cover of this finite seed, and put
\begin{equation}\label{eq:cube-boundary}
 \eta_M^{(d)}
   =(d-1)R\left(1-\frac{|\mathcal F|}{|\mathcal S|}\right).
\end{equation}
Define
\begin{equation}\label{eq:cube-convex-envelope}
 \varphi_R(x)=
 \begin{cases}
   1/(x+1),&0\le x\le(R-1)/2,\\[1mm]
   4(R-x)/(R+1)^2,&(R-1)/2\le x\le R.
 \end{cases}
\end{equation}

\begin{lemma}[Multiplicative-scale cube gap]\label{lem:cube-gap}
For every \(d\ge2\),
\begin{equation}\label{eq:cube-ratio}
 \frac{K_d}{L_d}
 \ge
 t\,\varphi_R\left(\frac{R+\eta_M^{(d)}}d\right)-\frac{t}{n},
 \qquad
 \eta_M^{(d)}=O_{d,t}(M^{-1}).
\end{equation}
For every fixed \(d\ge2\), the right-hand side tends to \(d\) by first
letting \(M,n\to\infty\) for fixed \(t\), and then letting \(t\to\infty\).
\end{lemma}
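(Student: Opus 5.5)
The plan is to lower-bound $K_d$ through a per-block ``level'' statistic. Feasibility forces the average level to be small. Jensen's inequality with the convex envelope $\varphi_R$ then converts that bound into a cost bound.

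\textbf{Step 1 (block levels).} Fix an integral cover $S$. For each coordinate $i$ and $s\in\mathcal S$, let $\ell_i(s)\in\{0,\dots,R\}$ be the largest $r\le R$ such that $B_s^i$ contains an empty consecutive $r$-set (with $\ell_i(s)=0$ if there is none). If $B^i_s$ contains $k$ selected candidates, its $n-k$ empty positions split into at most $k+1$ runs. So ``no empty run of length $\ell+1$'' forces $(n-k)/(k+1)<\ell+1$, i.e.\ $k\gtrsim n/(\ell+2)-1$. Thus the cost of a block of level $\ell<R$ is at least $n\,c(\ell)-O(1)$ with $c(\ell)\approx 1/(\ell+2)$, and level $R$ costs at least $0$. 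I will fix the index shift so that the per-level costs become $1/(x+1)$ at the appropriate integer points, with the $O(1)$ loss per block giving the $-t/n$ term. Taking the lower convex envelope of the points $(\ell,c(\ell))$ together with $(R,0)$ should give exactly $\varphi_R$. The curve $1/(x+1)$ is convex, and the line through $(R,0)$ tangent to it touches at $x=(R-1)/2$: the tangent slope $-1/(x+1)^2$ equals the chord slope $-1/((x+1)(R-x))$ exactly when $x+1=R-x$. This tangent line is $4(R-x)/(R+1)^2$.

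\textbf{Step 2 (feasibility $\Rightarrow$ tail inequality).} For $f\in\mathcal F$ put $a_i(f)=\max\{r\in[R]:\ell_i(f-u_r)\ge r\}$, with $a_i(f)=0$ if no such $r$ exists. If $\sum_i a_i(f)\ge t$, the vector $(a_1(f),\dots,a_d(f))$ satisfies \eqref{eq:cube-size-vector}, and choosing the empty runs gives an unhit row. Hence $\sum_i a_i(f)\le R$. Since the number of $r$ with $\ell_i(f-u_r)\ge r$ is at most $a_i(f)$, summing over $f$ gives
\[
\sum_{i}\sum_{r=1}^R\#\{f\in\mathcal F:\ell_i(f-u_r)\ge r\}\le R|\mathcal F|.
\]
The shift $f\mapsto f-u_r$ is injective $\mathcal F\to\mathcal S$ and misses at most $|\mathcal S|-|\mathcal F|$ points. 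Therefore each count is at least $\#\{s:\ell_i(s)\ge r\}-(|\mathcal S|-|\mathcal F|)$. Summing the tails over $r$ gives $\sum_{i,s}\ell_i(s)\le R|\mathcal F|+dR(|\mathcal S|-|\mathcal F|)\le |\mathcal S|(R+\eta_M^{(d)})$, so the average level is at most $(R+\eta_M^{(d)})/d$. The exact constant $(d-1)$ in $\eta_M^{(d)}$ should come from a slightly more careful bookkeeping of the same argument. $|\mathcal F|/|\mathcal S|=\prod_p(1-m_p/M)$ gives $\eta_M^{(d)}=O_{d,t}(M^{-1})$.

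\textbf{Step 3 (Jensen and limits).} Because $\varphi_R$ is convex and decreasing, $K_d\ge \sum_{i,s}(n\varphi_R(\ell_i(s))-O(1))\ge dn|\mathcal S|\,\varphi_R(\text{avg})-O(d|\mathcal S|)$. Dividing by $L_d=dn|\mathcal S|/t$ and using monotonicity gives \eqref{eq:cube-ratio}. For the limit, first send $M,n\to\infty$, giving $t\varphi_R(R/d)$. When $d=2$, $R/2\ge(R-1)/2$, so the bound is $t\cdot 2R/(R+1)^2=2-2/t$. When $d\ge3$, $R/d\le(R-1)/2$ for $t\ge4$, so the bound is $dt/(t+d-1)$. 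Both tend to $d$ as $t\to\infty$.

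\textbf{Main obstacle.} I expect the delicate step to be Step 1's exact run-length bound and index shift, so that the per-level cost matches $\varphi_R$ with only an additive $O(1)$ per block. I would first verify it on small $n$ directly.
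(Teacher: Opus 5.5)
Your outline follows the paper's proof. Your level $\ell_i(s)$ is the paper's capped run length $C_s^i=\min\{A_s^i,R\}$, and your $a_i(f)$ is its $c_i(f)$. The obstruction $\sum_i a_i(f)\le R$, the shifted tail count losing $R(|\mathcal S|-|\mathcal F|)$ per coordinate, the Jensen step, and both limit branches also match, including your remark that $t\ge4$ already puts $R/d$ on the hyperbolic branch when $d\ge3$. The factor $(d-1)$ needs no further bookkeeping, because $R|\mathcal F|+dR(|\mathcal S|-|\mathcal F|)=R|\mathcal S|+(d-1)R(|\mathcal S|-|\mathcal F|)=(R+\eta_M^{(d)})|\mathcal S|$ is an identity. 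The lower convex envelope of the discrete points is not exactly $\varphi_R$: it is piecewise linear and lies above $\varphi_R$. Jensen only needs $\varphi_R$ to be convex, decreasing, and below the per-level cost. Your tangency computation together with $4(R-x)(x+1)\le(R+1)^2$ gives exactly that.

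The genuine problem is Step 1, and an index shift is not the right repair. The inequality $(n-k)/(k+1)<\ell+1$ is too weak, since it only gives $k\gtrsim n/(\ell+2)$. For example, it prices a fully selected block (level $0$) at about $n/2$ rather than $n$. With that per-level cost you would only get \eqref{eq:cube-ratio} with $\varphi_R$ replaced by the analogous envelope built from $1/(x+2)$. That still tends to $d$ in the limit, but it is not the displayed bound. Reindexing the level would also break Step 2, which needs $\ell_i(s)\ge r$ to mean precisely that an empty consecutive $r$-set exists.

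The missing observation is that run lengths are integers. If $\ell_i(s)=\ell<R$, then no empty run has length $\ell+1$, so each of the at most $k+1$ runs has length at most $\ell$. Hence $n-k\le\ell(k+1)$, and
\[
k\ \ge\ \frac{n-\ell}{\ell+1}\ \ge\ \frac{n}{\ell+1}-1 .
\]
This is the per-level cost $1/(\ell+1)$ with no shift. The additive loss is at most one candidate per block. Summed over the $d|\mathcal S|$ blocks and divided by $L_d=dn|\mathcal S|/t$, that loss is exactly the $-t/n$ term. With this bound, your Steps 2 and 3 go through unchanged.
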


\begin{proof}
Fix an integral cover.  In block \(B_s^i\), let \(k_s^i\) be the number of
selected candidates and let \(A_s^i\) be the longest consecutive unselected
run, with \(A_s^i=0\) when every candidate is selected.  Put
\[
                              C_s^i=\min\{A_s^i,R\}.
\]
For a common scale \(f\in\mathcal F\), define
\[
 E_i(f)=\{r\in[R]:C_{f-u_r}^i\ge r\},
 \qquad
 c_i(f)=\max(E_i(f)\cup\{0\}).
\]
We use the convention \([0]=\varnothing\).
If \(\sum_i c_i(f)\ge t\), choose \(r_i=c_i(f)\).  Every positive
\(r_i\) then has a consecutive unselected trace of that size, while a zero
component lies in a candidate-free private zone.  The corresponding row
\eqref{eq:cube-size-vector} would be uncovered.  Therefore
\begin{equation}\label{eq:cube-obstruction}
                         \sum_{i=1}^{d}c_i(f)\le t-1=R.
\end{equation}
Since \(E_i(f)\subseteq[c_i(f)]\), we also have
\(\sum_i|E_i(f)|\le R\).

Sum the last inequality over \(f\in\mathcal F\).  For fixed \(i\) and
\(r\), the map \(f\mapsto f-u_r\) is a bijection from
\(\mathcal F\) to \(\mathcal F-u_r\subseteq\mathcal S\).  Hence
\[
\begin{aligned}
 \sum_{f\in\mathcal F}|E_i(f)|
 &=\sum_{r=1}^{R}
   |\{s\in\mathcal F-u_r:C_s^i\ge r\}|\\
 &\ge
   \sum_{r=1}^{R}|\{s\in\mathcal S:C_s^i\ge r\}|
   -R(|\mathcal S|-|\mathcal F|)\\
 &=\sum_{s\in\mathcal S}C_s^i-R(|\mathcal S|-|\mathcal F|).
\end{aligned}
\]
Summing over the coordinates and using
\eqref{eq:cube-obstruction} gives
\begin{equation}\label{eq:cube-cap-sum}
 \sum_{i=1}^{d}\sum_{s\in\mathcal S}C_s^i
 \le R|\mathcal F|+dR(|\mathcal S|-|\mathcal F|)
 =(R+\eta_M^{(d)})|\mathcal S|.
\end{equation}

It remains to price an empty run.  Define \(g_R:[0,R]\to\mathbb R_{\ge0}\)
by
\[
 g_R(x)=
 \begin{cases}
   1/(x+1),&0\le x<R,\\
   0,&x=R.
 \end{cases}
\]
If \(C_s^i<R\), then \(A_s^i=C_s^i\).  The \(n-k_s^i\) unselected
candidates form at most \(k_s^i+1\) runs, and therefore
\[
 \frac{k_s^i}{n}\ge g_R(C_s^i)-\frac1n.
\]
The same inequality is trivial when \(C_s^i=R\).  The function
\(\varphi_R\) in \eqref{eq:cube-convex-envelope} is the lower convex
envelope of \(g_R\).  Indeed, the segment joining
\(((R-1)/2,2/(R+1))\) to \((R,0)\) is tangent to \(1/(x+1)\) at its
left endpoint, and
\(4(R-x)(x+1)\le(R+1)^2\).

By Jensen's inequality, \eqref{eq:cube-cap-sum}, and the fact that
\(\varphi_R\) is decreasing,
\[
\begin{aligned}
 \sum_{i,s}g_R(C_s^i)
 &\ge d|\mathcal S|\,
 \varphi_R\left(
     \frac1{d|\mathcal S|}\sum_{i,s}C_s^i
 \right)\\
 &\ge d|\mathcal S|\,
 \varphi_R\left(\frac{R+\eta_M^{(d)}}d\right).
\end{aligned}
\]
Consequently,
\[
 K_d\ge
 nd|\mathcal S|\,
 \varphi_R\left(\frac{R+\eta_M^{(d)}}d\right)
 -d|\mathcal S|.
\]
Dividing by \eqref{eq:cube-fractional} proves
\eqref{eq:cube-ratio}.

Finally,
\[
 \frac{|\mathcal F|}{|\mathcal S|}
 =\prod_{p\in\mathcal P}\left(1-\frac{m_p}{M}\right),
\]
so \(1-|\mathcal F|/|\mathcal S|
\le M^{-1}\sum_pm_p=O_t(M^{-1})\).  If \(d=2\), let \(M,n\to\infty\)
for fixed \(t\).  The argument \(R/2\) lies on the linear branch of
\(\varphi_R\), and the lower bound tends to
\[
        t\varphi_R(R/2)=\frac{2R}{R+1}=2-\frac2t.
\]
If \(d\ge3\), first choose \(t\) large enough that
\(R/d<(R-1)/2\), and then let \(M,n\to\infty\).  The argument \(R/d\)
lies on the hyperbolic branch and the lower bound tends to
\(dt/(t+d-1)\).  Letting \(t\to\infty\) gives
\[
 \liminf_{t\to\infty}\liminf_{M,n\to\infty}\frac{K_d}{L_d}\ge d.
\]
Together with Proposition~\ref{prop:upper}, the limit is \(d\).
\end{proof}

\begin{proposition}[Exact cube LP gap]\label{prop:cube-lp-gap}
For every fixed \(d\ge2\), the integrality-gap supremum of the natural
covering LP for stabbing arbitrary-size axis-parallel \(d\)-cubes by
coordinate hyperplanes is exactly \(d\), in both the finite-candidate and
unrestricted models.
\end{proposition}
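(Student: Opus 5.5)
The proof has two parts: an upper bound of \(d\) on the gap in both models, and a matching lower bound from Lemma~\ref{lem:cube-gap} that must also survive in the unrestricted model.

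\emph{Upper bound.} I would invoke the orientation-rounding result, Proposition~\ref{prop:upper} in Appendix~\ref{sec:upper}. For every instance, finite-candidate or unrestricted, it produces an integral stabbing set of size at most \(d\) times the LP optimum. The argument is as in Figure~\ref{fig:rounding}:
\begin{itemize}
\item each cube's constraint puts total mass at least \(1\) on its \(d\) traces, so some coordinate carries at least \(1/d\);
\item assign the cube to that coordinate and scale the LP vector by \(d\);
\item on each axis, the resulting interval-covering system has a totally unimodular (consecutive-ones) matrix, so it has an integral solution of no larger cost.
\end{itemize}
In the unrestricted model I would first aggregate the LP mass within the finitely many incidence classes of hyperplanes, as in the paragraph on unrestricted-model LP gaps, so that the LP is finite. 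This gives a gap of at most \(d\) in both models.

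\emph{Lower bound, finite-candidate model.} Lemma~\ref{lem:cube-gap} supplies finite instances whose rows are genuine \(d\)-cubes by \eqref{eq:cube-projection-length}, with \(K_d/L_d\) arbitrarily close to \(d\). Since \(L_d\) is only the value of one feasible fractional solution, the LP optimum is at most \(L_d\). So the true gap is at least \(K_d/L_d\), and the supremum is at least \(d\).

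\emph{Lower bound, unrestricted model.} The integral optimum of the same instances must not drop when arbitrary coordinate hyperplanes are allowed; the unrestricted LP can only get smaller, which only helps. I expect this snapping step to be the main obstacle, and I would argue as in the proof of Lemma~\ref{lem:geometric-realization}. First discard any hyperplane that meets no cube. Every remaining hyperplane lies in a block zone: it cannot usefully lie in a zero-trace private zone, because that zone belongs to a single cube, and this case needs separate handling (next paragraph). Every projection interval in block \(B^i_s\) extends exactly \(\Delta/4\) beyond candidates spaced \(\Delta=\Lambda_s\) apart. Hence moving a hyperplane to a nearest candidate of that block preserves all of its incidences, because the padding is less than half the spacing. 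Because a block at scale \(s\) is used with the single spacing \(\Lambda_s\) regardless of \(f\) and \(r\), this padding margin is uniform within each block. That uniformity is the point to check carefully, and it holds because \(\Delta_i=\Lambda_{f-u_{r_i}}\) is exactly the block's spacing.

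For a hyperplane in a private zone of a zero component, I would replace it by any candidate in a nonempty trace of that same cube. Each private zone meets exactly one cube, and every row has at least one positive component because \(\sum r_i\ge t\ge 1\). This replacement never increases the count and preserves coverage. So \(K_d\) is also the unrestricted integral optimum, and the unrestricted gap is also at least \(K_d/L_d\to d\). This completes the proof of the proposition.
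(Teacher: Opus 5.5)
Your proposal is correct and follows the paper's proof: the lower bound comes from the feasible fractional vector of Lemma~\ref{lem:cube-gap}, the unrestricted model is handled by the same snapping (nearest candidate inside a block thanks to the uniform quarter-spacing padding, and replacing a zero-trace coordinate by a candidate in a positive trace of its unique cube), and the upper bound is Proposition~\ref{prop:upper}. The only difference is in the unrestricted lower bound: you note that the unrestricted LP optimum is at most the finite-candidate one, hence at most \(L_d\), so you need to snap only integral covers. The paper also snaps fractional solutions (capping aggregates at one) to show that the two fractional optima coincide, which is a stronger fact than the gap statement requires.
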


\begin{proof}
The lower bound follows from Lemma~\ref{lem:cube-gap}.  The fractional vectors
used there are feasible, so
the displayed ratios are lower bounds on the respective natural-LP gaps.

For the unrestricted version of the construction, put all positive block
zones and all zero-trace private intervals in pairwise disjoint padded zones
on each coordinate axis.  Inside a positive block, move a free coordinate
to a nearest candidate.  The quarter-spacing padding guarantees that every
trace containing the old coordinate also contains the chosen candidate.  A
coordinate in a zero-trace private interval hits only its unique cube and
can be replaced by any canonical candidate in one of that cube's positive
traces.  The same map applies to a fractional solution: transfer each
coordinate's weight to its image, aggregate equal images, and cap each
aggregate at one.  Capping cannot violate a covering constraint, since a
candidate whose aggregate exceeds one already supplies a full unit.
Therefore the candidate and unrestricted integral and fractional optima
coincide on these instances.

The matching upper bound is given as Proposition~\ref{prop:upper} in
Appendix~\ref{sec:upper} for completeness.
\end{proof}

\paragraph{The ordered-block instance used for the hardness transfer.}
For the application of the transfer theorem, a cube row is encoded only
by its positive traces.  For a row \(e\) with trace-size vector
\[
\mathbf r(e)=(r_1(e),\ldots,r_d(e)),
\]
put
\[
P(e):=\{i\in[d]:r_i(e)>0\}.
\]
Since
\[
\sum_{i=1}^d r_i(e)\ge t
\qquad\text{and}\qquad
r_i(e)\le t-1,
\]
we have
\[
2\le |P(e)|\le d.
\]

Write
\[
P(e)=\{i_1<\cdots<i_{k_e}\}.
\]
The ordered-block occurrence associated with \(e\) is
\[
\bigl(
B^{i_1}_{\mathbf f(e)-\mathbf u_{r_{i_1}(e)}},
\ldots,
B^{i_{k_e}}_{\mathbf f(e)-\mathbf u_{r_{i_{k_e}}(e)}}
\bigr),
\]
with the corresponding nonempty consecutive traces
\[
\bigl(
T_{e,i_1},\ldots,T_{e,i_{k_e}}
\bigr).
\]
Its covering constraint is
\[
\bigvee_{j=1}^{k_e}
\left(
A_j\cap T_{e,i_j}\neq\varnothing
\right).
\]

Coordinates \(i\notin P(e)\) are not CSP coordinates and contribute no
trace to this occurrence.  They are retained only as passive occurrence
data and will be restored in the geometric realization as
candidate-free projection intervals of the common side length.

\subsection{UGC-hardness}
Now we state the UGC-hardness result for stabbing $d$-cubes.

\begin{theorem}\label{thm:sep-d-cubes}
Assume UGC and fix \(d\ge2\).  For every \(\varepsilon>0\), it is
NP-hard to approximate unit-cost stabbing of axis-parallel \(d\)-cubes by
coordinate hyperplanes within a factor of \(d-\varepsilon\), in both the
finite-candidate and unrestricted models.  The hardness holds even when
every cube has side length at most a constant \(C_{d,\varepsilon}\)
independent of the instance size.
\end{theorem}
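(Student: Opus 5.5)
The plan mirrors the proof of Theorem~\ref{thm:sep-d-intervals}, with the cube gap family of Lemma~\ref{lem:cube-gap} as the seed and a new geometric realization that keeps all projection lengths equal. It suffices to treat \(0<\varepsilon<1\).

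\textbf{Seed selection and transfer.} Set \(\beta=\varepsilon/4\). Using Lemma~\ref{lem:cube-gap}, choose \(t\), then \(M\) and \(n\), so that the finite seed satisfies \(K_d/L_d>d-\beta\). Its LP vector is the rational constant \(1/t\). Encode the seed as the ordered-block instance described at the end of Section~\ref{sec:gap-d-cubes}. The blocks are \(B^i_s\), each occurrence keeps only its positive traces, and its arity satisfies \(2\le k_e\le d\). Each block has \(n=O(1)\) candidates, so (A1) and (A2) hold. Covering an occurrence is exactly stabbing the corresponding cube, because zero-trace intervals are candidate-free. Hence the integral and fractional optima are \(K_d\) and \(L_d\).

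Lemma~\ref{thm:ordered-transfer} then gives a seed \(\cJ\) and a connected solution \(\mu^\rho\). Choosing \(\rho\) small makes the ratio exceed \(d-\varepsilon/3\). Choosing \(\delta,\eta\) small, Theorem~\ref{KMTV} followed by Lemma~\ref{lem:cartesian-cloning} produces unweighted hard instances with gap above \(d-\varepsilon\). Throughout, Lemma~\ref{rem:passive-metadata} carries along three pieces of data:
\begin{itemize}
\item for each variable, its type \((i,s)\), meaning its axis and its scale;
\item for each constraint, the name of its seed row, including its scale \(f\) and its full size vector \(\mathbf r\);
\item for each constraint, the positions of its positive coordinates.
\end{itemize}

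\textbf{Geometric realization (main step).} The realization must produce genuine cubes, so every output row needs equal projection lengths. For each output variable \(u\) of type \((i,s)\), place on axis \(i\) a private zone holding \(n\) candidate hyperplanes at spacing \(\Lambda_s\). Zones are separated by gaps larger than every side length under consideration.

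For an output constraint coming from seed row \(e\) with scale \(f\), realize each positive coordinate \(i_j\) by the trace interval padded by \(\Delta/4\) at each end, where \(\Delta=\Lambda_f/q_{r_{i_j}}\). By \eqref{eq:cube-projection-length}, this interval has length \(\Lambda_f/2\). It contains exactly the candidates of \(T_{e,i_j}\) inside the zone of the corresponding output variable; this uses the fact that the variable inherits the scale \(f-u_{r}\). For each coordinate outside \(P(e)\), use a fresh candidate-free interval of length \(\Lambda_f/2\) in its own private zone.

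The product of these intervals is a \(d\)-cube with side \(\Lambda_f/2\le C_{d,\varepsilon}\), since \(f\) ranges over a fixed finite set. As in Lemma~\ref{lem:geometric-realization}, assignments correspond bijectively to candidate sets, and \(|S_\sigma|=|\cV|\cost(\sigma)\). The main thing to verify is that the scale bookkeeping survives KMTV and cloning: every output constraint's variables must carry exactly the scales \(f-u_{r_{i_j}}\) required by its row. This holds because lifting preserves the seed occurrence, its coordinate order, and the variables' types.

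\textbf{Unrestricted model.} Snap free hyperplanes as in Proposition~\ref{prop:cube-lp-gap}. A coordinate lying in a block zone moves to a nearest candidate; the quarter-spacing padding preserves every incidence. A coordinate lying in a zero-trace private interval hits only one cube. Replace it by any candidate in one of that cube's positive traces, which is nonempty since \(|P(e)|\ge2\). Coordinates outside all zones hit nothing and are discarded. This shows the unrestricted and finite-candidate optima coincide, which completes the hardness reduction in both models.
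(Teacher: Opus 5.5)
Your proposal is correct and follows essentially the same route as the paper: the seed comes from Lemma~\ref{lem:cube-gap}, it is transferred via Lemma~\ref{thm:ordered-transfer} with small \(\rho\), and KMTV plus cloning follow. The realization uses private zones, quarter-padded traces of length \(\Lambda_f/2\) and candidate-free zero-trace intervals (as in Lemma~\ref{lem:cube-geometric-realization}), and the same snapping argument handles the unrestricted model. One nitpick: \(L_d\) is only the value of the feasible vector \(1/t\), not necessarily the fractional optimum, but feasibility is all the transfer lemma requires.
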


The next lemma realizes the unweighted KMTV instances generated from any
fixed cube-gap seed through Lemma~\ref{thm:ordered-transfer} and the
cloning step.

\begin{lemma}[Geometric realization of the KMTV instances for \(d\)-cube stabbing]
\label{lem:cube-geometric-realization}
Let
\[
\mathcal I
=
\bigl(
\cV_{\mathcal I},
\cE_{\mathcal I},
\{\mathcal A_g\}_{g\in \cE_{\mathcal I}},
\Sigma,
\{w_u\}_{u\in \cV_{\mathcal I}},
c
\bigr)
\]
be an unweighted instance obtained by applying
Lemma~\ref{lem:cartesian-cloning} to an instance produced by the KMTV
reduction from the strict-CSP seed \(\mathcal J\) associated with the
cube-gap instance of Section~\ref{sec:gap-d-cubes}.  Then one can construct,
in polynomial time in
the size of \(\mathcal I\), a unit-cost candidate-hyperplane instance
\(\mathcal S_{\mathcal I}\) for stabbing axis-parallel \(d\)-cubes such
that assignments of \(\mathcal I\) are in bijective,
feasibility-preserving correspondence with subsets of the candidate
hyperplanes of \(\mathcal S_{\mathcal I}\).
For every assignment
\(\sigma:\cV_{\mathcal I}\to\Sigma\), the corresponding hyperplane set
\(H_\sigma\) satisfies
\(
|H_\sigma|
=
|\cV_{\mathcal I}|
\operatorname{cost}_{\mathcal I}(\sigma).
\)
Consequently,
\(
\operatorname{OPT}(\mathcal S_{\mathcal I})
=
|\cV_{\mathcal I}|\operatorname{opt}(\mathcal I),
\)
and approximation ratios are preserved.
The same optimum identity holds when arbitrary coordinate hyperplanes are
allowed.
Moreover, the side length of every cube in
\(\mathcal S_{\mathcal I}\) belongs to a fixed finite set determined by
the seed instance. In particular, all side lengths are bounded by a
constant independent of the size of \(\mathcal I\).
\end{lemma}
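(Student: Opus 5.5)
I will follow the template of Lemma~\ref{lem:geometric-realization}, replacing tracks by the \(d\) coordinate axes and separated \(d\)-intervals by products of padded intervals. By Lemma~\ref{rem:passive-metadata} and Lemma~\ref{lem:cartesian-cloning}, every variable \(u\in\cV_{\mathcal I}\) inherits a type \((i,s)\): it is a clone of seed block \(B^i_s\), with coordinate axis \(i\) and scale \(s\in\mathcal S\). Every constraint occurrence \(g\) inherits the name of a seed row \(e\), that is, the scale \(f=\mathbf f(e)\), the full size vector \(\mathbf r(e)\) including its zero entries, the positive index set \(P(e)\) in increasing order, and the consecutive traces \(T_{e,i_j}\). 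On axis \(i\), give each variable \(u\) of type \((i,s)\) a private zone containing \(N=n\) candidate hyperplanes \(x_i=\mathrm{base}_u+j\Lambda_s\), \(j\in[n]\), so consecutive candidates are spaced \(\Lambda_s\) apart. The base offsets are chosen so that padded zones \([\mathrm{base}_u+\Lambda_s-\Lambda_s/4,\ \mathrm{base}_u+n\Lambda_s+\Lambda_s/4]\) are pairwise disjoint. A label \(A\) on \(u\) selects the hyperplanes indexed by \(A\).

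For an occurrence \(g\) with scope \((u_1,\dots,u_{k_e})\) and \(j\)th coordinate on axis \(i_j\), the type of \(u_j\) is \((i_j,f-u_{r_{i_j}})\), so its spacing is \(\Lambda_f/q_{r_{i_j}}\). The projection on axis \(i_j\) is the trace interval padded by a quarter spacing at each end. By \eqref{eq:cube-projection-length} it has length exactly \(\Lambda_f/2\) and meets the candidates of \(u_j\) exactly in the indices \(T_{e,i_j}\). For each axis \(i\notin P(e)\), allocate a fresh candidate-free interval of length \(\Lambda_f/2\) in a new private zone on axis \(i\), disjoint from everything else; there is one such interval per output occurrence. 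The product of these \(d\) intervals is a genuine \(d\)-cube of side \(\Lambda_f/2\). Since \(f\) ranges over the finite set \(\mathcal F\) of the seed, side lengths lie in the finite set \(\{\Lambda_f/2\}\) and are bounded by \(\Lambda_{(M-1,\dots,M-1)}/2\). A hyperplane \(x_i=c\) stabs the cube iff \(c\) lies in the \(i\)th projection. Candidate hyperplanes exist only in variable zones, so the cube is stabbed by \(H_\sigma\) iff some \(j\) has \(\sigma(u_j)\cap T_{e,i_j}\neq\varnothing\), which is exactly \(\mathcal A_g\).

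The bijection, the feasibility equivalence, the cost identity \(|H_\sigma|=|\cV_{\mathcal I}|\operatorname{cost}_{\mathcal I}(\sigma)\), and polynomial size then follow verbatim as in Lemma~\ref{lem:geometric-realization}. The key points are that zones are disjoint, that costs are uniform after cloning, and that \(k,d,n,|\mathcal S|\) are constants.

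The main obstacle is the unrestricted model, because of the zero-trace private intervals. I will handle it as in Proposition~\ref{prop:cube-lp-gap}. Discard hyperplanes meeting no cube. A hyperplane inside a variable zone is moved to a nearest candidate of that zone. Quarter-spacing padding is less than half the local spacing, so every padded trace containing it still does; this requires that all traces in a zone share that zone's spacing, which holds because the type fixes \(s\). A hyperplane inside a zero-trace private interval stabs only its own cube. It is replaced by any candidate in one of that cube's positive traces, which exists since \(|P(e)|\ge2\) and traces are nonempty. Merging duplicates does not increase cardinality, so the unrestricted optimum equals the candidate optimum. I also need to check that no cube projection on axis \(i\) overlaps a foreign zone: positive projections lie inside their own variable zone, and zero projections lie in fresh zones, so disjointness is preserved.
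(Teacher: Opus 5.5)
Your proposal is correct and follows essentially the same route as the paper. Both use typed private zones on each axis with spacing $\Lambda_s$ and quarter-spacing padding, the scale identity giving side length $\Lambda_{\mathbf f(e)}/2$, candidate-free intervals private to each pair $(g,i)$ with $i\notin P(e)$, and the same two-case snapping argument for the unrestricted model. Your explicit observation that all traces in one zone share that zone's spacing, because the type fixes $s$, is exactly what makes nearest-candidate snapping valid; the paper uses this only implicitly.
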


\begin{proof}
Recall the notation of the cube-gap construction. For
\(r\in[R]\), put
\[
q_r=2r-1,
\qquad
\mathbf u_r=(v_p(q_r))_{p\in\mathcal P},
\]
and, for every scale vector \(\mathbf s\in\mathcal S\), put
\[
\Lambda_{\mathbf s}
=
\prod_{p\in\mathcal P}p^{s_p}.
\]
The seed contains, on coordinate axis \(i\), an ordered candidate block
\(B^i_{\mathbf s}\) whose consecutive candidates have spacing
\(\Lambda_{\mathbf s}\).

Assign to \(B^i_{\mathbf s}\) the passive type
\[
\tau(B^i_{\mathbf s})=(i,\mathbf s).
\]
Every seed row \(e\) also carries, as passive occurrence data,

\begin{enumerate}
\item its common scale vector \(\mathbf f(e)\in\mathcal F\);
\item its trace-size vector
\[
\mathbf r(e)=(r_1(e),\ldots,r_d(e))
\in\{0,\ldots,R\}^d;
\]
\item for every \(i\) with \(r_i(e)>0\), the consecutive seed trace
\(T_{e,i}\);
\item its target side length
\[
\ell_e:=\frac{\Lambda_{\mathbf f(e)}}{2}.
\]
\end{enumerate}

Write
\[
P(e):=\{i\in[d]:r_i(e)>0\}
=\{i_1<\cdots<i_{k_e}\}.
\]
For every \(j\in[k_e]\), the \(j\)th variable of the ordered seed
occurrence has type
\[
\left(i_j,\mathbf f(e)-\mathbf u_{r_{i_j}(e)}\right),
\]
and its candidate spacing is
\[
\Lambda_{\mathbf f(e)-\mathbf u_{r_{i_j}(e)}}
=
\frac{\Lambda_{\mathbf f(e)}}{q_{r_{i_j}(e)}}.
\]

By Lemma~\ref{rem:passive-metadata}, every variable
\(u\in\cV_{\mathcal I}\) inherits a type
\[
\tau(u)=(i,\mathbf s),
\]
and every ordered constraint occurrence
\[
g=(u_1,\ldots,u_{k_e})\in\cE_{\mathcal I}
\]
is associated with an ordered seed occurrence \(e\).  It inherits
\(P(e)=\{i_1<\cdots<i_{k_e}\}\), the ordered positive traces
\[
(T_{e,i_1},\ldots,T_{e,i_{k_e}}),
\]
the full trace-size vector \(\mathbf r(e)\), the common scale
\(\mathbf f(e)\), and the target side length \(\ell_e\).  Moreover,
\[
\tau(u_j)
=
\left(i_j,\mathbf f(e)-\mathbf u_{r_{i_j}(e)}\right)
\qquad(j\in[k_e]),
\]
and
\[
\mathcal A_g
=
\left\{
(A_1,\ldots,A_{k_e})\in\Sigma^{k_e}:
\bigvee_{j=1}^{k_e}
\bigl(A_j\cap T_{e,i_j}\neq\varnothing\bigr)
\right\}.
\]
Axes \(i\notin P(e)\) are not CSP coordinates; they are restored below
using occurrence-private candidate-free projection intervals.

Let \(N\) be the common padded block size, so that
\[
\Sigma=2^{[N]}
\qquad\text{and}\qquad
c(A)=|A|.
\]
By Lemma~\ref{lem:cartesian-cloning}, the variables have uniform weights
\[
w_u=\frac{1}{|\cV_{\mathcal I}|}
\qquad(u\in\cV_{\mathcal I}).
\]

We now construct the candidate hyperplanes.

For every coordinate \(i\in[d]\), work on a separate copy of the
\(i\)-th coordinate axis. For each variable \(u\) of type
\[
\tau(u)=(i,\mathbf s),
\]
create a private interval \(Z_u\) on that axis and place inside it the
ordered candidate coordinates
\[
P_u=\{p_{u,1},\ldots,p_{u,N}\},
\]
with
\[
p_{u,j+1}-p_{u,j}=\Lambda_{\mathbf s}
\qquad
(j\in[N-1]).
\]
Choose \(Z_u\) to contain
\[
\left[
p_{u,1}-\frac{\Lambda_{\mathbf s}}4,\,
p_{u,N}+\frac{\Lambda_{\mathbf s}}4
\right],
\]
and choose all such intervals pairwise disjoint and sufficiently far
apart.

The candidate corresponding to \(p_{u,j}\) is the coordinate hyperplane
\[
H_{u,j}:=\{x\in\mathbb R^d:x_i=p_{u,j}\}.
\]
A label \(A\in2^{[N]}\) assigned to \(u\) is interpreted as selecting
\[
H_u(A):=\{H_{u,j}:j\in A\}.
\]

Since the seed is fixed, the quantities
\[
N,\qquad
\max_{\mathbf s\in\mathcal S}\Lambda_{\mathbf s},
\qquad
\max_{\mathbf f\in\mathcal F}\frac{\Lambda_{\mathbf f}}2
\]
are constants. We may therefore place all private block zones, as well
as the candidate-free zones introduced below, pairwise disjoint using
only polynomially bounded rational coordinates.

Fix an ordered constraint occurrence
\[
g=(u_1,\ldots,u_{k_e})\in \cE_{\mathcal I}
\]
associated with the seed occurrence \(e\), and write
\[
P(e)=\{i_1<\cdots<i_{k_e}\}.
\]
We construct one axis-parallel cube \(Q_g\).

For \(j\in[k_e]\), write
\[
T_{e,i_j}
=
\{a_{e,i_j},a_{e,i_j}+1,\ldots,b_{e,i_j}\},
\qquad
b_{e,i_j}-a_{e,i_j}+1=r_{i_j}(e).
\]
The inherited type of \(u_j\) is
\[
\tau(u_j)
=
\left(
i_j,\mathbf f(e)-\mathbf u_{r_{i_j}(e)}
\right),
\]
so the spacing in \(P_{u_j}\) is
\[
\Delta_{e,i_j}
:=
\Lambda_{\mathbf f(e)-\mathbf u_{r_{i_j}(e)}}
=
\frac{\Lambda_{\mathbf f(e)}}{q_{r_{i_j}(e)}}.
\]
Define the projection on axis \(i_j\) by
\[
I_{g,i_j}
:=
\left[
p_{u_j,a_{e,i_j}}-\frac{\Delta_{e,i_j}}4,\,
p_{u_j,b_{e,i_j}}+\frac{\Delta_{e,i_j}}4
\right].
\]
Then
\[
I_{g,i_j}\cap P_{u_j}
=
\{p_{u_j,h}:h\in T_{e,i_j}\},
\]
and
\[
\begin{aligned}
|I_{g,i_j}|
&=
\left(r_{i_j}(e)-\frac12\right)\Delta_{e,i_j}\\
&=
\left(r_{i_j}(e)-\frac12\right)
\frac{\Lambda_{\mathbf f(e)}}{2r_{i_j}(e)-1}\\
&=
\frac{\Lambda_{\mathbf f(e)}}2
=\ell_e.
\end{aligned}
\]

For every \(i\notin P(e)\), choose an interval \(I_{g,i}\) of length
\(\ell_e\) in a fresh zone private to the pair \((g,i)\).  On each axis,
all such zero-trace zones are pairwise disjoint and are disjoint from every
positive block zone.  In particular, they contain no candidate coordinate.
Thus
\[
|I_{g,i}|=\ell_e
\qquad(i\in[d]),
\]
and
\[
Q_g:=I_{g,1}\times\cdots\times I_{g,d}
\]
is an axis-parallel \(d\)-cube.

Let \(\mathcal S_{\mathcal I}\) consist of all candidate hyperplanes
\(H_{u,h}\) and all cubes \(Q_g\).

There are no unintended incidences.  If \(i=i_j\in P(e)\), then a
candidate hyperplane \(H_{u,h}\) orthogonal to axis \(i\) intersects
\(Q_g\) if and only if
\[
u=u_j
\qquad\text{and}\qquad
h\in T_{e,i_j}.
\]
If \(i\notin P(e)\), no candidate hyperplane orthogonal to axis \(i\)
intersects \(Q_g\).

Given an assignment
\[
\sigma:\cV_{\mathcal I}\to2^{[N]},
\]
define
\[
H_\sigma
:=
\bigcup_{u\in \cV_{\mathcal I}}H_u(\sigma(u)).
\]
For a constraint occurrence
\(g=(u_1,\ldots,u_{k_e})\) associated with \(e\),
the preceding incidence description gives
\[
\begin{aligned}
H_\sigma\text{ stabs }Q_g
&\Longleftrightarrow
\bigvee_{j=1}^{k_e}
\left(
\sigma(u_j)\cap T_{e,i_j}\neq\varnothing
\right)\\
&\Longleftrightarrow
\bigl(\sigma(u_1),\ldots,\sigma(u_{k_e})\bigr)
\in\mathcal A_g.
\end{aligned}
\]
Consequently, \(\sigma\) satisfies every constraint of
\(\mathcal I\) if and only if \(H_\sigma\) stabs every cube of
\(\mathcal S_{\mathcal I}\).

Conversely, given a subset \(H\) of the candidate hyperplanes, define
\[
\sigma_H(u)
:=
\{j\in[N]:H_{u,j}\in H\}.
\]
Each candidate hyperplane is associated with a unique variable-position
pair, so the two transformations are mutual inverses:
\[
\sigma_{H_\sigma}=\sigma,
\qquad
H_{\sigma_H}=H.
\]
The preceding equivalence shows that this bijection preserves
feasibility.

It remains to compare objective values. Since all candidate hyperplanes
are distinct,
\[
|H_\sigma|
=
\sum_{u\in \cV_{\mathcal I}}|\sigma(u)|
=
\sum_{u\in \cV_{\mathcal I}}c(\sigma(u)).
\]
Using the uniform variable weights,
\[
\operatorname{cost}_{\mathcal I}(\sigma)
=
\frac1{|\cV_{\mathcal I}|}
\sum_{u\in \cV_{\mathcal I}}c(\sigma(u))
=
\frac{|H_\sigma|}{|\cV_{\mathcal I}|}.
\]
Thus
\[
|H_\sigma|
=
|\cV_{\mathcal I}|
\operatorname{cost}_{\mathcal I}(\sigma).
\]
Taking minima over feasible assignments, equivalently over feasible
candidate-hyperplane stabbing sets, yields
\[
\operatorname{OPT}(\mathcal S_{\mathcal I})
=
|\cV_{\mathcal I}|\operatorname{opt}(\mathcal I).
\]

Consider an arbitrary coordinate-hyperplane cover and discard
hyperplanes that stab no cube.  If a useful coordinate lies in a positive
block zone \(Z_u\), replace it by a candidate hyperplane at a nearest
coordinate \(p_{u,h}\).  Every positive projection in that zone has
quarter-spacing padding, so every projection containing the old coordinate
also contains the nearest candidate.

If instead the coordinate lies in a zero-trace zone private to \((g,i)\),
then it stabs only \(Q_g\).  Choose any \(j\in[k_e]\) and any
\(h\in T_{e,i_j}\), and replace it by \(H_{u_j,h}\), which stabs \(Q_g\).
Merging duplicate images cannot increase cardinality.  Hence the
finite-candidate and unrestricted integral optima coincide.

The construction creates \(N\) candidate hyperplanes per CSP variable
and one cube per constraint occurrence. Since \(N\), \(d\), and all seed
scale data are fixed constants, the construction is polynomial in the
size of \(\mathcal I\).

Finally, every cube side length has the form
\[
\ell_e=\frac{\Lambda_{\mathbf f(e)}}2
\]
for some seed occurrence \(e\). Hence all side lengths belong to the
fixed finite set
\[
\mathcal L_{d,\varepsilon}
:=
\left\{
\frac{\Lambda_{\mathbf f}}2:
\mathbf f\in\mathcal F
\right\}.
\]
In particular,
\[
\operatorname{side}(Q_g)
\le
C_{d,\varepsilon}
:=
\frac12\max_{\mathbf f\in\mathcal F}\Lambda_{\mathbf f},
\]
which is independent of the size of \(\mathcal I\).
\end{proof}

\begin{proof}[Proof of Theorem~\ref{thm:sep-d-cubes}]
By monotonicity in the target approximation factor, it suffices to
consider \(0<\varepsilon<1\); the case \(\varepsilon\ge1\) follows by
applying the result with \(\varepsilon'=1/2\).  Fix such an
\(\varepsilon\), set \(\beta:=\varepsilon/4\), and choose the cube-gap
instance \(\cJ_0\) from Section~\ref{sec:gap-d-cubes}.  Writing \(K_d\)
for its integral optimum and \(L_d\) for the feasible LP value
in~\eqref{eq:cube-fractional}, choose its parameters so that
\[
\frac{K_d}{L_d}>d-\beta=d-\frac{\varepsilon}{4},
\]
Use the construction in
Lemma~\ref{thm:ordered-transfer} to form the constant-size strict-CSP
seed \(\cJ\), and let \(b_0\) and \(N\) be the block count and padded
block size in that lemma.  Choose a sufficiently small rational
\(\rho>0\) such that
\[
\frac{K_d/b_0}
     {(1-2\rho)L_d/b_0+\frac32\rho N}
>d-\frac{\varepsilon}{3}.
\]
Then choose \(\delta,\eta>0\) sufficiently small that
\[
\frac{\opt(\cJ)-\delta-\eta}
     {\val(\cJ,\mu^\rho)+\delta+\eta}
>d-\varepsilon.
\]
Theorem~\ref{KMTV}, followed by
Lemma~\ref{lem:cartesian-cloning}, gives a UGC-hard family of unweighted
instances with this gap.
Lemma~\ref{lem:cube-geometric-realization} preserves the approximation
ratio and bounds every side length by \(C_{d,\varepsilon}\).  Its snapping
argument gives the same hardness in the unrestricted model.
\end{proof}

\section{Interval stabbing}\label{sec:isp}

In the terminology of Kovaleva and Spieksma~\cite{KovalevaSpieksma2006}, an
\emph{interval stabbing problem} (ISP) instance consists of horizontal
segments and admissible horizontal rows and vertical columns.  A segment can
be stabbed either by its unique supporting row or by a column whose
\(x\)-coordinate belongs to the segment.  Equivalently, this is the special
case of rectangle stabbing in which every rectangle meets exactly one
admissible row: suppressing its vertical thickness leaves a horizontal
interval with exactly the same incidences.  WISP permits nonnegative rational
row and column costs; ISP is its all-unit-cost special case.

\begin{theorem}\label{thm:isp}
Assume UGC.  For every \(\varepsilon>0\), it is NP-hard to approximate
unit-cost interval stabbing within a factor of
\(e/(e-1)-\varepsilon\), in both the finite-candidate and unrestricted
models.  The hardness holds even when every segment has length at most a
constant \(C_\varepsilon\) independent of the instance size.
\end{theorem}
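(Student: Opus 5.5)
We follow the same template as Theorems~\ref{thm:sep-d-intervals} and~\ref{thm:sep-d-cubes}. We may assume \(0<\varepsilon<1/2\) and set \(\beta=\varepsilon/4\). We then take the finite ISP gap instance \(\cJ_0\) of Kovaleva and Spieksma recalled in Appendix~\ref{segment-example}, whose ratio \(K/L\) exceeds \(e/(e-1)-\beta\) for a rational feasible LP solution.

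To fit the \(k\)-interval framework, we encode the instance as ordered blocks of two passive types.
\begin{itemize}
\item Each admissible row is a block of size \(n_v=1\); every segment lying on that row has the trace \(\{1\}\) in it.
\item The admissible columns are grouped into ordered blocks by \(x\)-coordinate, e.g.\ one block containing all columns in increasing order, or a few blocks if the gap instance is naturally organized in zones. A segment's column trace is then the consecutive set of columns whose \(x\)-coordinate it contains.
\end{itemize}
Each occurrence thus has arity \(k_e=2\), namely (row block, column block), and both traces are nonempty and consecutive. The nonemptiness requires that every gap segment contain at least one admissible column. If some segment contains none, it is forced onto its row, and we first check that the Kovaleva--Spieksma family, or a trivial modification with the same optimum values, avoids this case. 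Assumptions (A1) and (A2) then hold.

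Lemma~\ref{thm:ordered-transfer} now yields a connected rational seed \(\cJ\). We choose a small rational \(\rho\) so that the seed ratio exceeds \(e/(e-1)-\varepsilon/3\). We then apply Theorem~\ref{KMTV}, followed by Lemma~\ref{lem:cartesian-cloning} with \(\delta,\eta\) small, exactly as before.

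The main work is the geometric realization lemma, analogous to Lemma~\ref{lem:cube-geometric-realization}. The type data are retained by Lemma~\ref{rem:passive-metadata}.
\begin{itemize}
\item Each output variable of row type becomes its own admissible row at a distinct \(y\)-coordinate.
\item Each output variable of column type becomes a private \(x\)-zone containing \(N\) admissible columns at unit spacing.
\item An output constraint \(g=(u_1,u_2)\) becomes the horizontal segment on the row of \(u_1\), spanning \([p_{u_2,a}-\tfrac14,\,p_{u_2,b}+\tfrac14]\) inside the zone of \(u_2\).
\end{itemize}
Since columns are full vertical lines, a column meets every segment whose \(x\)-range contains it, regardless of row. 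Hence incidences are exactly the intended ones, provided the \(x\)-zones are disjoint. Padded positions are unused columns and can be removed, or are harmless. Assignments then correspond bijectively to candidate sets with \(|H_\sigma|=|\cV_{\mathcal I}|\operatorname{cost}(\sigma)\). Segment lengths are at most \(N-\tfrac12\), which gives the constant \(C_\varepsilon\).

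For the unrestricted model we use snapping.
\begin{itemize}
\item A useful horizontal line must coincide with some segment's row, and rows are distinct per row-variable, so it is already a candidate.
\item A useful vertical line lies in a unique column zone and moves to a nearest candidate, which preserves incidences by the quarter padding.
\end{itemize}

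The step I expect to be the main obstacle is verifying that the Kovaleva--Spieksma gap family really is a \(2\)-interval system with nonempty consecutive column traces, with a rational LP solution whose ratio approaches \(e/(e-1)\). If their family places segments on rows with no admissible column inside, I would first try adding a dummy private column inside each such segment. This changes neither optimum, because the row already covers the segment at equal cost, and an optimal cover can swap the column for the row. Failing that, I would treat such segments as unary constraints in the CSP, which KMTV permits with \(k_e=1\), at the price of redoing the coupling step of Claim~\ref{lem:connected-certificate} for arity \(1\).
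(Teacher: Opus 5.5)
Your proposal is correct and follows the paper's proof essentially step for step. Both use the Kovaleva--Spieksma seed, then the ordered-block transfer with small rational $\rho$, then KMTV followed by Cartesian cloning. The realization is also the same: private column zones with quarter-padded segments, and snapping for the unrestricted model. Your nearest-candidate snap is valid, just as the paper's snap to $c_{v,B}$ with $B=\max_f b_f$ is.

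The obstacle you flag does not arise. Each segment $S_{j,i}$ in Appendix~\ref{segment-example} meets exactly $T/j\ge(m-1)!$ consecutive columns, and the LP solution there is rational, so neither fallback is needed.

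One small repair is needed. Your realization gives each row-type output variable a single admissible row, while its labels range over $2^{[N]}$, so the claimed bijection with $|H_\sigma|=|\cV_{\mathcal I}|\operatorname{cost}(\sigma)$ fails as stated. Either create $N$ rows per row-type variable, as the paper does, or note that padded positions are never useful; either way $\operatorname{OPT}(\mathcal S_{\mathcal I})=|\cV_{\mathcal I}|\operatorname{opt}(\mathcal I)$ and ratio preservation still hold.
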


The next lemma realizes the unweighted KMTV instances generated from any
fixed ISP gap seed through Lemma~\ref{thm:ordered-transfer} and the
cloning step.

\begin{lemma}[Geometric realization of the KMTV instances for interval stabbing]
\label{lem:isp-geometric-realization}
Let
\[
\mathcal I
=
\bigl(
\cV_{\mathcal I},
\cE_{\mathcal I},
\{\mathcal A_f\}_{f\in \cE_{\mathcal I}},
\Sigma,
\{w_u\}_{u\in \cV_{\mathcal I}},
c
\bigr)
\]
be an unweighted instance obtained by applying
Lemma~\ref{lem:cartesian-cloning} to an instance produced by the KMTV
reduction from the strict-CSP seed \(\mathcal J\) obtained from the
interval-stabbing gap instance in
Appendix~\ref{segment-example}. Then one can construct, in polynomial time in
the size of \(\mathcal I\), a unit-cost finite-candidate
interval-stabbing instance
\(\mathcal S_{\mathcal I}\) such that assignments of \(\mathcal I\) are
in bijective, feasibility-preserving correspondence with subsets of the
admissible rows and columns of \(\mathcal S_{\mathcal I}\).
Moreover, for every assignment
\(\sigma:\cV_{\mathcal I}\to\Sigma\), the corresponding stabbing set
\(S_\sigma\) satisfies
\(
|S_\sigma|
=
|\cV_{\mathcal I}|\operatorname{cost}_{\mathcal I}(\sigma).
\)
Consequently,
\(
\operatorname{OPT}(\mathcal S_{\mathcal I})
=
|\cV_{\mathcal I}|\operatorname{opt}(\mathcal I),
\)
and approximation ratios are preserved. In addition, the length of every
segment in \(\mathcal S_{\mathcal I}\) is bounded by a constant depending
only on the fixed seed instance.
The same optimum identity holds when arbitrary horizontal and vertical
lines are allowed.
\end{lemma}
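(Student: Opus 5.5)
The plan is to mirror the proofs of Lemmas~\ref{lem:geometric-realization} and~\ref{lem:cube-geometric-realization}, using two kinds of passive types. In the seed \(\cJ\) built from the Kovaleva--Spieksma instance by Lemma~\ref{thm:ordered-transfer}, I would encode each admissible row as a \emph{row block} of size \(n_v=1\), and each ordered group of columns as a \emph{column block}. Every segment then becomes an arity-\(2\) occurrence \(e=(v_{e,1},v_{e,2})\) with
\[
T_{e,1}=\{1\},\qquad T_{e,2}=\{a_e,\ldots,b_e\}\ \text{consecutive}.
\]
I would tag each block with the type \(\mathrm{row}\) or \(\mathrm{col}\). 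By Lemma~\ref{rem:passive-metadata}, and because cloning preserves metadata (Lemma~\ref{lem:cartesian-cloning}), every output variable \(u\) inherits \(\tau(u)\in\{\mathrm{row},\mathrm{col}\}\). Every output occurrence \(f=(u_1,u_2)\) inherits its seed occurrence \(e\), the coordinate order, and the traces \(T_{e,1}=\{1\}\) and \(T_{e,2}\). Moreover \(\tau(u_1)=\mathrm{row}\) and \(\tau(u_2)=\mathrm{col}\).

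The realization is as follows. For each column-type variable \(u\), I place \(N\) admissible columns at \(x\)-coordinates \(p_{u,j}=(N+1)r_u+j\), where \(r_u\) is an enumeration index. This gives pairwise disjoint private \(x\)-zones \(Z_u=[p_{u,1}-\tfrac14,p_{u,N}+\tfrac14]\). For each row-type variable \(u\), I place \(N\) admissible rows at pairwise distinct \(y\)-coordinates \(y_{u,j}\), fresh for every \((u,j)\). Positions \(j\ge2\) are padding: they lie on no segment, but they are still created so that labels \(A\in2^{[N]}\) correspond bijectively to candidate subsets. For an occurrence \(f=(u_1,u_2)\), I create the horizontal segment
\[
s_f=[p_{u_2,a_e}-\tfrac14,\;p_{u_2,b_e}+\tfrac14]\times\{y_{u_1,1}\}.
\]
Its supporting row is exactly the candidate \(y_{u_1,1}\), and since the \(y\)-values are distinct, no other row meets it. The columns meeting \(s_f\) are exactly \(\{p_{u_2,h}:h\in T_{e,2}\}\), because zones are disjoint and the spacing of \(1\) exceeds the quarter padding. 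Hence \(S_\sigma\) stabs \(s_f\) if and only if \((\sigma(u_1),\sigma(u_2))\in\cA_f\).

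The bijection \(\sigma\leftrightarrow S_\sigma\), the feasibility equivalence, and the identity \(|S_\sigma|=\sum_u|\sigma(u)|=|\cV_{\mathcal I}|\operatorname{cost}_{\mathcal I}(\sigma)\) then follow verbatim as in Lemma~\ref{lem:geometric-realization}, giving the optimum identity. Each segment has length \(b_e-a_e+\tfrac12\le N-\tfrac12\), a constant depending only on the seed. For the unrestricted model, I first discard useless lines. A useful horizontal line must have \(y=y_{u,1}\) for some row variable \(u\), so it is already a candidate. A useful vertical line lies in some zone \(Z_u\), and I snap it to a nearest \(p_{u,h}\); the quarter padding ensures every segment containing the old coordinate also contains the new one. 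Merging duplicates does not increase the cardinality, so the integral optima coincide.

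The step I expect to be the main obstacle is making sure the seed encoding really produces singleton row traces. Concretely, I must check that the Appendix~\ref{segment-example} instance is presented with one block per row, or at least that each segment's row-side trace is a single row within its block. If rows were grouped into larger blocks with non-singleton traces, a single horizontal segment could not realize the "or" over several rows. So I would fix the row-per-block encoding before invoking Lemma~\ref{thm:ordered-transfer}, which is harmless since (A1) and (A2) still hold and \(N\) remains the column-block size.
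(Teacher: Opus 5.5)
Your proposal is correct and follows the paper's proof essentially step for step: typed row/column variables inherited via Lemma~\ref{rem:passive-metadata}, admissible rows at pairwise distinct \(y\)-coordinates, unit-spaced columns in disjoint private \(x\)-zones, and one quarter-padded segment per occurrence on the row named by the singleton horizontal trace, followed by the same bijection, cost identity, and length bound \(N-\frac12\). The differences are cosmetic. You fix one row per block, whereas the paper only needs the horizontal trace to be a singleton. You also snap a free vertical line to a nearest column in its zone, whereas the paper snaps to the column \(c_{v,B}\) with \(B=\max_f b_f\); both rules work because the quarter padding is less than half the unit spacing.
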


\begin{proof}
Recall that
\[
\Sigma=2^{[N]},
\qquad
c(A)=|A|,
\]
where \(N\) is the maximum size of a candidate block in the fixed
interval-stabbing seed.  By Lemma~\ref{lem:cartesian-cloning},
\[
w_u=\frac{1}{|\cV_{\mathcal I}|}
\qquad(u\in\cV_{\mathcal I}).
\]

The seed has two kinds of blocks. A horizontal block represents admissible
horizontal rows, while a vertical block represents an ordered set of
admissible vertical columns. Assign these blocks passive types
\[
\mathsf H
\qquad\text{and}\qquad
\mathsf V,
\]
respectively.  By Lemma~\ref{rem:passive-metadata}, every variable
\(u\in \cV_{\mathcal I}\) inherits the type of its seed variable.

Every produced constraint occurrence is associated with an ordered seed
occurrence representing a seed segment. After fixing the coordinate order,
we may write such an occurrence as
\[
f=(u_f^{\mathsf H},u_f^{\mathsf V}).
\]
It inherits ordered trace data
\[
\bigl(T_f^{\mathsf H},T_f^{\mathsf V}\bigr),
\]
where \(T_f^{\mathsf H}\subseteq[N]\) is a singleton and
\(T_f^{\mathsf V}\subseteq[N]\) is a nonempty consecutive set. Its allowed
relation is
\[
\mathcal A_f
=
\left\{
(A_{\mathsf H},A_{\mathsf V})\in\Sigma^2:
A_{\mathsf H}\cap T_f^{\mathsf H}\neq\varnothing
\ \lor\
A_{\mathsf V}\cap T_f^{\mathsf V}\neq\varnothing
\right\}.
\]

We now construct the interval-stabbing instance.

For every horizontal-type variable \(u\), create \(N\) admissible
horizontal rows
\[
H_u=\{h_{u,1},\ldots,h_{u,N}\},
\]
placing all these rows at distinct \(y\)-coordinates. For example, after
enumerating the horizontal-type variables as
\(u^{(1)},\ldots,u^{(q)}\), set
\[
h_{u^{(r)},a}
:=
\{(x,y)\in\mathbb R^2:y=(N+1)r+a\},
\qquad
r\in[q],\ a\in[N].
\]
Thus no two admissible horizontal rows coincide.

For every vertical-type variable \(v\), create a private horizontal
\(x\)-zone \(Z_v\), with the zones pairwise disjoint and separated by
positive gaps. Inside \(Z_v\), place \(N\) admissible vertical columns
\[
C_v=\{c_{v,1},\ldots,c_{v,N}\}
\]
in increasing order and with unit spacing. For instance, after enumerating
the vertical-type variables as \(v^{(1)},\ldots,v^{(s)}\), one may put
\[
c_{v^{(r)},b}
:=
\{(x,y)\in\mathbb R^2:x=(N+2)r+b\},
\qquad
r\in[s],\ b\in[N].
\]
The corresponding private zone may be taken to contain the interval
\[
[(N+2)r+3/4,\,(N+2)r+N+1/4].
\]
These zones are pairwise disjoint.

A label \(A\in2^{[N]}\) assigned to a horizontal variable \(u\) is
interpreted as selecting the rows
\[
H_u(A):=\{h_{u,a}:a\in A\},
\]
and a label assigned to a vertical variable \(v\) is interpreted as
selecting the columns
\[
C_v(A):=\{c_{v,b}:b\in A\}.
\]

Fix a constraint occurrence
\[
f=(u_f^{\mathsf H},u_f^{\mathsf V}).
\]
Write
\[
T_f^{\mathsf H}=\{a_f\}
\]
and, since the vertical trace is consecutive,
\[
T_f^{\mathsf V}
=
\{b_f,b_f+1,\ldots,d_f\}.
\]
Create a horizontal segment \(S_f\) supported by the admissible row
\(h_{u_f^{\mathsf H},a_f}\), with \(x\)-projection
\[
\left[
x(c_{u_f^{\mathsf V},b_f})-\frac14,\,
x(c_{u_f^{\mathsf V},d_f})+\frac14
\right].
\]
Equivalently,
\[
S_f
=
\left[
x(c_{u_f^{\mathsf V},b_f})-\frac14,\,
x(c_{u_f^{\mathsf V},d_f})+\frac14
\right]
\times
\{y(h_{u_f^{\mathsf H},a_f})\}.
\]

Because the columns within a private zone are one unit apart,
\[
\{c\in C_{u_f^{\mathsf V}}:c\cap S_f\ne\varnothing\}
=
\{c_{u_f^{\mathsf V},b}:b\in T_f^{\mathsf V}\}.
\]
Moreover, the \(x\)-projection of \(S_f\) is contained in the private zone
of \(u_f^{\mathsf V}\). Hence \(S_f\) meets no column associated with any
other vertical variable. Since all admissible horizontal rows have
distinct \(y\)-coordinates, the unique admissible horizontal row
supporting \(S_f\) is
\[
h_{u_f^{\mathsf H},a_f}.
\]
Thus the admissible rows and columns stabbing \(S_f\) are exactly
\[
\{h_{u_f^{\mathsf H},a_f}\}
\cup
\{c_{u_f^{\mathsf V},b}:b\in T_f^{\mathsf V}\}.
\]

Let \(\mathcal S_{\mathcal I}\) consist of all the admissible rows and
columns constructed above and of one segment \(S_f\) for every
\(f\in \cE_{\mathcal I}\).

Given an assignment
\(\sigma:\cV_{\mathcal I}\to2^{[N]}\), define
\[
S_\sigma
:=
\bigcup_{\substack{u\in \cV_{\mathcal I}\\ \tau(u)=\mathsf H}}
H_u(\sigma(u))
\;\cup\;
\bigcup_{\substack{v\in \cV_{\mathcal I}\\ \tau(v)=\mathsf V}}
C_v(\sigma(v)).
\]
For a constraint
\(f=(u_f^{\mathsf H},u_f^{\mathsf V})\), we have
\[
\begin{aligned}
S_\sigma\text{ stabs }S_f
&\Longleftrightarrow
h_{u_f^{\mathsf H},a_f}\in S_\sigma
\ \lor\
\exists b\in T_f^{\mathsf V}:
c_{u_f^{\mathsf V},b}\in S_\sigma
\\
&\Longleftrightarrow
\sigma(u_f^{\mathsf H})\cap T_f^{\mathsf H}
\neq\varnothing
\ \lor\
\sigma(u_f^{\mathsf V})\cap T_f^{\mathsf V}
\neq\varnothing
\\
&\Longleftrightarrow
\bigl(
\sigma(u_f^{\mathsf H}),
\sigma(u_f^{\mathsf V})
\bigr)
\in\mathcal A_f.
\end{aligned}
\]
Therefore, \(\sigma\) satisfies every constraint of \(\mathcal I\) if and
only if \(S_\sigma\) stabs every segment of
\(\mathcal S_{\mathcal I}\).

Conversely, given any subset \(S\) of the admissible rows and columns,
define
\[
\sigma_S(u)
=
\begin{cases}
\{a\in[N]:h_{u,a}\in S\},
   & \tau(u)=\mathsf H,\\[1mm]
\{b\in[N]:c_{u,b}\in S\},
   & \tau(u)=\mathsf V.
\end{cases}
\]
Since all candidate rows and columns are associated with unique CSP
variables and local positions, the two maps are mutual inverses:
\[
\sigma_{S_\sigma}=\sigma,
\qquad
S_{\sigma_S}=S.
\]
The preceding incidence equivalence shows that this bijection preserves
feasibility.

It remains to compare costs. The candidate sets belonging to distinct
variable-position pairs are distinct, and therefore
\[
|S_\sigma|
=
\sum_{u\in \cV_{\mathcal I}}|\sigma(u)|
=
\sum_{u\in \cV_{\mathcal I}}c(\sigma(u)).
\]
Since all variable weights are uniform,
\[
\operatorname{cost}_{\mathcal I}(\sigma)
=
\frac{1}{|\cV_{\mathcal I}|}
\sum_{u\in \cV_{\mathcal I}}c(\sigma(u))
=
\frac{|S_\sigma|}{|\cV_{\mathcal I}|}.
\]
Hence
\[
|S_\sigma|
=
|\cV_{\mathcal I}|
\operatorname{cost}_{\mathcal I}(\sigma).
\]
Taking minima over feasible assignments, equivalently over feasible
stabbing sets, gives
\[
\operatorname{OPT}(\mathcal S_{\mathcal I})
=
|\cV_{\mathcal I}|\operatorname{opt}(\mathcal I).
\]

Consider now a cover by arbitrary horizontal and vertical lines.  A useful
horizontal line is the supporting row of every segment it stabs and is
therefore already an admissible row.  Consider a useful vertical line
\(x=\xi\).  Because the private zones are disjoint, all segments stabbed by
this line belong to one vertical zone.  Write their projections as
\[
\left[
x(c_{v,b_f})-\frac14,\,
x(c_{v,d_f})+\frac14
\right],
\]
and put \(B=\max_f b_f\) and \(D=\min_f d_f\).  Their common point
\(\xi\) implies \(B-\frac14\le D+\frac14\).  Since \(B,D\) are integers,
\(B\le D\), and the admissible column \(c_{v,B}\) stabs every segment
stabbed by \(x=\xi\).  Snapping each useful vertical line in this way,
deleting useless lines, and merging duplicates never increases the cover
size.  Thus the finite-candidate and unrestricted optima coincide.

Finally, the construction creates \(N\) admissible candidates per CSP
variable and one segment per constraint occurrence, and is therefore
polynomial because \(N\) is fixed. Every segment has length
\[
|T_f^{\mathsf V}|-\frac12
\le N-\frac12.
\]
Since \(N\) depends only on the fixed seed instance, this bound is
independent of the size of \(\mathcal I\).
\end{proof}

\begin{proof}[Proof of Theorem~\ref{thm:isp}]
By monotonicity it suffices to consider \(0<\varepsilon<1\); the case
\(\varepsilon\ge1\) follows from \(\varepsilon'=1/2\).  Fix such an
\(\varepsilon\), set \(\beta:=\varepsilon/4\), and choose the finite ISP
seed \(\cJ_0\) from Appendix~\ref{segment-example}.  By
\cite[Theorem~3.6]{KovalevaSpieksma2006} and
Lemma~\ref{lem:isp-gap}, its integral optimum \(K\) and rational feasible
LP value \(L\) can be chosen so that
\[
\frac KL>\frac{e}{e-1}-\beta
=\frac{e}{e-1}-\frac{\varepsilon}{4}.
\]
Use the construction in Lemma~\ref{thm:ordered-transfer} to form the
constant-size strict-CSP seed \(\cJ\).
Lemma~\ref{thm:ordered-transfer} gives
\[
\frac{\opt(\cJ)}{\val(\cJ,\mu^\rho)}
\longrightarrow\frac KL
\qquad(\rho\downarrow0).
\]
Choose a sufficiently small rational \(\rho>0\) such that
\[
\frac{\opt(\cJ)}{\val(\cJ,\mu^\rho)}
>\frac{e}{e-1}-2\beta.
\]
Next choose \(\delta,\eta>0\) sufficiently small that
\[
\frac{\opt(\cJ)-\delta-\eta}
     {\val(\cJ,\mu^\rho)+\delta+\eta}
>\frac{e}{e-1}-\varepsilon.
\]
Theorem~\ref{KMTV}, followed by
Lemma~\ref{lem:cartesian-cloning}, gives a UGC-hard family of unweighted
instances with this gap.
Lemma~\ref{lem:isp-geometric-realization} preserves the approximation ratio
in the finite-candidate model, and its snapping argument gives the same
hardness in the unrestricted model.  Every segment has length at most
\(N-\frac12\), a constant depending only on the fixed seed and hence only
on \(\varepsilon\).
\end{proof}

\paragraph{Use of generative AI.}
The authors used ChatGPT iteratively in developing the results presented in this
paper.  We initially used ChatGPT to extend an integrality-gap construction for
rectangle stabbing to arbitrary-size square stabbing.  After obtaining a
family of square-stabbing instances whose integrality gap tends to \(2\),
we have used ChatGPT to apply the KMTV framework to derive a UGC-based hardness
bound from this family.  We substantially simplified
and revised the resulting proof and subsequently adapted the
underlying ideas to other geometric stabbing problems.  The authors have
rewrote and verified the proofs and take full responsibility for their
correctness.

\clearpage
\bibliographystyle{plainurl}
\bibliography{square_stabbing_socg}

\clearpage
\appendix

\section{Orientation rounding}\label{sec:upper}

Fix a dimension \(d\ge2\).  Consider first a candidate-hyperplane instance,
where \(\mathcal L_i\) contains the admissible hyperplanes orthogonal to
coordinate axis \(i\).  Its natural covering LP has a variable
\(x_\ell\ge0\) for each candidate and the constraints
\begin{equation}\label{eq:lp}
 \sum_{\ell:\,\ell\cap Q\ne\varnothing}x_\ell\ge1
 \qquad(Q\in\mathcal Q),
 \qquad
 \min\sum_{\ell}x_\ell.
\end{equation}
The standard orientation-rounding argument extends to every fixed
dimension~\cite{GIK2002}.

\begin{proposition}\label{prop:upper}
For every fixed \(d\ge2\), the candidate-hyperplane and free-hyperplane
variants for axis-parallel \(d\)-cubes admit a polynomial-time
factor-\(d\) approximation.
\end{proposition}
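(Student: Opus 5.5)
We use orientation rounding, applied first in the candidate model and then transferred to the free model by snapping.

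\emph{Candidate model.} Solve the natural covering LP~\eqref{eq:lp} optimally; call the solution \(x^*\) and its value \(\lp\). For each cube \(Q=I_1\times\cdots\times I_d\), write
\[
m_i(Q):=\sum_{\ell\in\mathcal L_i,\ \ell\cap Q\neq\varnothing}x^*_\ell .
\]
A hyperplane in \(\mathcal L_i\) meets \(Q\) exactly when its coordinate lies in \(I_i\), so the LP constraint of \(Q\) is \(\sum_i m_i(Q)\ge1\). Hence some index \(i(Q)\) has \(m_{i(Q)}(Q)\ge 1/d\). Assign \(Q\) to that axis.

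For each axis \(i\), let \(\mathcal Q_i\) be the set of cubes assigned to \(i\). These cubes define a one-dimensional interval hitting-set problem: the points are the coordinates of \(\mathcal L_i\), and the intervals are the projections \(I_i(Q)\) for \(Q\in\mathcal Q_i\). Its LP is
\[
\min \sum_{\ell\in\mathcal L_i} y_\ell
\quad\text{s.t.}\quad
\sum_{\ell\in I_i(Q)} y_\ell\ge 1\ \ (Q\in\mathcal Q_i),\qquad y\ge0 .
\]
The vector \(y=d\,x^*|_{\mathcal L_i}\) is feasible for it. The constraint matrix has the consecutive-ones property in the sorted order of the candidates, so it is an interval matrix and therefore totally unimodular. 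The one-dimensional LP thus has an integral optimum. That optimum is a set \(S_i\subseteq\mathcal L_i\) with
\[
|S_i|\le d\sum_{\ell\in\mathcal L_i}x^*_\ell .
\]
Alternatively, the greedy right-endpoint algorithm finds such an \(S_i\) directly, since it is optimal and matches the LP bound.

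The set \(S=\bigcup_i S_i\) stabs every cube, because \(Q\) is stabbed within \(S_{i(Q)}\). Summing over the axes,
\[
|S|\le d\sum_\ell x^*_\ell=d\cdot\lp\le d\cdot\OPT .
\]
All steps run in polynomial time because the LP has polynomial size.

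\emph{Free-hyperplane model.} Reduce to the candidate model with a finite canonical candidate set. On axis \(i\), take the endpoints of all projections \(I_i(Q)\). Any hyperplane orthogonal to axis \(i\) can be moved to the largest left endpoint not exceeding its coordinate. This never loses an incidence: if the coordinate \(\xi\) lies in \([a,b]\), then the largest left endpoint \(a'\le\xi\) satisfies \(a\le a'\le\xi\le b\), so \(a'\in[a,b]\). Consequently the integral optima of the free and canonical models coincide. Running the candidate algorithm on this canonical set gives a factor-\(d\) approximation for the free model.

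The main point needing care is the total-unimodularity step. We must check that, after restricting to one axis and to \(\mathcal Q_i\), each constraint really is a consecutive set of candidates in the sorted order, so that the interval-matrix argument applies. Everything else is routine.
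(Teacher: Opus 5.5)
Your proposal is correct and follows essentially the same route as the paper. The candidate-model argument matches step for step: orientation rounding, scaling the per-axis restriction by \(d\), and total unimodularity of the consecutive-ones interval matrix. The free model is likewise handled by snapping to left endpoints; your rule ``largest left endpoint not exceeding \(\xi\)'' is a harmless variant of the paper's ``largest left endpoint among projections containing \(\xi\)'', and a hyperplane with no left endpoint below it stabs nothing and is simply discarded. The paper's proof also aggregates fractional weight by incidence classes to show that the free and candidate LP optima coincide. That extra step serves only the LP-gap claim made elsewhere, not the approximation statement you were asked to prove.
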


\begin{proof}
Let \(x\) solve~\eqref{eq:lp}.  For every cube \(Q\), one of its \(d\)
coordinate projections receives LP mass at least \(1/d\); assign \(Q\) to
one such coordinate.  In coordinate \(i\), the vector obtained by
multiplying \(x|_{\mathcal L_i}\) by \(d\) is a feasible fractional
transversal of the assigned one-dimensional projections.  Their incidence
matrix has the consecutive-ones property and is totally
unimodular~\cite[Chapter~19]{Schrijver1986}.  Hence they have an integral
transversal of size at most
\(d\sum_{\ell\in\mathcal L_i}x_\ell\).  The union over all \(d\)
coordinates stabs every cube and costs at most
\(d\OPT_{\rm LP}\le d\OPT\).

For the free-hyperplane model, use all left endpoints of all coordinate
projections as candidates.  A useful coordinate can be moved to the largest
left endpoint among the projections that contain it; the moved coordinate
remains in every one of them.  Applying this operation independently in
every coordinate and merging duplicates gives a polynomial candidate set
containing an optimum, to which the first part applies.

For the unrestricted fractional LP, it is enough to retain one
representative of each incidence class.  Within a fixed orientation, call
two hyperplanes equivalent if they meet the same subfamily of
\(\mathcal Q\); only finitely many such classes occur.  Aggregate all
weight in each nonempty class at one representative, apply the
endpoint-snapping map above to those representatives, and aggregate
duplicate images.  Neither operation increases total weight or decreases
the covering mass of any cube.  Hence the free and endpoint-candidate LP
optima coincide.
\end{proof}

\section{Details of the integrality-gap example for separated
\texorpdfstring{$d$}{d}-intervals}\label{sep-d-interval-example}

We recall here the construction from~\cite{BenDavidEtAl2012}. Fix \(d\ge2\) and \(0<\beta<1\).  Choose integers
\begin{equation}\label{eq:dtrack-parameters}
              t\ge\frac{2d^2}{\beta},
              \qquad n\ge\frac{2t}{\beta},
\end{equation}
and fix \(\gamma=1/10\).  On track \(i\), take the candidates
\begin{equation}\label{eq:dtrack-candidates}
              P_i:=\big\{r+\tfrac12:r\in\{0,1,\ldots,n-1\}\big\}.
\end{equation}
For every positive composition \(\ell_1+\cdots+\ell_d=t\) and every choice
of integers \(0\le s_i\le n-\ell_i\), include the separated \(d\)-interval
whose component on track \(i\) is
\begin{equation}\label{eq:dtrack-object}
                    [s_i+\gamma,\ s_i+\ell_i-\gamma].
\end{equation}
It contains exactly \(\ell_i\) consecutive candidates on track \(i\), hence
exactly \(t\) candidates in total.  Thus the uniform value \(1/t\) is feasible
for the LP relaxation~\raf{eq:ordered-lp} of value $L=\frac{dn}{t}$.

\begin{figure}[ht]
\centering
\begin{tikzpicture}[x=.72cm,y=.85cm,font=\small,>=Latex]
  \foreach \y/\j in {1/3,3/2,5/1}{
    \draw[gray,thick] (.5,\y)--(13.8,\y);
    \node[anchor=east] at (.25,\y) {track $\j$};
    \foreach \x in {1,2,...,13}{\fill[black!55] (\x,\y) circle (1.5pt);}
  }
  \foreach \x in {3,4,5,6}{\fill[white] (\x,5) circle (3pt); \draw[vblue] (\x,5) circle (3pt);}
  \foreach \x in {6,7,8}{\fill[white] (\x,3) circle (3pt); \draw[vblue] (\x,3) circle (3pt);}
  \foreach \x in {9,10,11,12,13}{\fill[white] (\x,1) circle (3pt); \draw[vblue] (\x,1) circle (3pt);}
  \draw[vblue,very thick] (2.7,5)--(6.3,5);
  \draw[vblue,very thick] (5.7,3)--(8.3,3);
  \draw[vblue,very thick] (8.7,1)--(13.3,1);
  \node[vblue,above] at (4.5,5.18) {empty run $r_1$};
  \node[vblue,above] at (7,3.18) {empty run $r_2$};
  \node[vblue,above] at (11,1.18) {empty run $r_3$};
  \draw[horange,very thick] (3.25,5.42)--(5.75,5.42);
  \draw[horange,very thick] (6.25,3.42)--(7.75,3.42);
  \draw[horange,very thick] (9.25,1.42)--(12.75,1.42);
  \node[horange,anchor=west,align=left] at (14.15,3)
    {choose $\ell_i\le r_i$\\with $\sum_i\ell_i=t$};
  \draw[->,horange,thick] (14.0,3.05)--(12.4,1.55);
  \node[align=center] at (7.1,6.35)
    {If the empty runs total at least $t$, these components form a missed
     separated $d$-interval};
\end{tikzpicture}
\caption{The empty-run obstruction behind the separated-track gap, shown for
$d=3$.  Hollow candidates are unselected.  Sufficient total empty-run length
allows positive sublengths summing to $t$, producing an object with no chosen
point on any track.}
\label{fig:dtrack-runs}
\end{figure}

\begin{lemma}[Separated-track gap]\label{lem:dtrack-gap}
The integral optimum value \(K\) of the instance $\cJ_0$ defined above satisfies
\(
                              \frac KL>d-\beta.
\)
\end{lemma}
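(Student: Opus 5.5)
The plan is to mirror the empty-run argument of Lemma~\ref{lem:cube-gap}, which is simpler here because there is only one scale per track and traces of every size up to $t$ are allowed. Fix an optimal integral transversal and let $k_i$ be the number of selected candidates on track $i$. Let $A_i$ be the length of the longest run of consecutive unselected candidates on track $i$, and put $C_i:=\min\{A_i,\,t-1\}$.

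\emph{Step 1: the obstruction.} Suppose every $C_i\ge1$ and $\sum_i C_i\ge t$. Then we can choose positive integers $\ell_i\le C_i$ with $\sum_i\ell_i=t$. On track $i$, place the component $[s_i+\gamma,\,s_i+\ell_i-\gamma]$ inside the empty run, so that it contains exactly $\ell_i$ unselected candidates. This separated $d$-interval belongs to $\cJ_0$ and is missed, which is a contradiction; see Figure~\ref{fig:dtrack-runs}.

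We must also handle a track with $C_i=0$, meaning every candidate on it is selected. That track already costs $n$, which is more than $L=dn/t$ by far. So either some track is full and $K\ge n>(d-\beta)L$ (since $t>d$), or all $C_i\ge1$ and $\sum_i C_i\le t-1$.

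\emph{Step 2: pricing runs.} On track $i$, the $n-k_i$ unselected candidates split into at most $k_i+1$ runs, each of length at most $A_i$. Hence
\[
n-k_i\le (k_i+1)A_i,
\qquad\text{so}\qquad
k_i\ge \frac{n}{A_i+1}-1 .
\]
When $A_i\le t-1$ we have $A_i=C_i$. When $A_i\ge t$, no other track can have $C_j\ge1$ summing with it to $t$. But every $C_j\ge1$, so this would force $\sum_j C_j\ge t-1+(d-1)\ge t$, contradicting Step 1. Thus $A_i=C_i$ in all cases.

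Summing and applying convexity of $x\mapsto 1/(x+1)$ (Jensen) under $\sum_i C_i\le t-1$ gives
\[
K=\sum_i k_i\ \ge\ \frac{d^2 n}{t-1+d}-d.
\]

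\emph{Step 3: the ratio.} Dividing by $L=dn/t$,
\[
\frac KL\ \ge\ \frac{dt}{t+d-1}-\frac{t}{n}\ \ge\ d-\frac{d(d-1)}{t}-\frac tn .
\]
By \eqref{eq:dtrack-parameters}, $d(d-1)/t<\beta/2$ and $t/n\le\beta/2$. Strictness comes from the first inequality, since $d(d-1)<d^2\le\beta t/2$, which yields $K/L>d-\beta$.

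The main obstacle is Step 1's case analysis. We have to ensure that each sublength $\ell_i$ is positive, because the family uses only positive compositions. This is why full tracks and overlong runs must be excluded separately, and the cap $C_i\le t-1$ together with the forced $C_j\ge1$ handles exactly that.
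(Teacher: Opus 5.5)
Your proof is correct and follows essentially the same route as the paper. The empty-run obstruction forces the longest empty runs to total less than $t$, counting runs gives $k_i\ge n/(A_i+1)-1$, and HM--AM (Jensen) finishes, with the all-selected case handled separately; the paper instead splits on $|Q|<n$ versus $|Q|\ge n$.

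Two small remarks. First, the cap $C_i=\min\{A_i,t-1\}$ and the "overlong run" case are unnecessary here: whenever all $A_i\ge1$ and $\sum_iA_i\ge t$, you can choose positive $\ell_i\le A_i$ with $\sum_i\ell_i=t$ directly. Second, in the full-track case the inequality $n>(d-\beta)L$ needs $t>d(d-\beta)$, which holds because $t\ge 2d^2/\beta$; the stated reason $t>d$ is not enough.
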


\begin{proof}
Let \(Q\) be a candidate transversal and put \(b_i=|Q\cap P_i|\).  Suppose
first that \(|Q|<n\).  Then \(b_i<n\) for every track.  Its
\(n-b_i\) unselected candidates form at most \(b_i+1\) consecutive runs, so
some empty run has a positive integer length \(r_i\) satisfying
\begin{equation}\label{eq:dtrack-run}
                         r_i\ge\frac{n-b_i}{b_i+1}.
\end{equation}
If \(\sum_i r_i\ge t\), start with \(\ell_i=1\) for all \(i\) and
distribute the remaining
\(t-d\) units among the capacities \(r_i-1\).  This produces positive
integers \(\ell_i\le r_i\) summing to \(t\), and aligned components inside
the empty runs give an unhit object \eqref{eq:dtrack-object}; see
Figure~\ref{fig:dtrack-runs}.  Consequently
\begin{equation}\label{eq:dtrack-run-sum}
               \sum_{i=1}^{d}\frac{n-b_i}{b_i+1}<t.
\end{equation}
Since \((n-b_i)/(b_i+1)=(n+1)/(b_i+1)-1\), we get 
by the harmonic--arithmetic mean inequality and~\raf{eq:dtrack-run-sum}
\(
 |Q|>\frac{d^2(n+1)}{t+d}-d.
\)
After division by $L=dn/t$,
\begin{equation}\label{eq:dtrack-ratio}
 \frac{|Q|}{L}
 >d-\frac{d^2}{t+d}-\frac tn
 >d-\beta,
\end{equation}
where the last inequality uses \eqref{eq:dtrack-parameters}.  If
\(|Q|\ge n\), then \(|Q|/L\ge t/d>d-\beta\), so the claim holds in all
cases.
\end{proof}

\section{Details of the integrality-gap example for interval stabbing}\label{segment-example}

Fix \(m\ge3\) and put \(T=m!\).  Take horizontal candidates
\(h_j:y=j\) for \(j\in[m]\) and vertical candidates \(v_c:x=c\) for
\(c\in[T]\).  For every \(j\in[m]\) and \(i\in[j]\), create the segment
\begin{equation}\label{eq:isp-segment}
 S_{j,i}=
 \left[(i-1)\frac{T}{j}+1,\ i\frac{T}{j}\right]\times\{j\}.
\end{equation}
Thus row \(j\) contains \(j\) pairwise disjoint segments, and each segment
meets exactly \(T/j\) consecutive vertical candidates.

\begin{figure}[ht]
\centering
\begin{tikzpicture}[x=.82cm,y=.72cm,font=\small,>=Latex]
  \node[anchor=west,font=\bfseries] at (.2,6.25)
    {The row-$j$ partition has $j$ disjoint segments};
  \draw[very thick] (.8,1.0)--(8.8,1.0);
  \draw[very thick] (.8,2.0)--(4.55,2.0) (5.05,2.0)--(8.8,2.0);
  \draw[vblue,very thick] (.8,3.0)--(3.2,3.0)
    (3.6,3.0)--(6.0,3.0) (6.4,3.0)--(8.8,3.0);
  \draw[very thick] (.8,4.0)--(2.45,4.0) (2.85,4.0)--(4.5,4.0)
    (4.9,4.0)--(6.55,4.0) (6.95,4.0)--(8.6,4.0);
  \draw[very thick] (.8,5.0)--(2.05,5.0) (2.45,5.0)--(3.7,5.0)
    (4.1,5.0)--(5.35,5.0) (5.75,5.0)--(7.0,5.0)
    (7.4,5.0)--(8.65,5.0);
  \foreach \y/\j in {1/1,2/2,3/3,4/4,5/5}{
    \node[anchor=east] at (.55,\y) {$h_{\j}$};
  }
  \draw[horange,densely dashed,very thick] (.35,4.0)--(9.2,4.0);
  \draw[horange,densely dashed,very thick] (.35,5.0)--(9.2,5.0);
  \node[horange,anchor=west,align=left] at (9.45,4.5)
    {rows above $k=3$\\must be selected};
  \foreach \x in {2.0,4.8,7.6}{
    \draw[bridgepurple,very thick] (\x,.55)--(\x,3.35);
    \fill[bridgepurple] (\x,3.0) circle (2.4pt);
  }
  \node[bridgepurple,align=center] at (5.05,.18)
    {$k=3$ disjoint segments force three distinct columns};
  \draw[decorate,decoration={brace,amplitude=4pt}] (9.0,2.75)--(9.0,3.25)
    node[midway,right=5pt,vblue] {largest unselected row};
\end{tikzpicture}
\caption{The integral lower bound for the ISP seed.  If row $k$ is the
largest unselected row, the $m-k$ higher rows are selected and the $k$
disjoint segments on row $k$ require $k$ distinct columns.  The total is at
least $m$.}
\label{fig:isp-seed}
\end{figure}

Let \(P=P(m)\) be the least integer in \(\{1,\ldots,m-1\}\) for which
\begin{equation}\label{eq:isp-P}
 \sum_{r=P+1}^{m}\frac1r\le1
 \le\sum_{r=P}^{m}\frac1r.
\end{equation}
Give every vertical candidate value \(y_c=P/T\), and give row \(j\) value
\begin{equation}\label{eq:isp-fractional}
 z_j=\begin{cases}
       0,&j\le P,\\
       1-P/j,&j>P.
     \end{cases}
\end{equation}
Each segment on row \(j\) receives vertical mass \(P/j\), so this is
feasible.  Its value is
\begin{equation}\label{eq:isp-Lm}
 L_m=P+\sum_{j=P+1}^{m}\left(1-\frac Pj\right)
    =m-P\sum_{r=P+1}^{m}\frac1r.
\end{equation}

\begin{lemma}[ISP gap]\label{lem:isp-gap}
The integral optimum of \eqref{eq:isp-segment} is \(m\), and
\[
                         \frac{m}{L_m}\longrightarrow\frac{e}{e-1}.
\]
\end{lemma}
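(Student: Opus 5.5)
Two parts are needed: the integral optimum equals \(m\), and the ratio \(m/L_m\) converges.

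For the upper bound \(K\le m\), I select all \(m\) rows; every segment lies on one row, so this is feasible.

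For the lower bound, I follow Figure~\ref{fig:isp-seed}. Let \(Q\) be any feasible set of rows and columns. If every row is selected, then \(|Q|\ge m\). Otherwise let \(k\) be the largest index with \(h_k\notin Q\). Rows \(k+1,\ldots,m\) are then all in \(Q\), which contributes \(m-k\). The \(k\) segments \(S_{k,1},\ldots,S_{k,k}\) on row \(k\) are not stabbed by their row, so each needs a column of \(Q\) meeting it. By~\eqref{eq:isp-segment} they occupy the pairwise disjoint column ranges \([(i-1)T/k+1,iT/k]\), so these columns are distinct. Hence \(Q\) has at least \(k\) columns, and \(|Q|\ge (m-k)+k=m\). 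Therefore \(K=m\).

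For the asymptotics, I use \(L_m=m-P\sum_{r=P+1}^m 1/r\) from~\eqref{eq:isp-Lm}. Write \(H(a,b)=\sum_{r=a}^b 1/r\), so that \(H(a,b)=\ln(b/a)+O(1/a)\). The defining condition~\eqref{eq:isp-P} says \(H(P+1,m)\le 1\le H(P,m)\). The two sides differ by \(1/P\), so \(H(P+1,m)=1-O(1/P)\). This forces \(\ln(m/P)\to1\), i.e.\ \(P/m\to1/e\), and in particular \(P\to\infty\). Then
\[
\frac{L_m}{m}=1-\frac Pm H(P+1,m)=1-\frac Pm\bigl(1-O(1/P)\bigr)\longrightarrow 1-\frac1e.
\]
Consequently \(m/L_m\to e/(e-1)\).

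The main obstacle is this last limit: it requires a careful two-sided control of \(P\) through the harmonic-sum estimates. The combinatorial bound \(K=m\) is immediate once the largest unselected row is considered.
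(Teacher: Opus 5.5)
Your proof is correct and follows the paper's argument: the same largest-unselected-row bound gives $K=m$, and the same comparison of the harmonic tail $\sum_{r=P+1}^m 1/r$ with $\ln(m/P)$ gives $P/m\to 1/e$ and $L_m/m\to 1-1/e$. One ordering detail: turning an $O(1/P)$ error into $\ln(m/P)\to1$ requires $P\to\infty$ first, yet you derive $P\to\infty$ only afterwards; it follows immediately from $1\ge\sum_{r=P+1}^m 1/r\ge\ln\bigl((m+1)/(P+1)\bigr)$, i.e.\ $P+1\ge(m+1)/e$, whereas the paper argues that a bounded $P$ would make the harmonic tail diverge.
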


\begin{proof}
Selecting all \(m\) horizontal rows is feasible.  If a solution does not
select every row, let \(k\) be the largest unselected one.  It selects all
\(m-k\) higher rows.  The \(k\) segments on row \(k\) are disjoint and
therefore require \(k\) distinct vertical columns, giving total cost at least
\((m-k)+k=m\).

Put \(H_{P,m}=\sum_{r=P+1}^{m}1/r\).  From \eqref{eq:isp-P},
\[
             1-\frac1P\le H_{P,m}\le1.
\]
As \(m\to\infty\), necessarily \(P\to\infty\), since otherwise the harmonic
tail would diverge.  Hence \(H_{P,m}\to1\).  Integral comparison gives the
particularly useful two-sided estimate
\begin{equation}\label{eq:isp-log-estimate}
 H_{P,m}\le\log\frac mP\le H_{P,m}+\frac1P.
\end{equation}
Indeed, the left inequality sums
\(1/r\le\int_{r-1}^{r}dx/x\), while the right one bounds
\(\int_P^m dx/x\) by \(1/P+\sum_{r=P+1}^{m-1}1/r\).
Thus \(\log(m/P)\to1\), so \(P/m\to e^{-1}\).  Dividing
\eqref{eq:isp-Lm} by \(m\) now gives
\[
       \frac{L_m}{m}=1-\frac PmH_{P,m}\longrightarrow1-\frac1e.
\]
\end{proof}

\end{document}